%% file: lipics-full.tex
\documentclass[a4paper,USenglish,cleveref,autoref,thm-restate]{lipics-v2021}

\hideLIPIcs  

\title{A Faster Algorithm for Fewer Vertex-Disjoint Paths Parameterized by Treewidth}

\author{DongYun Byun}
 {Division of Electrical, Information and Communication Engineering, Kanazawa University, Japan}
{bdy8272@gmail.com}{}
{}

\author{Akira Matsubayashi}
 {Division of Electrical, Information and Communication Engineering, Kanazawa University, Japan
 \and \url{http://carrera.ec.t.kanazawa-u.ac.jp/index_e.html}}
{mbayashi@t.kanazawa-u.ac.jp}{https://orcid.org/0000-0002-7861-4876}
{This work was supported by JSPS KAKENHI Grant Number 23K10984.}

\authorrunning{D.\ Byun and A.\ Matsubayashi}

\Copyright{DongYun Byun and Akira Matsubayashi}

\ccsdesc[500]{Theory of computation~Parameterized complexity and exact algorithms}
\ccsdesc[500]{Mathematics of computing~Paths and connectivity problems}

\keywords{disjoint paths, parameterized algorithm, treewidth, pathwidth,
 strongly exponential time hypothesis}

\nolinenumbers 

\usepackage{framed}

\newtheorem{property}{Property}

\newcommand{\yes}{\texttt{yes}}
\newcommand{\no}{\texttt{no}}
\newcommand{\alg}[1]{\texttt{#1}}
\newcommand{\kDDP}{$k$DDP}
\newcommand{\kUDP}{$k$UDP}
\newcommand{\KKHS}{\textsc{$K\times K$ Hitting Set}}
\newcommand{\KKPHS}{\textsc{$K\times K$ Permutation Hitting Set}}
\newcommand{\aKKPHS}{\textsc{$\alpha K\times K$ Permutation Hitting Set}}
\newcommand{\aKKHS}{\textsc{$\alpha K\times K$ Hitting Set}}
\newcommand{\KKClique}{\textsc{$K\times K$ Clique}}

\newcommand{\TSAT}{$3$-SAT}
\newcommand{\CSAT}{CNF-SAT}
\newcommand{\indeg}[1]{\ensuremath{\mathsf{deg}^-_{#1}}}
\newcommand{\outdeg}[1]{\ensuremath{\mathsf{deg}^+_{#1}}}
\newcommand{\tw}{\mathsf{tw}}
\newcommand{\pw}{\mathsf{pw}}
\newcommand{\rmc}{\ensuremath{\mathsf{rmc}}}
\newcommand{\opt}{\ensuremath{\mathsf{opt}}}
\newcommand{\union}{\ensuremath{\mathsf{union}}}
\newcommand{\ins}{\ensuremath{\mathsf{ins}}}
\newcommand{\glue}{\ensuremath{\mathsf{glue}}}
\newcommand{\gluew}{\ensuremath{\mathsf{glue}_\omega}}
\newcommand{\shift}{\ensuremath{\mathsf{shift}}}
\newcommand{\proj}{\ensuremath{\mathsf{proj}}}
\newcommand{\join}{\ensuremath{\mathsf{join}}}

\newcommand{\nblock}[1]{\ensuremath{\texttt{\#blocks}(#1)}}
\DeclareMathOperator{\cuparrow}{%
\stackrel{\text{\smash{\raisebox{-\height}{$\downarrow$}}}}{\cup}%
}
\DeclareMathOperator*{\bigcuparrow}{%
\stackrel{\text{\smash{\raisebox{-1.3\totalheight}{$\downarrow$}}}}{\bigcup}%
}

\begin{document}
\maketitle

\input{abstract}
\input{section1}
\input{section2}
\input{section3}
\input{section4}
\input{section5}

\bibliography{mainfull}
\end{document}

%% file: abstract.tex
\begin{abstract}
The $k$ vertex-disjoint paths problem asks whether,
 given a graph $G$ and $k$ pairs of vertices
 $(s_1,t_1)$, \ldots, $(s_k,t_k)$,
 $G$ has $k$ pairwise vertex-disjoint paths connecting
 $s_i$ and $t_i$ for all $1\leq i\leq k$.
If $G$ is undirected, then this problem is NP-complete, but
 there exist FPT algorithms parameterized by $k$.
Since these algorithms involve an extremely large function on $k$,
 algorithms for restricted graphs have also been investigated.
In particular,
 a $2^{2\tw\log\tw+\mathcal O(\tw)}\cdot n$ time algorithm
 for undirected graphs with $n$ vertices and treewidth $\tw$
 is proposed by Scheffler (Technical Report 396, TU Berlin, '94), and
 it is proved by
 Lokshtanov, Marx, and Saurabh (SIAM J.\ Comput.\ '18) that, under the ETH,
 there exists no $2^{o(\pw\log\pw)}\cdot n^{\mathcal O(1)}$ time algorithm
 for either directed or undirected graphs with pathwidth $\pw$ and
 for $k=\Omega(\pw^4)$.
It has not been known whether the lower bound also holds for a smaller $k$.
In this paper,
 we prove that, for both the directed and undirected cases,
 there is an algorithm faster than
 Lokshtanov et al.'s lower bound for $k=\tw^{o(1)}$ by proposing
 a $2^{\mathcal O((\tw+k)\log k)}\cdot n$ time algorithm.
Besides,
 we prove a lower bound that, under the SETH,
 there exists no
 $(2-\epsilon)^{\pw\log\pw}\cdot n^{\mathcal O(1)}$ time algorithm
 for directed graphs and for a general $k$.
This lower bound is tight because, with slight modifications,
 Scheffler's algorithm runs in
 $2^{\pw\log\pw+\mathcal O(\pw)}\cdot n$ time also for directed graphs.
\end{abstract}

%% file: section1.tex
\section{Introduction}
The \emph{$k$ (vertex-)disjoint paths problem} asks whether,
 given a graph $G$ and $k$ pairs of vertices
 $(s_1,t_1)$, \ldots, $(s_k,t_k)$, called \emph{demands},
 $G$ has $k$ pairwise vertex-disjoint paths connecting
 $s_i$ and $t_i$ for all $1\leq i\leq k$.
This is a fundamental problem in graph theory, particularly,
 graph homeomorphism and graph minor theory \cite{FHW80,RS95},
 and has applications such as VLSI layout and routing \cite{KLPS90}.
Moreover, this problem can be applied to the recognition of
 network structures admitting Braess's paradox \cite{Ro06,CGS19},
 which is a counterintuitive phenomenon that removal of some links decreases
 the Nash flow cost.

If $G$ is undirected, then this problem can be solved in polynomial time
 for a fixed $k$ \cite{RS95,KKR12}, but
 it is NP-complete for a general $k$ \cite{Karp75}
 and even when $G$ is restricted to a planar graph \cite{Ly75,MP93}.
If $G$ is directed, then this problem is even harder 
 and is NP-complete even for $k=2$, but can be solved in polynomial time
 for DAGs \cite{FHW80}.
Due to the difficulty,
 FPT and XP algorithms for the $k$ disjoint paths problem has extensively been
 investigated.

\subsection{Previous Results}
\subparagraph*{Undirected Case}
The first FPT algorithm parameterized by $k$ for undirected general graphs
 was proposed by Robertson and Seymour \cite{RS95}
 and runs in $f(k)\cdot n^3$ time,
 where $f$ is a certain function and $n$ is the number of vertices.
This running time was later improved by
 Kawarabayashi, Kobayashi, and Reed \cite{KKR12} to
 $g(k)\cdot n^2$ for another function $g$.
According to \cite{AKK+17,Chi23}, however,
 $f$ and $g$ involved in these time complexities are both
 extremely large functions such that $2^{2^{2^{2^{\Omega(k)}}}}$.
This enormous complexity
 provides motivation to design faster algorithms with respect to $k$
 for restricted graphs.
Adler, Kolliopoulos, Krause, Lokshtanov, Saurabh, and Thilikos \cite{AKK+17}
 proposed an algorithm for undirected planar graphs with
 the running time of
 $2^{2^{\mathcal{O}(k^2)}} \cdot n^{\mathcal{O}(1)}$,
 which was later improved by
 Lokshtanov, Misra, Pilipczuk, Saurabh, Zehavi
 \cite{LMPSZ25}
 to
 $2^{\mathcal{O}(k^2)} \cdot n^{\mathcal{O}(1)}$.
As further restriction of graphs, algorithms parameterized by treewidth $\tw$
 and pathwidth $\pw$ have also been considered.
In this direction,
 Scheffler \cite{Scheffler94}
 proposed a $2^{\mathcal{O}(\tw\log\tw)} \cdot n$ time algorithm
 that takes $G$ and its tree decomposition of width $\tw$ as input.
This algorithm achieves the running time independent of $k$,
 which is based on the fact
 that at most $\tw+1$ disjoint paths can pass through one bag
 in a tree decomposition.
By analyzing Scheffler's algorithm in detail,
 we can estimate its specific time complexity as
 $(3\tw + 3)^{2\tw + 2}\cdot\mathcal O(\tw\cdot n)
=2^{2\tw\log\tw+\mathcal O(\tw)}\cdot n$.\footnote{%
In this paper, the base of the logarithm is assumed to be $2$.
}
Lokshtanov, Marx, and Saurabh
 \cite{LMS18}
 proved that if the exponential time hypothesis (ETH) is true,
 then no $2^{o(\pw\log\pw)}\cdot n^{\mathcal{O}(1)}$ time algorithm exists.
This lower bound implies that
 Scheffler's algorithm achieves a tight time complexity under the ETH.
We note that the proof of the lower bound requires $k=\Omega(\pw^4)$.
It has not been known whether the lower bound holds for a smaller $k$.

\subparagraph*{Directed Case}
Since
 the $k$ disjoint paths problem is NP-complete
 even for $k=2$ \cite{FHW80},
 unless $\text P=\text{NP}$, no FPT algorithm parameterized by $k$ exists.
For directed planar graphs,
 an $n^{\mathcal{O}(k)}$ time XP algorithm was proposed by
 Schrijver \cite{Schrijver94}, and later
 Cygan et al.\ improved this algorithm to
 an FPT algorithm with the running time of
 $2^{2^{\mathcal{O}(k^2)}} \cdot n^{\mathcal{O}(1)}$ \cite{CMPP13}.
Lokshtanov et al.'s lower bound \cite{LMS18} was also proved
 for directed graphs and for $k=\Omega(\pw^4)$.
As a lower bound for a small $k$,
 we can show that, under the ETH,
 there exists no $2^{o(\pw)}\cdot n^{\mathcal O(1)}$ time algorithm for
 any $k\geq 2$.
This is because,
 in the proof of the NP-completeness for $k=2$ in \cite{FHW80},
 the graph reduced from a $3$-CNF formula input to \TSAT{}
 has the pathwidth $\pw=\mathcal O(m)$, where
 $m$ is the number of clauses.
By the Sparsification Lemma \cite{IPZ01},
 for any $\varepsilon>0$,
 any $3$-CNF formula $f$ with $n$ variables and $m$ clauses
 can be transformed into a disjunction of $2^{\varepsilon n}$ $3$-CNF formulas
 $f_1$, \ldots, $f_{2^{\varepsilon n}}$
 each of which has $n$ variables and $c(\varepsilon)\cdot n$ clauses
 in $2^{\varepsilon n}\cdot n^{\mathcal O(1)}$ time,
 where $c(\varepsilon)$ is a certain function of $\epsilon$.
If there is a $2^{o(\pw)}$ time algorithm for the $2$ disjoint paths problem,
 then through the reduction of \cite{FHW80},
 we can decide
 the satisfiability of $f$ by deciding
 the satisfiabilities of all $f_1$, \ldots, $f_{2^{\varepsilon n}}$ in
 $2^{\varepsilon n}\cdot n^{\mathcal O(1)}
+2^{\varepsilon n}\cdot 2^{o(c(\varepsilon)\cdot n)}=2^{\epsilon n+o(n)}$ time
 for an arbitrarily small $\varepsilon>0$, which violates the ETH.

\subsection{Our Contribution}
In this paper, we present two results for the $k$ disjoint paths problem
 on directed graphs with the parameter of treewidth.
The problem is formulated as follows.

\begin{oframed}
\noindent\textsc{$k$ Directed (Vertex-)Disjoint Paths} (\kDDP)
 Parameterized by Treewidth
\begin{description}
\item[Input]
An $n$-vertex directed graph $G$,
 demands $(s_1,t_1)$, \ldots, $(s_k,t_k)$, and
 a tree decomposition of $G$ with width $\tw$.
Here,
 the vertices $s_1$, \ldots, $s_k$, $t_1$, \ldots, $t_k$ are all distinct.
\item[Question]
Does $G$ have $k$ pairwise vertex-disjoint paths connecting
 from $s_i$ to $t_i$ for all $1\leq i\leq k$?
\item[Parameter] $\tw$.
\end{description}
\end{oframed}

\subsubsection{Fast Algorithm for Small $k$}
First, we show that there is an algorithm faster than the lower bound
 of \cite{LMS18} for a small $k=\tw^{o(1)}$ by proving
 Theorem~\ref{th:Algorithm_tw} below.
If $k=\mathcal O(1)$, in particular,
 then this algorithm is a single exponential FPT algorithm
 with respect to $\tw$.
\begin{theorem}
\label{th:Algorithm_tw}
\kDDP{} can be solved in
 $2^{\omega\cdot (\tw + 2k)\log k+\mathcal{O}(\tw + k)} \cdot n$ time, where
 $\omega$ is the matrix multiplication exponent.\footnote{%
The best upper bound currently known is $2.371339$ \cite{ADVXXZ2025}.}
\end{theorem}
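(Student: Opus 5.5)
We plan a dynamic programming algorithm over a nice tree decomposition of width $\tw$. The key point is that the states must be indexed so that their number is $k^{\mathcal O(\tw+k)}$ rather than $\tw^{\mathcal O(\tw)}$. The standard approach, Scheffler's, stores for each bag $X_t$ a partial solution in the subgraph $G_t$: the partition of the used bag vertices into path fragments, together with their pairing, which costs $\tw^{\mathcal O(\tw)}$. To avoid this, we exploit that every vertex lies on at most one of only $k$ paths. Each bag vertex $v$ therefore receives a label from $\{0,1,\dots,k\}$ (unused, or the index of the demand whose path uses it), plus $\mathcal O(1)$ bits of local degree information. The degree information records whether $v$ already has its in-edge and out-edge inside $G_t$, and whether $v$ is currently connected back to $s_i$, forward to $t_i$, or neither. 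This gives $(k+1)^{\tw+1}2^{\mathcal O(\tw)}$ labelings. Labeling alone loses the information of which fragments with the same label are to be joined. We compensate with a rank-based/representative-set technique that replaces "the partition into fragments" by the cut-and-count or representative-family idea within each color class.

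Concretely, we proceed as follows. (1) We assume a nice tree decomposition and add $s_i,t_i$ ($2k$ terminals) to every bag. This raises the width to $\tw+2k$ and explains the $\tw+2k$ in the exponent. (2) We define, for each node $t$ and each coloring $c\colon X_t\to\{0,\dots,k\}$ with degree bits, the family of partial solutions: sets of vertex-disjoint directed path fragments in $G_t$ respecting $c$. Each partial solution induces, per color $i$, a matching structure on the dangling endpoints, and these structures form the connectivity pattern. (3) Following the rank-based approach of Bodlaender, Cygan, Kratsch, and Nederlof, we keep, per coloring, only a representative set of connectivity patterns. The relevant "compatibility matrix" for directed linear forests of a single color has low rank over $\mathbb F_2$ when patterns are encoded via a bipartition of the vertices of each color into "reachable from $s_i$ side" versus the rest, as in cut-and-count. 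In this encoding each vertex receives a pair (color, side), giving $(2k+1)^{\tw+2k}$ rows. Hence each representative set has size $k^{\mathcal O(\tw+k)}$. (4) We implement introduce-vertex, introduce-edge, forget, and join operations on these representative sets, each followed by a reduce step via Gaussian elimination. The reduce step costs (size)$^{\omega-1}\cdot$(rank). This is where the factor $\omega$ in $2^{\omega(\tw+2k)\log k}$ comes from. (5) At the root we check that each color class forms a single $s_i$–$t_i$ path.

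The main obstacle is step (3), proving the rank bound for the directed, multi-colored compatibility matrix. A pattern and its complementary pattern must combine into exactly $k$ disjoint directed paths, with no cycles and no extra fragments. We must show that the matrix, indexed by pairs of partial patterns and factoring through $(\text{color},\text{side})$ cuts, has rank $k^{\mathcal O(\tw+k)}$ over $\mathbb F_2$. Our first attempt will proceed per color: within color $i$, a union of fragments is a single $s_i$–$t_i$ path with no cycles iff every consistent side assignment fixing $s_i$ in the source side yields a count congruent to $1$ modulo $2$. We then take the tensor product over colors, with $k+1$ colors contributing $\log k$ bits per vertex. Excluding directed cycles in a color may need the degree bits and a separate counting of fragments. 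If parity fails for that, we fall back to weights with isolation. We would then recheck that the bound is preserved, bounding the total time by $2^{\omega(\tw+2k)\log k+\mathcal O(\tw+k)}\cdot n$.
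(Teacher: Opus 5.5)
Your route differs from the paper's and can be completed, but as written it stops at the step you call the main obstacle, and the running-time accounting is wrong.

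\textbf{How the two routes differ, and how to close step (3).} The paper never guesses which demand a bag vertex serves. It keeps partitions in $\Pi_k(U)$ and asks only that $p\sqcup q\in\Gamma_k(U)$, so the colors appear only implicitly, as the columns $c\in\Gamma_k(U)$ of $\mathcal C[p,c]=[c\sqsubseteq p]$. Then $(\mathcal C\mathcal C^\top)[p,q]=k^{\nblock{p\sqcup q}-k}$ on $\Pi_k(U)$, whose parity singles out $\Gamma_k(U)$ only for even $k$; this is why the paper pads to $\Gamma'_{k+1}(U)$ for odd $k$.

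You put the colors into the state index instead, and then step (3) is routine, so you should commit rather than hedge. For a fixed coloring, the compatibility matrix is $\bigotimes_i M_i$ with $M_i[p_i,q_i]=[p_i\sqcup q_i=\{U_i\}]$. Over $\mathbb F_2$ you have $M_i=C_iC_i^\top$, where $C_i$ is indexed by two-sided cuts of $U_i$ with $s_i$ on a fixed side, since $(C_iC_i^\top)[p_i,q_i]=2^{\nblock{p_i\sqcup q_i}-1}$. Hence $\bigotimes_i M_i=(\bigotimes_i C_i)(\bigotimes_i C_i)^\top$, the minimum-weight-basis argument of Claim~\ref{cl:k-representation} applies verbatim, and the rank is at most $2^{|B_x|}$. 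If colors lived in the cut with color $0$ allowed, a terminal-free block would have $2k+1$ labels, an odd number, and parity would fail; that is exactly what $\Gamma'_{k+1}(U)$ fixes in the paper.

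Directed cycles are not the matrix's job. A cycle formed across the bag shows up as an extra block. A cycle closed inside $G_x$ is discarded by $\proj$ when its last degree-one vertices become $\delta_{11}$ and its block vanishes, as in the paper's introduce-edge and join steps. So no isolation is needed, and falling back to it would only yield a randomized algorithm. Also drop the ``connected back to $s_i$ / forward to $t_i$'' bits from the state. They are a function of the partition, not of local data, so they cannot be updated from the state alone.

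\textbf{Running-time accounting.} The basis computation costs $|\mathcal A|\cdot m^{\omega-1}$ for $m$ columns, not $|\mathcal A|^{\omega-1}\cdot m$. Once colors are in the state, only the side varies in a cut, so the per-state rank is $2^{\mathcal O(\tw+k)}$, not $(2k+1)^{\tw+2k}$. With your numbers (multiply two sets of size $(2k+1)^{\tw+2k}$, then reduce) the exponent is about $(\omega+1)(\tw+2k)\log(2k+1)$, which overshoots the theorem.

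Counted correctly, you have at most $(k+1)^{\tw+1}$ colorings of non-terminal bag vertices, since terminals have fixed colors. You also have $2^{\mathcal O(\tw+k)}$ degree vectors, and join nodes that combine only same-colored sets of size $2^{\mathcal O(\tw+k)}$. The total is $(k+1)^{\tw+1}2^{\mathcal O(\tw+k)}\cdot n$, which implies the stated bound.

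That is the trade-off between the two routes. The paper avoids guessing and keeps one uniform object, but at join nodes it multiplies two sets of size $k^{\Theta(\tw+k)}$. The per-vertex count in Lemma~\ref{lm:Algorithm_tw_Time} then produces the constant $\omega$ and the $2k\log k$ term. Your route pays for the color guess once and needs only the single-connectivity rank bound, with the terminals costing just $2^{\mathcal O(k)}$.
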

To prove Theorem~\ref{th:Algorithm_tw},
 we design an algorithm
 using
 the \emph{rank based approach} proposed by Bodlaender et al.\ \cite{BCKN15}.
This approach is based on dynamic programming that proceeds while maintaining
 a set $\mathcal A$ of partitions of the vertex set contained in a bag of
 the tree decomposition.
Each partition in $\mathcal A$ corresponds to a partial solution in each step
 of the dynamic programming.
Because the maximum size of such a partition set is
 $\tw^{\Theta(\tw)}=2^{\Theta(\tw\log\tw)}$ in general,
 a naive approach would need $2^{\Theta(\tw\log\tw)}$ time
 to update the partition set for each step of the dynamic programming.
In \cite{BCKN15},
 $2^{\mathcal O(\tw)}\cdot n$ time algorithms are designed
 for \emph{single connectivity problems}, i.e.,
 problems for finding a single connected component satisfying
 certain conditions,
 such as
 the traveling salesman problem (TSP) and the Steiner tree problem.
The single exponential time is achieved through
 maintaining a partition set $\mathcal A'\subseteq\mathcal A$ only of the size
 $2^{\tw}$.
The correctness of the reduction of $\mathcal A$ is based on two facts:
One is that for each partition $q$,
 retaining information of partitions $p\in\mathcal A$ such that
 their join $p\sqcup q$
 is a partition consisting of a single block
 (i.e., a single connected component) is enough.
The other is that a matrix representing such information can be factorized into
 two matrices of a rank at most $2^{\tw}$.
More strictly,
 $\mathcal A'$ needs to satisfy a property that
 for a partition $q$,
 if there is $p\in\mathcal A\setminus\mathcal A'$ such that $p\sqcup q$
 is a single block partition, then
 there is $p'\in\mathcal A'$ such that $p'\sqcup q$
 is a single block partition
 (which is said that \emph{$\mathcal A'$ represents $\mathcal A$}).
Hence, in each step of the dynamic programming,
 we need to update $\mathcal A'$ using only operations preserving this property.
Such operations are said to \emph{preserve representation}.

In our proof of Theorem~\ref{th:Algorithm_tw},
 we utilize the fact that the rank based approach can also be applied to
 \emph{multi-connectivity problems}\footnote{%
In this paper, multi-connectivity problems refer to problems of finding
 multiple connected components
 each of which is required to contain a prescribed vertex,
 such as \kDDP.
}
 through a natural extension.
%
%
We add the $k$ demands $(s_1,t_1)$, \ldots, $(s_k,t_k)$ to all bags
 to maintain appropriate partial solutions of \kDDP, and
 only consider partitions consisting of $k$ or more blocks,
 $k$ of which each contain $s_1$, \ldots, $s_k$.
Let $\Pi_k$ be the set of all such partitions.
The notion of representation is generalized to sets of only partitions in $\Pi_k$.
More specifically, we define that a set $\mathcal A$ of partitions in $\Pi_k$
 is represented by $\mathcal A'\subseteq\mathcal A$ as follows:
For any partition $q\in\Pi_k$, if there is $p\in\mathcal A\setminus\mathcal A'$
 such that their join $p\sqcup q$ is a partition
 consisting of exactly $k$ blocks, each containing $s_i$, then
 there is $p'\in\mathcal A'$ such that
 $p'\sqcup q$ is a partition
 consisting of exactly $k$ blocks, each containing $s_i$.
This condition
 can be viewed as
 a relaxation of the condition that
 $p\sqcup q$ and $p'\sqcup q$ correspond to solutions of \kDDP,
 because, actually, a partition consisting of $k$ blocks,
 each containing both $s_i$ and $t_i$, corresponds to a solution of \kDDP.
Although addition of the $k$ demands to all bags increases the maximum size of
 $\mathcal A$ to $(\tw+2k)^{\Theta(\tw+k)}$,
 we can show
 by only considering partitions in $\Pi_k$
 that a matrix of all partitions $p\in\mathcal A$ and $q\in\Pi_k$
 can be factorized into two matrices of a rank at most 
 $k^{\mathcal O(\tw+k)}=2^{\mathcal O(\tw+k)\log k}$, by which
 $\mathcal A'$ of the same size can be computed,
 and that
 operations to update $\mathcal A'$ in each step of the dynamic programming
 can be generalized to preserve the property of the generalized representation.
The detailed implementation and analysis of the proposed algorithm is
 mainly inspired by the algorithm for TSP in \cite{BCKN15}.

We note that with slight modifications of our proof,
 Theorem~\ref{th:Algorithm_tw} also holds
 for the undirected version of \kDDP,
 the \textsc{$k$ Undirected Disjoint Paths} (\kUDP).
Moreover,
 Theorem~\ref{th:Algorithm_tw} holds for the edge-weighted versions
 of both \kDDP{} and \kUDP,
 which require
 the minimum sum of edge-weights of disjoint paths
 in a given edge-weighted graph,
 as well as
 Scheffler's algorithm
 being extensible to the edge-weighted \kUDP{} \cite{Scheffler94}.
In Section~\ref{sc:Algorithm_tw},
 we design an algorithm for the edge-weighted \kDDP.

\subsubsection{Tight Lower Bound under the SETH for General $k$}
Second, we prove a precise lower bound for \kDDP{}
 under the Strongly Exponential Time Hypothesis (SETH) as follows.
\begin{theorem}
\label{th:LowerBound_SETH}
If the SETH is true, then for any $\varepsilon>0$,
 there exists no $(2 - \varepsilon)^{\pw\log\pw}\cdot n^{\mathcal O(1)}$ time
 algorithm for \kDDP{} with a general $k$.
\end{theorem}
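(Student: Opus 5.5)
We will prove Theorem~\ref{th:LowerBound_SETH} by a chain of reductions from CNF-SAT, following the template of Lokshtanov, Marx, and Saurabh~\cite{LMS18}. Their lower bound goes through the intermediate problem \KKPHS{} (equivalently \KKHS{} with permutation constraints). Our goal is a reduction that loses essentially nothing in the base of the exponent.

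\emph{Step 1: SAT to permutation hitting set.} We pick an integer $K$ that is a power of two. Given a CNF formula with $N$ variables and $m$ clauses, we split the variables into $\alpha$ groups of size $K\log K$. An assignment to one group is encoded by $K$ choices, each from $[K]$, forming a $K\times K$ grid row. Since $K^K=2^{K\log K}$, an assignment to a group corresponds injectively to a function $[K]\to[K]$. To obtain permutations we will likely need a slightly lossy encoding, using $K$ rows per block and requiring a permutation. This loses a factor $K!/K^K\approx e^{-K}$, i.e.\ only $2^{-\mathcal O(K)}$, which is negligible against $2^{K\log K}$. We thus arrive at \aKKPHS{}: choose $\alpha$ permutations so that every set in a given family (one per clause, of polynomial size) is hit. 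The total search space is $(K!)^{\alpha}\approx 2^{\alpha K\log K(1-o(1))}$, which matches $2^{N(1-o(1))}$.

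\emph{Step 2: permutation hitting set to \kDDP.} This is the main step, and it follows the gadget of~\cite{LMS18}. A permutation of $[K]$ is realized by $K$ disjoint directed paths crossing a "permutation gadget", routed from $K$ sources to $K$ sinks. Directedness lets us force each path to traverse it monotonically, and the identity of which source reaches which position encodes the permutation. Each set in the family becomes a verification gadget placed in sequence along the construction. A short detour path, or an extra demand, can pass through the gadget only if some chosen pair $(i,\pi(i))$ is present. For all $\alpha$ blocks, the $\alpha K$ paths run in parallel through a sequence of $m'$ layers, one per set. The key requirement is that the whole construction has pathwidth $\pw=\alpha K+\mathcal O(\alpha+K)$, or at least $\alpha K(1+o(1))$. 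This holds because a cut between consecutive layers meets only the $\alpha K$ carrier paths plus $o(\alpha K)$ gadget vertices. The number of demands $k$ is polynomial, which is allowed since $k$ is general.

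\emph{Step 3: arithmetic.} Suppose an algorithm runs in $(2-\varepsilon)^{\pw\log\pw}n^{\mathcal O(1)}$ time. Then $\pw\log\pw\le \alpha K(1+o(1))\log(\alpha K)$. The danger is the $\log\alpha$ term, so we must choose $\alpha$ constant, or $\alpha=K^{o(1)}$, while letting $K\to\infty$ with $N=\alpha K\log K$. Then $\log(\alpha K)=(1+o(1))\log K$, and the running time becomes $(2-\varepsilon)^{N(1+o(1))}$, which contradicts the SETH. To keep $\alpha$ small we take $\alpha$ fixed and $K\approx N/(\alpha\log N)$, which is fine since only the ratio matters.

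\emph{Main obstacle.} The hardest part is Step 2. The verification gadgets must certify membership of a single pair $(i,\pi(i))$ using only directed disjoint paths, and the pathwidth bound must carry leading constant exactly $1$ in front of $\alpha K$. Any constant-factor blowup, say $2\alpha K$ paths per layer, would destroy tightness. My first attempt is to reuse the carrier paths themselves as the encoding and to handle each set by a single additional demand. That demand enters a layer, may jump onto a carrier path's "slot" vertex only where the carrier path is at the right position, and exits. Directedness prevents cheating reroutes. Each layer is then processed with $\mathcal O(1)$ additional bag vertices.
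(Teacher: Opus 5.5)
Your chain CNF-SAT $\to$ \aKKPHS{} $\to$ \kDDP{} is the same as the paper's, but the step that makes the bound tight is missing. Step~2 is asserted, not proved. You claim $\pw=\alpha K+\mathcal O(\alpha+K)$, and even ``$\mathcal O(1)$ additional bag vertices'' per layer, for a gadget you have not constructed. A small cut between consecutive layers does not bound pathwidth; what matters is the bags needed to sweep through the interior of a layer. The gadget of \cite{LMS18} that you say you follow has $\Theta(K^2)$ vertices per block and per set: the $v_{i,j}$, the $f_{i,j},f^1_{i,j},f^2_{i,j}$, and the paths $P_i$ with $3K+1$ vertices. The decomposition available for one block has width roughly $5K$, so using it directly gives $\pw\log\pw$ about $5N$, far from tight. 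Your carrier-path alternative is only sketched. You do not explain how a single clause demand could locally certify that lane $\pi(i)$ is occupied by the path of source $i$.

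The missing idea is amortization over the $\alpha$ blocks. The paper builds the \cite{LMS18} graph $G_q$ separately for each $K\times K$ block. It identifies all $s^q_p$ into one $s_p$ and all $t^q_p$ into one $t_p$, so each clause needs to be hit in only one block, which is exactly the \aKKPHS{} condition (Lemma~\ref{lm:aKKPHS-kDDP}). It then orders the bags lexicographically by $(p,q,i,j)$. At level $p$ the gadgets $G[p,1],\dots,G[p,\alpha]$ are swept one at a time, while every other block contributes only its $K$ interface vertices $b_1,\dots,b_K$. This gives $\pw\le\alpha K+\mathcal O(K)$ (Lemma~\ref{lm:pw}). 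The $\mathcal O(K)$ overhead is then made negligible by letting $\alpha\to\infty$ ($\alpha=\Theta(\log N)$, $K=\Theta(N/\log^2 N)$), while $\log\alpha=o(\log K)$ keeps the logarithm intact.

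Your Step~3 goes the other way: you fix $\alpha$ because of the $\log\alpha$ term. With fixed $\alpha$, however, $\alpha K+\mathcal O(K)$ is not $\alpha K(1+o(1))$; you would need at least a large constant $\alpha$ depending on $\varepsilon$. Under your own $\mathcal O(1)$-overhead claim, $\alpha$ would play no role at all.

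Step~1 also needs a concrete mechanism. Merely demanding a permutation loses satisfying assignments whose row choices collide in a column, and the count $K!/K^K$ is not a reduction. The paper applies the row-shuffling permutationization of \cite{LMS18} (Theorem~\ref{th:Permutation}) to each block, at cost $2^{\mathcal O(K)}=2^{o(N)}$.
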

The idea of our proof of Theorem~\ref{th:LowerBound_SETH} is based on
 the lower bound of \cite{LMS18} under the ETH.
In \cite{LMS18}, \TSAT{} is reduced to \kDDP{} via
 the following problem.\footnote{%
In \cite{LMS18}, a reduction from
 a non-permutation version of \KKHS, where
 $S$ needs to contain exactly one element from each row but not necessarily from
 each column, to \kDDP{} was presented.
In fact,
 the permutation condition
 is essential for the reduction.
Nevertheless,
 replacing with \KKPHS{} in the reduction,
 the lower bound for \kDDP{} in \cite{LMS18} is correctly proved,
 since a lower bound for \KKHS{} was proved
 for the permutation version in \cite{LMS18}.
%
}
\begin{oframed}
\noindent\KKPHS
\begin{description}
\item[Input]
Nonempty sets $S_1$, \ldots, $S_m\subseteq [K]\times [K]$, where
 $[K]$ is the set of integers from $1$ to $K$.
\item[Question]
Is there a set $S$ containing exactly one element from
 each row and each column of the grid on $[K]\times [K]$ such that
 $S\cap S_i\neq\emptyset$ for each $1\leq i\leq m$?
\end{description}
\end{oframed}
In the proof of \cite{LMS18}, first,
 \TSAT{} is reduced to \KKPHS{}
 such that
 $K\log K=\mathcal O(n)$ for the number $n$ of variables in a $3$-CNF formula
 input to \TSAT.
The most complicated part in the reduction is
 \emph{permutationization} of $S$, i.e.,
 restricting the condition of $S$ to choosing
 exactly one element from each column as well as each row.
Then,
 \KKPHS{} is reduced to \kDDP{} through constructing
 an input graph $G$ of \kDDP{} with the pathwidth $\pw=\mathcal O(K)$.
Consequently, it follows that
 $\pw\log\pw=\mathcal O(K\log K)=\mathcal O(n)$, which implies
 the lower bound of \cite{LMS18}.

To prove Theorem~\ref{th:LowerBound_SETH},
 we need to transform a general CNF formula with $n$ variables
 to an input graph $G$ of \kDDP{} with $\pw\log\pw=n+o(n)$.
For this purpose, first,
 we choose $K$ and a positive integer $\alpha$ such that $\alpha K\log K=n+o(n)$,
 and transform the CNF formula to an input of a problem, called \aKKPHS,
 in which $[K]\times [K]$ is replaced with $[\alpha K]\times [K]$, and
 the condition of $S$ is changed to choosing exactly one element from
 each row and each column in each of $\alpha$ grids on $[K]\times [K]$
 obtained by dividing $[\alpha K]\times [K]$.
We use the permutationization technique of \cite{LMS18} for this reduction.
Then, we transform the input of \aKKPHS{} to an input graph $G$ of \kDDP{}
 with $\pw=\alpha K+\mathcal O(K)$, using a slightly modified
 technique of the transformation of \cite{LMS18}.
By choosing $\alpha=n^{o(1)}$ and $K\approx\frac{n}{\alpha\log n}$,
 $\pw\log\pw=n+o(n)$ can be satisfied,
 which implies Theorem~\ref{th:LowerBound_SETH}.

The lower bound of Theorem~\ref{th:LowerBound_SETH} is tight.
Actually, Scheffler's algorithm \cite{Scheffler94}, designed for \kUDP{},
 runs in $2^{\pw\log\pw+\mathcal O(\pw)}\cdot n$ time for \kDDP{}
 with slight modifications.
We describe the proof and the tightness of Theorem~\ref{th:LowerBound_SETH}
 in Section~\ref{sc:LowerBound_SETH}.

%% file: section2.tex
\section{Preliminaries}
\subparagraph*{Graphs}
Unless otherwise stated, we assume graphs are directed.
For a graph $G$ and a subset $X$ of the edge set $E(G)$ of $G$,
 the subgraph of $G$ induced by $X$ is denoted by $G[X]$.
The indegree and outdegree of a vertex $v\in V(G)$ are denoted by
 $\indeg{G}(v)$ and $\outdeg{G}(v)$, respectively.
We call a (simple) path from a vertex $u$ to a vertex $v$ a \emph{$uv$-path}.
A \emph{linear forest} is the disjoint union of some vertex-disjoint paths.

\subparagraph*{Tree Decomposition and Treewidth}
A \emph{tree decomposition} $\mathcal{T}$ of a graph $G$ \cite{RS84}
 is a pair $(T,\{B_t\}_{t\in V(T)})$ of an undirected tree $T$ and
 a family of a \emph{bag},
 a subset $B_t\subseteq V(G)$ of vertices in $G$,
 associated with each node $t$ in $T$, such that the following
 conditions are satisfied.
\begin{itemize}
\item
$\bigcup_{t\in V(T)} B_t = V(G)$.
\item
For every edge $e\in E(G)$, there is a bag $B_t$
 containing both the end-vertices of $e$.
\item
If $v\in B_x$ and $v\in B_y$, then
 for every node $z$ in the unique $xy$-path in $T$, $v\in B_z$.
\end{itemize}
The \emph{width of a tree decomposition $\mathcal{T}$}
 is $\max_{t\in V(T)} |B_t| - 1$, and
 the \emph{treewidth $\tw(G)$ of a graph $G$}, or simply $\tw$,
 is the minimum width of over all tree decompositions of $G$.
While computing a tree decomposition of width $\tw(G)$
 is generally NP-hard \cite{ACP87},
 it is known that for a fixed $k$,
 a tree decomposition of width $k$ (if exists) 
 can be computed in polynomial time \cite{Bod96}.
If $T$ is a path, then $\mathcal T$ is called a \emph{path decomposition}.
The \emph{pathwidth $\pw(G)$ of a graph $G$}, or simply $\pw$,
 is the minimum width of over all path decompositions of $G$.

In dynamic programming based on a tree decomposition,
 a tree decomposition satisfying special conditions,
 which is said to be \emph{nice}, is often used.
In this paper, we use nice tree decompositions defined in \cite{BCKN15}.
\begin{definition}[Nice tree decomposition]
\label{df:NiceTreeDecomposition}
A tree decomposition $(T, \{B_t\}_{t\in V(T)})$ is said to be \emph{nice}
 if $T$ is a rooted tree, and every node $x$ is one of the following types.
\begin{description}
\item[Leaf Node]
The node $x$ is a leaf of $T$ and satisfies $B_x = \emptyset$.
\item[Introduce Vertex Node]
The node $x$ has a unique child $y$
 and satisfies $B_x = B_y \cup \{v\}$ for some vertex $v \in B_x$.
The node $x$ is said to \emph{introduce the vertex $v$}.
\item[Introduce Edge Node]
The node $x$ has a unique child $y$
 and satisfies $u,v\in B_x$ and $B_x = B_y$ for some edge $uv\in E(G)$.
The node $x$ is said to \emph{introduce the edge $uv$}.
\item[Forget Node]
The node $x$ has a unique child $y$
 and satisfies $B_x = B_y \setminus \{v\}$ for some vertex $v \in B_y$.
The node $x$ is said to \emph{forget the vertex $v$}.
\item[Join Node]
The node $x$ has two children $y, z$ and satisfies $B_x = B_y = B_z$.
\end{description}
In addition,
 every edge of $G$ must be introduced exactly once.
\end{definition}
\begin{proposition}[\cite{BCKN15}]
\label{pr:NiceTreeDecomposition}
For a given $n$-vertex graph $G$ and its tree decomposition of width $\tw$,
 a nice tree decomposition of width $\tw$ rooted
 by a forget node with an empty bag or
 by an arbitrary introduce edge node
 can be computed in $\tw^{\mathcal{O}(1)} \cdot n$ time.
Moreover,
 the computed nice tree decomposition has $O(n)$ nodes.
\end{proposition}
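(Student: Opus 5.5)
The statement immediately above is Proposition~\ref{pr:NiceTreeDecomposition}, which is quoted from \cite{BCKN15}. The plan is the standard four-stage transformation of an arbitrary tree decomposition $(T,\{B_t\})$ of width $\tw$ into a nice one. Each stage keeps every bag a subset of an original bag, so the width does not increase.

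\textbf{Step 1 (compress).} Root $T$ at an arbitrary node. Contract every edge $xy$ with $B_x\subseteq B_y$ or $B_y\subseteq B_x$, keeping the larger bag. After this, every tree edge $xy$ has a private vertex on each side, so the number of nodes is at most $n$. This takes $\tw^{\mathcal O(1)}\cdot n$ time by sorting each bag and comparing adjacent bags.

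\textbf{Step 2 (binarize).} Replace each node $x$ with $d>2$ children by a path of $d-1$ binary join nodes, all carrying $B_x$. Then, on every edge from a node $x$ to a child $y$ whose bags differ, insert a chain of forget nodes that removes $B_y\setminus B_x$ one vertex at a time. Follow it by a chain of introduce vertex nodes that adds $B_x\setminus B_y$ one vertex at a time. To get the exact join-node shape, give each child of a join node a copy of $B_x$ first; that is, below each join node, place a chain that turns $B_x$ into the child's bag. Below each original leaf, hang a chain of introduce vertex nodes starting from an empty leaf. Each original edge thereby expands into at most $2(\tw+1)$ nodes, giving $\mathcal O(\tw\cdot n)$ nodes in total.

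\textbf{Step 3 (introduce edges exactly once).} For each edge $uv\in E(G)$, choose the node closest to the root whose bag contains both $u$ and $v$. Such a node exists because some bag contains $\{u,v\}$, and the nodes containing both endpoints form a subtree, which has a unique highest node. Insert an introduce edge node directly above it with the same bag. Since $|E(G)|\le \tw\cdot n$ for bounded treewidth, this adds $\mathcal O(\tw\cdot n)$ nodes. The edges can be assigned by running one DFS and computing, for each vertex, its highest occurrence.

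\textbf{Step 4 (root).} To root the decomposition at a forget node with an empty bag, append above the root a chain of forget nodes that empties it. Rooting at a prescribed introduce edge node is done by re-rooting $T$ at that node before Steps 2 and 3, and placing the emptying chain in the subtree away from it instead.

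The main obstacle is the claimed bound of $O(n)$ nodes rather than $O(\tw\cdot n)$. The counting above gives $\mathcal O(\tw\cdot n)$ nodes in general. I would therefore read ``$O(n)$'' as hiding a $\tw^{\mathcal O(1)}$ factor, as \cite{BCKN15} effectively does. Alternatively one can bound introduce and forget nodes by $n$ each: every vertex is introduced and forgotten exactly once along each branch, and after Step 1 the branching is charged to private vertices. Edge nodes are bounded by $|E(G)|$. This is the step to check carefully, since a strictly linear bound needs exactly this private-vertex charging argument.
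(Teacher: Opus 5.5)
The paper gives no proof of this proposition; it imports it from \cite{BCKN15}. Your Steps 1--4 are the standard construction behind that citation. They correctly give width $\tw$, the required node types, $\mathcal O(\tw\cdot n)$ nodes, and $\tw^{\mathcal O(1)}\cdot n$ time. The time bound assumes, as the statement tacitly does, that the input decomposition has $\mathcal O(n)$ nodes.

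Two small points on the construction. First, Step 1 is cleaner if you only contract a child $x$ into its parent $y$ when $B_x\subseteq B_y$. For a child $z$ of $x$, connectivity gives $B_z\cap B_y\subseteq B_x$, so after the contraction $B_z\subseteq B_y$ holds exactly when $B_z\subseteq B_x$. Hence all edges to contract can be marked once in the input tree, and you avoid re-comparing bags after each merge. Second, when rooting at an introduce edge node for a prescribed edge $uv$, no emptying chain is needed anywhere. Root at a bag containing $u$ and $v$, and put the node introducing $uv$ on top of the introduce edge nodes stacked above the root.

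The real problem is your last paragraph. A bound of $\mathcal O(n)$ nodes with a constant independent of $\tw$ is not just hard to prove; it is false for nice decompositions in the sense of Definition~\ref{df:NiceTreeDecomposition}. Each edge needs its own introduce edge node. For example, the $\tw$-tree consisting of a $\tw$-clique $C$ plus $n-\tw$ further vertices, each adjacent to all of $C$, has $\tw n-\binom{\tw+1}{2}$ edges. So commit to reading the statement as $\mathcal O(\tw\cdot n)$ nodes, which is exactly what your count gives; no private-vertex charging can do better.

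Your alternative argument also fails independently of the edge nodes. Introduce vertex nodes are not bounded by $n$: leaves have empty bags, so every vertex in the bag of a join node must be introduced anew in each of its two subtrees. Take the star decomposition of the graph above, with centre bag $C$ and leaf bags $C\cup\{\ell\}$. Your Steps 1--2 hang a chain of $\tw+1$ introduce vertex nodes below each of the roughly $n-\tw$ leaves of the compressed tree, giving $\Theta(\tw\cdot n)$ such nodes. Only forget nodes (each vertex is forgotten once, directly above its topmost occurrence) and join nodes (at most $n$ after Step 1) are linearly bounded.

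The corrected bound is harmless for the paper. The bottleneck join nodes still number at most $n$, and any extra factor $\tw$ is absorbed into $2^{\mathcal O(\tw+k)}$ in Lemma~\ref{lm:Algorithm_tw_Time}.
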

For a nice tree decomposition $\mathcal{T}$ of $G$ and its node $x$,
 let $V_x \subseteq V(G)$ be the set of all vertices contained in
 $x$ or a descendant node of $x$.
Besides, let $E_x\subseteq E(G)$ be the set of all edges
 introduced by $x$ or a descendant node of $x$, and
 let $G_x = (V_x, E_x)$.

\subparagraph*{Partition}
A \emph{partition} $p$ of a set $U$ is a family of subsets of $U$
 satisfying the following conditions.
\begin{itemize}
\item
The family $p$ has no empty set.
\item
The union of all elements of $p$ equals $U$.
\item
All elements of $p$ are pairwise disjoint.
\end{itemize}
An element of a partition is called a \emph{block}.
The set of all partitions of a set $U$ is denoted by $\Pi(U)$.
The partition set $\Pi(U)$,
 together with the binary relation $\preceq$ representing
 the \emph{refinement} of a partition, forms a lattice $(\Pi(U),\preceq)$.
Here,
 $p\preceq q$ represents that every block in $p$ is a subset
 of some block in $q$. 
The lattice $(\Pi(U),\preceq)$ has
 the trivial partition $\{U\}$ as the greatest element and
 the partition consisting of singleton sets
 $\bigcup_{u\in U}\{\{u\}\}$ as the least element.
In the lattice $(\Pi(U),\preceq)$,
 the least upper bound and the greatest lower bound
 of two partitions $p$ and $q$
 are called
 the \emph{join of $p$ and $q$}, denoted by $p\sqcup q$, and
 the \emph{meet of $p$ and $q$}, denoted by $p\sqcap q$, respectively.
In this paper, following the notation of \cite{BCKN15},
 we use $q\sqsubseteq p$ instead of $p\preceq q$.

%% file: section3.tex
\section{Fast Algorithm for Small $k$}
\label{sc:Algorithm_tw}
In this section, we design and analyze our algorithm to prove
 Theorem~\ref{th:Algorithm_tw}.
This algorithm
 performs dynamic programming similar to that of \cite{BCKN15} for the TSP.
In particular,
 we represent and retain
 linear forests that are partial solutions of \kDDP{} for each bag
 using partitions with blocks that consist of
 end-vertices of the linear forests contained in the bag.
Since, in our algorithm,
 all the demand vertices
 $s_1$, \ldots, $s_k$, $t_1$, \ldots, $t_k$
 are added to every bag,
 any linear forest composing a solution is retained without
 being forgotten.
In each step of dynamic programming,
 the retained set of partitions is updated and then
 reduced to a size of at most $(k+1)^{\tw+k+1}$
 using the rank based approach.

In Section~\ref{ssc:Algorithm_tw_Notation},
 we describe the notation used in Section~\ref{sc:Algorithm_tw},
 all of which were used in \cite{BCKN15}.
In Section~\ref{ssc:k-RankBasedApproach},
 we generalize the rank based approach
 to multi-connectivity problems in a natural way.
Using the generalized approach,
 we design and analyze our algorithm for solving \kDDP{}
 in Section~\ref{ssc:k-Algorithm_tw}.

\subsection{Notation}
\label{ssc:Algorithm_tw_Notation}
For a partition $p$,
 the number of blocks $|p|$ is more specifically denoted by $\nblock{p}$.
Suppose that $U$ is a set and $p\in\Pi(U)$ is a partition of $U$.
For a subset $X$ of $U$,
 let $p_{\downarrow X} \in \Pi(X)$ be the partition obtained by
 removing every element in $U\setminus X$ from each block in $p$, and
 let $U[X]\in\Pi(U)$ be the partition consisting of the block $X$
 and the singleton $\{e\}$ for every element $e$ in $U\setminus X$.
For a superset $Y$ of $U$,
 let $p_{\uparrow Y} \in \Pi(Y)$ to the partition obtained by
 adding every element $e$ in $Y\setminus U$ to $p$ as the singleton block
 $\{e\}$.

For sets $X$ and $Y$,
 $X^Y$ is the set of all mappings from $Y$ to $X$.
For a mapping $f$, the mapping obtained by restricting
 the domain to $X$ is denoted by $f|_X$.
We define
 $f[v\rightarrow \alpha]$ as the mapping
 obtained by replacing $f(v)$ with $\alpha$, i.e.,
 $f\setminus \{ (v, f(v)) \} \cup \{ (v, \alpha) \}$.
(Here,
 we identify $f(x)=y$ with the pair $(x,y)$.
If $(v, f(v))$ does not exists, then we simply add $(v, \alpha)$.)
For two functions $f$ and $g$ with the same domain,
 we define $(f+g)(x) = f(x) + g(x)$.
For the operator $\min$ that returns the minimum value of a set of numbers,
 we define $\min(\emptyset) = \infty$.

\subsection{Extension of Rank Based Approach to Multi-Connectivity Problems}
\label{ssc:k-RankBasedApproach}
In Section~\ref{sssc:Operators},
 we introduce operators for weighted partition sets, which are used to
 describe processes in each step of our dynamic programming.
The operators were all defined in \cite{BCKN15}.
In Section~\ref{sssc:k-Reduce},
 we prove that
 the operators preserve representation also for $k$-block partition sets,
 and that for any set $\mathcal A$
 of partitions consisting of $k$ or more blocks,
 a subset $\mathcal A'\subseteq\mathcal A$ of a size at most $(k+1)^{\tw+k+1}$
 can be obtained.
The algorithm for finding $\mathcal A'$ is used to reduce
 a partition set produced by one of the operators.

\subsubsection{Weighted Partition Sets and Operators}
\label{sssc:Operators}
\begin{definition}[Weighted partition sets]
For a set $U$,
 a \emph{weighted partition set}
 $\mathcal{A} \subseteq \Pi(U) \times \mathbb{N}$
 is a set of pairs of a partition of $U$ and a nonnegative integer.
If $w=0$ for every $(p, w) \in \mathcal{A}$, then
 $\mathcal{A}$ is \emph{unweighted}.
\end{definition}
\begin{definition}[Operators on weighted partition sets]
\label{df:Operators}
Suppose that $U$ is a set and
 $\mathcal{A}\subseteq \Pi(U) \times \mathbb{N}$.
\begin{description}
\item[rmc]
$\rmc(\mathcal{A})
 =\left\{(p,w)\in\mathcal{A}\mid\forall w'<w:(p,w')\notin\mathcal{A}\right\}$,
 i.e.,
$\rmc(\mathcal{A})$ removes every partition $p$
 with a non-minimum weight from $\mathcal{A}$.
\item[Union]
For $\mathcal{B}\subseteq\Pi(U)\times\mathbb{N}$,
 we define
 $\mathcal{A}\cuparrow\mathcal{B}=\rmc(\mathcal{A}\cup\mathcal{B})$, i.e.,
 $\mathcal{A}\cuparrow\mathcal{B}$
 extracts only partitions with minimum weights from the union
 of two weighted partition sets $\mathcal A$ and $\mathcal B$.
\item[Insert]
For a set $X$ disjoint with $U$,
 we define
 $\ins(X,\mathcal{A})=\{(p_{\uparrow U\cup X},w)\mid (p,w)\in\mathcal{A}\}$, i.e,
 $\ins(X,\mathcal{A})$ adds every element $e$ in $X$
 to each weighted partition in $\mathcal A$
 as the singleton block $\{e\}$.
\item[Shift]
For $w' \in \mathbb{N}$,
 we define
 $\shift(w',\mathcal{A})=\{(p,w+w')\mid (p,w)\in\mathcal{A}\}$, i.e.,
 $\shift(w',\mathcal{A})$ increases the weight of every partition
 in $\mathcal A$ by $w'$.
\item[Glue]
For elements $u$ and $v$,
 let $\hat{U} = U\cup \{ u, v \}$.
We define
 $\glue(uv,\mathcal{A})\subseteq\Pi(\hat{U})\times\mathbb{N}$ as
\[
\glue(uv, \mathcal{A})
 =
\rmc\left( \left\{ 
 \left(\hat{U}[\{u,v\}]\sqcup p_{\uparrow\hat{U}}, w\right)\mid
 (p,w)\in\mathcal{A}
\right\} \right),
\]
i.e.,
 $\glue(uv,\mathcal{A})$ adds $e\in\{u,v\}\setminus U$
 to each weighted partition in $\mathcal A$ as the singleton block $\{e\}$,
 replaces two blocks in the partition containing $u$ and $v$ with their union
 (if necessary),
 and then
 extracts partitions with minimum weights.
In addition,
 for $\omega:\hat{U}\times\hat{U}\rightarrow\mathbb{N}$,
 we define
 $\gluew(uv,\mathcal{A})=\shift(\omega(u,v),\glue(uv,\mathcal{A}))$.
\item[Project]
For a subset $X$ of $U$, let $\bar{X}=U\setminus X$.
We define
 $\proj(X, \mathcal{A}) \subseteq \Pi(\bar{X}) \times \mathbb{N}$ as
%
\[
\proj(X,\mathcal{A})=\rmc\left(\left\{
 (p_{\downarrow \bar{X}},w)\mid (p, w)\in\mathcal{A}
 \wedge \forall e\in X: \exists e'\in \bar{X}: p\sqsubseteq U[\{e,e'\}]
\right\}\right),
\]
i.e.,
 $\proj(X, \mathcal{A})$ removes elements in $X$ from each weighted partition
 in $\mathcal A$, and then
 extracts partitions with minimum weights, such that
 the number of blocks was not decremented by the removal of elements in $X$.
\item[Join]
For a set $U'$, let $\hat{U} = U\cup U'$.
For a weighted partition set
 $\mathcal{B} \subseteq \Pi(U') \times \mathbb{N}$,
 we define
 $\join(\mathcal{A},\mathcal{B})\subseteq\Pi(\hat{U})\times\mathbb{N}$ as
\[
\join(\mathcal{A},\mathcal{B})=\rmc\left(\left\{
 (p_{\uparrow\hat{U}}\sqcup q_{\uparrow\hat{U}},w_1+w_2)\\
 \mid (p,w_1)\in\mathcal{A},(q,w_2)\in\mathcal{B}
 \right\}\right),
\]
i.e.,
 $\join(\mathcal{A},\mathcal{B})$ adds
 elements in $U'\setminus U$ to each partition $p$ in $\mathcal A$ and
 elements in $U\setminus U'$ to each partition $q$ in $\mathcal B$,
 joins $p$ and $q$ while adding their weights,
 and then
 extracts partitions with minimum weights.
\end{description}
\end{definition}
\begin{proposition}[\cite{BCKN15}]
\label{pr:OperatorComplexity}
Operators $\union$, $\ins$, $\shift$, $\glue$, and $\proj$
 can be performed in $I \cdot |U|^{\mathcal{O}(1)}$ time, where
 $I$ is the size of an input for each operator.
The operator $\join(\mathcal{A},\mathcal{B})$ can be performed
 in $|\mathcal{A}|\cdot|\mathcal{B}|\cdot|U|^{\mathcal{O}(1)}$ time.
\end{proposition}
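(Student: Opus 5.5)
The statement is Proposition~\ref{pr:OperatorComplexity}, quoted from \cite{BCKN15}. The plan is a direct implementation argument that handles each operator separately. Throughout, a partition of a ground set of size at most $|U|+2$ (or $|\hat U|$) is stored in a canonical form. For example, we label each element by the smallest element of its block, which gives a string of length $|U|$ that serves as a key. Under this representation the elementary manipulations $p_{\uparrow Y}$, $p_{\downarrow X}$, $U[X]$, and the join $p\sqcup q$ can each be computed in $|U|^{\mathcal O(1)}$ time. For the join we run union--find over the elements, merging along the blocks of both partitions. Refinement tests such as $p\sqsubseteq U[\{e,e'\}]$ also take polynomial time.

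We first treat $\rmc$, since most of the other operators finish with it. We sort the $I$ pairs $(p,w)$ lexicographically by (canonical key of $p$, $w$). Alternatively, we insert them into a trie or dictionary keyed by $p$ that keeps the minimum weight seen so far. Then for each partition we retain only the first (minimum-weight) entry. This takes $I\cdot|U|^{\mathcal O(1)}$ time, with the $\log I$ factor avoided by a trie over the length-$|U|$ key strings, or absorbed in the stated bound as \cite{BCKN15} does. Union is concatenation followed by $\rmc$, so its input size is $|\mathcal A|+|\mathcal B|=I$.

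The operators $\ins$ and $\shift$ transform each pair independently, at cost $|U|^{\mathcal O(1)}$ per pair. For $\glue$, each pair needs the lift to $\hat U$ and a single block merge, and then one call to $\rmc$. For $\proj$, each pair first needs the check that every $e\in X$ shares its block with some $e'\in\bar X$. This is equivalent to saying that no block of $p$ lies inside $X$, and it can be checked in polynomial time. The pair then needs the restriction $p_{\downarrow\bar X}$, after which we apply $\rmc$. Each of these operators is therefore linear in $I$ up to the polynomial factor.

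For $\join(\mathcal A,\mathcal B)$ we enumerate all $|\mathcal A|\cdot|\mathcal B|$ pairs. For each pair we lift both partitions to $\hat U$, compute $p_{\uparrow\hat U}\sqcup q_{\uparrow\hat U}$ by union--find, and add the weights, at cost $|\hat U|^{\mathcal O(1)}$ per pair. A final $\rmc$ runs on a list of size $|\mathcal A||\mathcal B|$. Reading $|U|$ as the size of the combined ground set, this gives the claimed bound.

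The only mildly delicate step is making $\rmc$ run in time truly linear in $I$ up to the polynomial factor in $|U|$, which is why we use a canonical encoding with trie or radix-sort bucketing. Everything else is routine bookkeeping.
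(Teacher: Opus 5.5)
Your proposal is correct. The paper gives no proof of its own and simply cites \cite{BCKN15}, and your operator-by-operator implementation (canonical encoding of partitions, union--find for $\sqcup$, and a trie- or radix-sort-based $\rmc$ that stays linear in the input size) is the routine argument behind that citation, with the same reading of $|U|$ as the relevant ground set (e.g.\ $|U\cup X|$ for $\ins$) that you already adopt for $\join$.
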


\subsubsection{Reducing Partition Sets for Multi-Connectivity Problems}
\label{sssc:k-Reduce}
We suppose that a set $U$ contains prescribed elements $s_1$, \ldots, $s_k$
 (which represent origins of $k$ demands for \kDDP).
Let $\Gamma_k(U)$ be the set of all partitions of $U$ consisting of
 $k$ blocks $b_1$, \ldots, $b_k$ such that $s_i\in b_i$ for each 
 $1\leq i\leq k$.
If $k=1$, then $\Gamma_1(U)=\{U\}$.
We define $\Pi_k(U)=\{p\mid\exists q\in\Gamma_k(U): q\sqsubseteq p\}$, i.e.,
 $\Pi_k(U)$ is the set of partitions that are
 refinements of partitions in $\Gamma_k(U)$.
\begin{definition}[$k$-Representation]
\label{def:representation}
For a weighted partition set $\mathcal{A} \subseteq \Pi_k(U)\times\mathbb{N}$
 and a partition $q\in \Pi_k(U)$, we define
\[
\opt_k(q, \mathcal{A})=\min
 \left\{w\mid\exists (p,w)\in\mathcal{A}: p\sqcup q\in\Gamma_k(U)\right\}.
\]
If a subset $\mathcal{A}'$ of $\mathcal A$ satisfies
 $\opt_k(q, \mathcal{A}') = \opt_k(q, \mathcal{A})$ for every $q\in \Pi_k(U)$,
 then
 we define that $\mathcal{A}'$ \emph{$k$-represents} $\mathcal{A}$.
\end{definition}
\begin{definition}[Preserving $k$-representation]
\label{def:preserving representation}
Let
$U'\supseteq\{s_1,\ldots,s_k\}$ and $Z$ be sets, and
 let $f \colon 2^{\Pi_k(U)\times\mathbb{N}} \times Z
 \rightarrow 2^{\Pi_k(U')\times\mathbb{N}}$ be a function.
If,
 for every combination of
 $\mathcal{A}\subseteq \Pi_k(U)\times\mathbb{N}$,
 $\mathcal{A}'\subseteq\mathcal A$
 that $k$-represents $\mathcal{A}$, and $z\in Z$,
 $f(\mathcal{A}',z)$ also $k$-represents $f(\mathcal{A}, z)$,
 then we define that
 \emph{$f$ preserves $k$-representation}.
\end{definition}

\begin{lemma}
\label{lm:Representation}
Operators
 $\union$, $\ins$, $\shift$, $\glue$, $\proj$, $\join$
 in Definition~\ref{df:Operators}
 preserves $k$-representation.
\end{lemma}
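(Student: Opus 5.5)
We follow the proof scheme of the corresponding lemma in \cite{BCKN15}. For each operator we show that $\opt_k$ of the output can be written in terms of $\opt_k$ of the input(s), evaluated at suitable partitions in $\Pi_k$. Once this is done, replacing $\mathcal A$ by a $k$-representing $\mathcal A'$ cannot change the output's $\opt_k$. One inequality is always immediate: the output on $\mathcal A'$ is contained (up to $\rmc$) in the output on $\mathcal A$, so $\opt_k(q,f(\mathcal A',z))\ge\opt_k(q,f(\mathcal A,z))$. Hence only the reverse inequality needs work. Throughout we must also check that the outputs stay inside $\Pi_k(U')\times\mathbb N$. If they did not, they would not be in the domain of Definition~\ref{def:preserving representation}. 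We also note that partitions outside $\Pi_k$ (joins merging two blocks containing distinct $s_i$, $s_j$) never contribute to $\opt_k$ anyway, since $\Gamma_k$ is closed upward only inside $\Pi_k$.

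The easy operators come first.

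For $\rmc$ and union, $\opt_k(q,\mathcal A\cuparrow\mathcal B)=\min(\opt_k(q,\mathcal A),\opt_k(q,\mathcal B))$.

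For shift, $\opt_k$ increases by exactly $w'$.

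For insert, fix $q\in\Pi_k(U\cup X)$. Then $p_{\uparrow U\cup X}\sqcup q\in\Gamma_k(U\cup X)$ holds if and only if $p\sqcup q'\in\Gamma_k(U)$, where $q'$ is obtained from $q$ by projecting to $U$ and then merging any blocks of $q$ linked through elements of $X$. Concretely, $q'=(q\sqcup\text{merges induced by }X)_{\downarrow U}$, which lies in $\Pi_k(U)$.

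For glue, $\hat U[\{u,v\}]\sqcup p_{\uparrow\hat U}\sqcup q=p_{\uparrow\hat U}\sqcup(q\sqcup\hat U[\{u,v\}])$. So glue reduces to insert applied with $q$ replaced by $q\sqcup\hat U[\{u,v\}]$. If this replaced partition is not in $\Pi_k$, the corresponding join is not in $\Gamma_k$ for any $p$, and both sides equal $\infty$.

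For project, given $q\in\Pi_k(\bar X)$, we use the characterization from \cite{BCKN15}. For $p$ satisfying the projection condition (no element of $X$ forms a singleton-in-$\bar X$ block), we have $p_{\downarrow\bar X}\sqcup q\in\Gamma_k(\bar X)$ if and only if $p\sqcup q_{\uparrow U}\in\Gamma_k(U)$. The point is that each $X$-element lies in a block containing some $\bar X$ element, so lifting adds no block. Note that $q_{\uparrow U}\in\Pi_k(U)$, because the added singletons refine any $\Gamma_k$ partition.

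Join is the main obstacle. We have $(p_{\uparrow\hat U}\sqcup r_{\uparrow\hat U})\sqcup q\in\Gamma_k(\hat U)$ if and only if $p_{\uparrow\hat U}\sqcup(r_{\uparrow\hat U}\sqcup q)\in\Gamma_k(\hat U)$. For fixed $(r,w_2)\in\mathcal B$, this reduces to the insert case with the partition $r_{\uparrow\hat U}\sqcup q$. That partition is in $\Pi_k(\hat U)$ unless it already merges two $s_i$'s, in which case it contributes nothing. Minimizing over $r$ and then symmetrizing, i.e.\ first replacing $\mathcal A$ by $\mathcal A'$ and then $\mathcal B$ by $\mathcal B'$, gives the claim.

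The delicate point, for both join and insert/glue, is that $\Gamma_k$ requires \emph{exactly} $k$ blocks. We must verify that adding the singletons of $\hat U\setminus U$ and then projecting back preserves both the block count and the separation of $s_1,\dots,s_k$. This holds because every block of a $\Gamma_k$ partition contains some $s_i$, and all $s_i$ lie in $U$, since every set here contains $\{s_1,\ldots,s_k\}$.
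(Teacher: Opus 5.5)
\textbf{Gap in the insert step.} Your plan matches the paper's: write $\opt_k$ of each output as $\opt_k$ of the input at a transformed partition, and route glue and join through insert. However, the insert step, on which glue and join rest, asserts a false equivalence. Every element of $X$ is a singleton of $p_{\uparrow U\cup X}$, so the ``merges induced by $X$'' are vacuous and your $q'$ is just $q_{\downarrow U}$.

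The claim $p_{\uparrow U\cup X}\sqcup q\in\Gamma_k(U\cup X)\iff p\sqcup q_{\downarrow U}\in\Gamma_k(U)$ fails whenever $q$ has a block $b\subseteq X$. This is allowed, since $q$ is only required to be in $\Pi_k(U\cup X)$. Then $b$ is itself a block of $p_{\uparrow U\cup X}\sqcup q$ containing no $s_i$, so the left side is false for every $p$, while the right side can hold. For example, take $k=1$, $U=\{s_1\}$, $X=\{e\}$, $q=\{\{s_1\},\{e\}\}$, $p=\{\{s_1\}\}$.

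Your closing paragraph does not cover this. ``Every block of a $\Gamma_k$ partition contains some $s_i\in U$'' only yields $\Rightarrow$. For $\Leftarrow$ you would need every block of $q$ to meet $U$, and $q$ need not be in $\Gamma_k$. Since $\Leftarrow$ is exactly what lifts a witness $p'\in\mathcal A'$ back into $\ins(X,\mathcal A')$, the issue matters even for your one-sided inequality.

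The repair is the case split the paper makes (its Case~1, $\{e\}\in q$). If some block of $q$ lies inside $X$, then $\opt_k(q,\ins(X,\mathcal C))=\infty$ for every $\mathcal C$ and there is nothing to prove. Otherwise every block of $p_{\uparrow U\cup X}\sqcup q$ meets $U$, the equivalence holds, and $\opt_k(q,\ins(X,\mathcal C))=\opt_k(q_{\downarrow U},\mathcal C)$.

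\textbf{Smaller omission in project.} Your equivalence is stated only for $p$ satisfying the projection condition. To move a witness $(p',w')\in\mathcal A'$ for $\opt_k(q_{\uparrow U},\mathcal A')$ into $\proj(X,\mathcal A')$, you must also know that $p'$ satisfies that condition. It does: a block of $p'$ contained in $X$ stays a block of $p'\sqcup q_{\uparrow U}$, because all of $X$ is singleton in $q_{\uparrow U}$. That block contains no $s_i$, so $p'\sqcup q_{\uparrow U}\notin\Gamma_k(U)$. The paper uses this implicitly when it drops the clause $\{e\}\notin p$ in its formula $\opt_k(q,\proj(\{e\},\mathcal C))=\opt_k(q_{\uparrow U},\mathcal C)$.

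With these two additions your argument is complete. Its remaining differences from the paper are harmless or favorable. You handle insert for arbitrary $X$ rather than one element at a time, and you replace both arguments of join. You also note explicitly that glue and join can output partitions outside $\Pi_k$ (merging some $s_i$ and $s_j$), which never contribute to $\opt_k$ --- a point the paper passes over.
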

\begin{proof}
The lemma is proved similarly to the lemma for $k=1$ in \cite{BCKN15}.
Suppose that for a weighted partition set
 $\mathcal{A}\subseteq\Pi_k(U)\times\mathbb N$,
 a subset $\mathcal{A}'$ of $\mathcal A$ $k$-represents $\mathcal{A}$.
\begin{description}
\item[Union ($\mathcal A\cuparrow\mathcal B$)]
For any $\mathcal{B}\subseteq\Pi_k(U)\times\mathbb{N}$ and $q\in\Pi_k(U)$,
 it follows that
\[
\begin{split}
\opt_k(q,\mathcal{A}'\cuparrow\mathcal{B})
& =\min\left\{\opt_k(q,\mathcal{A}'),\opt_k(q,\mathcal{B})\right\}\\
& =\min\left\{\opt_k(q,\mathcal{A}),\opt_k(q,\mathcal{B})\right\}
 =\opt_k(q,\mathcal{A}\cuparrow\mathcal{B}).
\end{split}
\]
Therefore, $\union$ preserves $k$-representation.
\item[Insert ($\ins(X,\mathcal A)$)]
We may assume that $X$ consists of a single element $e\notin U$.
For otherwise,
 there exists a partition $\{Y,Z\}$ of $X$, and it follows that
$\ins(Y\cup Z, \mathcal{A}) = \ins(Y, \ins(Z, \mathcal{A}))$.
Let
 $p\in \Pi_k(U)$ and
 $q\in \Pi_k(U\cup \{e\})$.
\begin{enumerate}
\item
If $\{e\} \in q$, then
 $\{e\} \in p_{\uparrow U\cup\{e\}} \sqcup q$.
Therefore,
 $p_{\uparrow U\cup\{e\}} \sqcup q$ has
 either at least $k+1$ blocks
 or a block containing both
 $s_i$ and $s_j$ for some $i\neq j$.
In either case,
 $p_{\uparrow U\cup\{e\}} \sqcup q$ is not contained in $\Gamma_k(U)$.
This means that
 $\opt_k(q, \ins(\{e\}, \mathcal{A})) =
 \opt_k(q, \ins(\{e\}, \mathcal{A}')) = \infty$.
\item
If $\{e\} \notin q$, then
 $p_{\uparrow U\cup\{e\}}\sqcup q$ does not contain
 $e$ as the singleton block $\{e\}$.
Therefore,
 it follows that
 $\nblock{p_{\uparrow U\cup\{e\}} \sqcup q}= \nblock{p\cup q_{\downarrow U}}$,
 and that for any $\mathcal{C}\subseteq \Pi_k(U)\times\mathbb{N}$,
\[
\begin{split}
\opt_k(q,\ins(\{e\},\mathcal{C}))
& =\min\left\{w\mid\exists (p,w)\in\mathcal{C}:
 p_{\uparrow U\cup\{e\}}\sqcup q\in\Gamma_k(U\cup\{e\})\right\}\\
& =\min\left\{w\mid\exists (p,w)\in\mathcal{C}:
 p\sqcup q_{\downarrow U}\in \Gamma_k(U)\right\}
 =\opt_k(q_{\downarrow U}, \mathcal{C}).
\end{split}
\]
Hence, $\ins$ preserves $k$-representation as
\[
\opt_k(q,\ins(\{e\},\mathcal{A}'))
 =\opt_k(q_{\downarrow U},\mathcal{A}')
 =\opt_k(q_{\downarrow U},\mathcal{A})
 =\opt_k(q,\ins(\{e\},\mathcal{A})).
\]
\end{enumerate}
\item[Shift ($\shift(w',\mathcal A)$)]
For any $q \in \Pi_k(U)$ and $w' \in \mathbb{N}$, it follows that
\[
\opt_k(q,\shift(w',\mathcal{A}'))
 =w'+\opt_k(q,\mathcal{A}')
 =w'+\opt_k(q,\mathcal{A})
 =\opt_k(q,\shift(w',\mathcal{A})).
\]
Therefore, $\shift$ preserves $k$-representation.
\item[Glue ($\glue(uv,\mathcal A)$, $\gluew(uv,\mathcal A)$)]
We may assume $u, v \in U$.
For otherwise, it follows that
 $\glue(uv, \mathcal{A}) = \glue(uv, \ins(\{u, v\}\setminus U, \mathcal{A}))$.
Here, we note that $\ins$ preserves $k$-representation.
For any $\mathcal{C}\subseteq \Pi_k(U)\times\mathbb{N}$ and
 $q\in \Pi_k(U)$, it follows that
\[
\begin{split}
\opt_k(q,\glue(uv, \mathcal{C}))
& =\min\left\{w\mid\exists (p,w)\in\mathcal{C}: (p\sqcup U[\{u,v\}]) 
 \sqcup q\in\Gamma_k(U)\right\}\\
& =\min\left\{w\mid\exists (p,w)\in\mathcal{C}: p\sqcup 
 (q\sqcup U[\{u,v\}])\in\Gamma_k(U)\right\}\\
& =\opt_k(q\sqcup U[\{u,v\}],\mathcal{C}).
\end{split}
\]
Hence, $\glue$ preserves $k$-representation as
\[
\begin{split}
\opt_k(q, \glue(uv, \mathcal{A}'))
& =\opt_k(q\sqcup U[\{u,v\}],\mathcal{A}')\\
& =\opt_k(q\sqcup U[\{u,v\}],\mathcal{A})
 =\opt_k(q,\glue(uv,\mathcal{A})).
\end{split}
\]
Since both $\shift$ and $\glue$ preserve $k$-representation,
 $\gluew$ also preserves $k$-representation.
\item[Project ($\proj(X,\mathcal A)$)]
We may assume that $X$ consists of a single element $e\in U$.
For otherwise, there exists a partition $\{Y,Z\}$ of $X$, and it follows that
$\proj(X, \mathcal{A})
 = \proj(Y\cup Z, \mathcal{A})
 = \proj(Y, \proj(Z, \mathcal{A}))$.
For any $\mathcal{C}\subseteq \Pi_k(U)\times\mathbb{N}$ and
 $q\in \Pi_k(U\setminus \{e\})$,
 it follows that
\[
\begin{split}
\opt_k(q,\proj(\{e\},\mathcal{C}))
& =\min\left\{w\mid\exists (p,w)\in\mathcal{C}:
 p_{\downarrow U\setminus\{e\}}\sqcup q\in\Gamma_k(U\setminus\{e\})
 \wedge \{e\}\notin p\right\}\\
& =\min\left\{w\mid\exists (p,w)\in\mathcal{C}: p\sqcup q_{\uparrow U}
 \in\Gamma_k(U)\right\}
 =\opt_k(q_{\uparrow U},\mathcal{C}).
\end{split}
\]
Therefore, $\proj$ preserves $k$-representation as
\[
\opt_k(q,\proj(\{e\},\mathcal{A}'))
 =\opt_k(q_{\uparrow U},\mathcal{A}')
 =\opt_k(q_{\uparrow U},\mathcal{A})
 =\opt_k(q, \proj(\{e\}, \mathcal{A})).
\]
\item[Join ($\join(\mathcal A,\mathcal B)$)]
We may assume
 $\mathcal{A},\mathcal{B} \subseteq \Pi_k(\hat{U})\times\mathbb{N}$.
For otherwise, it follows that
$\join(\mathcal{A}, \mathcal{B})
 = \join(\ins(U'\setminus U, \mathcal{A}), \ins(U\setminus U', \mathcal{B}))$.
For any $\mathcal{B}\subseteq\Pi_k(\hat{U})\times\mathbb{N}$ and
 $r\in\Pi_k(\hat{U})$, it follows that
\[
\begin{split}
\opt_k(r, \join(\mathcal{A}', \mathcal{B}))
& = \min\bigl\{w_a + w_b\mid\exists (p,w_a)\in\mathcal{A}':\exists (q,w_b)\in\mathcal{B}:
 p\sqcup q\sqcup r \in \Gamma_k(\hat{U}) \bigr\}\\
& =\min\left\{\opt_k(q\sqcup r, \mathcal{A}') + w_b \mid (q,w_b)\in\mathcal{B}\right\}\\
& =\min\left\{\opt_k(q\sqcup r, \mathcal{A}) + w_b \mid (q,w_b)\in\mathcal{B}\right\}\\
& =\min\bigl\{w_a + w_b \mid\exists (p,w_a)\in\mathcal{A}:\exists (q,w_b)\in\mathcal{B}: p\sqcup q\sqcup r \in \Gamma_k(\hat{U})\bigr\} \\
& =\opt_k(r, \join(\mathcal{A}, \mathcal{B})).
\end{split}
\]
Therefore, $\join$ preserves $k$-representation.
\end{description}
\end{proof}

We then present an algorithm \alg{$k$-reduce} that,
 for any $\mathcal{A}\subseteq \Pi_k(U)\times\mathbb{N}$,
 finds a small subset $\mathcal{A}'\subseteq \mathcal{A}$
 that $k$-represents $\mathcal{A}$.
The algorithm is a natural generalization of the algorithm presented for $k=1$
 in \cite{BCKN15}.
The following proposition will be used in our proof as well as in \cite{BCKN15}.
\begin{proposition}[\cite{BCKN15}]
\label{pr:Basis}
There exists an algorithm that,
 for an $n\times m$ matrix ($m\leq n$) on $\mathbb{F}_2$
 such that each row is weighted by a nonnegative integer,
 finds a basis of the row space with the minimum sum of weights
 in 
 $\mathcal{O}(nm^{\omega - 1})$ time, where
 $\omega$ is the matrix multiplication exponent.
\end{proposition}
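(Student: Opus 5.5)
The plan is to combine the matroid greedy algorithm with a blocked fast rank-profile computation. The rows of the matrix, as vectors in $\mathbb{F}_2^m$, form a linear matroid. For any matroid, a minimum-weight basis is obtained greedily: scan the elements in nondecreasing order of weight and keep an element whenever it is independent of the elements already kept. So the first step is to sort the rows by weight, breaking ties arbitrarily. After sorting, the task is to find the \emph{row rank profile}: the set of indices $i$ such that row $i$ is not in the span of rows $1,\ldots,i-1$. Those rows form a basis of the row space, and by the exchange argument for matroids it has minimum total weight.

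To compute the row rank profile in $\mathcal{O}(nm^{\omega-1})$ time, I would process the sorted rows in $\lceil n/m\rceil$ consecutive chunks of at most $m$ rows each. Throughout, I maintain the rows already selected as a basis $B$ of at most $m$ vectors, stored in reduced row echelon form with known pivot columns.

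For each new chunk $C$, the update has two parts.
\begin{enumerate}
\item Reduce $C$ against $B$: eliminate the pivot columns of $B$ from every row of $C$. This is one product of an $m\times m$ matrix with an $m\times m$ matrix, so it costs $\mathcal{O}(m^\omega)$.
\item Compute the row rank profile of the reduced chunk, which is an $m\times m$ matrix, using a fast LSP/LQUP-type decomposition (Ibarra--Moran--Hui). This decomposition yields the rank profile in $\mathcal{O}(m^\omega)$ time. Add the profile rows, with their original indices, to the selected set, and bring $B$ back into reduced echelon form, again in $\mathcal{O}(m^\omega)$.
\end{enumerate}
A row of the chunk is independent of all earlier rows exactly when its reduced version is independent of the earlier reduced rows of the same chunk. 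Hence the chunk's profile is exactly the set of greedy choices for that chunk. The total cost is $\mathcal{O}(n/m)\cdot\mathcal{O}(m^\omega)=\mathcal{O}(nm^{\omega-1})$.

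The main obstacle is the fast rank-profile subroutine. Plain Gaussian elimination gives only $\mathcal{O}(m^3)$ per chunk. Moreover, a generic fast rank or inverse computation does not identify \emph{which} rows are the lexicographically earliest independent ones, because it is not order-respecting. I would rely on the recursive LSP decomposition of Ibarra, Moran, and Hui. It splits the rows into a top half and a bottom half, recursively eliminates the top half, updates the bottom half by a matrix product, and recurses on the bottom half. Because it never reorders rows, it respects the row order, so the rank profile can be read off directly. Its running time satisfies $T(m)=2T(m/2)+\mathcal{O}(m^\omega)$, giving $\mathcal{O}(m^\omega)$ over any field, in particular $\mathbb{F}_2$.

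A minor point is the initial sorting. It costs $\mathcal{O}(n\log n)$, which is dominated by $nm^{\omega-1}$ under the usual conventions, for example when weights are polynomially bounded and bucket-sorted. As in \cite{BCKN15}, I would absorb it under that assumption.
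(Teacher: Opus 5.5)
The paper gives no proof of this proposition; it only imports it from \cite{BCKN15}. Most of your argument is sound:
\begin{itemize}
\item the greedy characterization of a minimum-weight basis of the linear matroid on the rows;
\item reducing a chunk $C$ against the current basis in reduced echelon form via the single product $C+C[\cdot,P]\,B$;
\item the observation that a chunk row is then independent of all earlier rows if and only if its reduced version is independent of the earlier reduced rows of the chunk;
\item an order-respecting (LSP-type) rank-profile computation in $\mathcal O(m^\omega)$ per chunk.
\end{itemize}

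The gap is the global sort. The proposition claims $\mathcal O(nm^{\omega-1})$ for arbitrary nonnegative integer weights and every $m\leq n$. When $m^{\omega-1}=o(\log n)$, for example $m=\mathcal O(1)$ where the claimed bound is linear in $n$, a comparison sort of $n$ arbitrary keys already exceeds the budget. So what you have proved is $\mathcal O(n\log n+nm^{\omega-1})$. Restricting to polynomially bounded weights and bucket sort changes the statement, and nothing in it grants that assumption.

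The missing idea is that no global sort is needed:
\begin{itemize}
\item Fix the total order $\prec$ given by (weight, index).
\item Cut the rows, in \emph{arbitrary} order, into $\lceil n/m\rceil$ chunks.
\item Maintain the greedy basis $B$ (at most $m$ original rows) of the rows seen so far.
\item For each chunk $C$, sort the at most $2m$ rows of $B\cup C$ by $\prec$ in $\mathcal O(m\log m)$ time. Let the new $B$ be the row rank profile of this $2m\times m$ matrix; your own reduce-then-profile step, applied to its two $m\times m$ halves, computes it in $\mathcal O(m^\omega)$.
\end{itemize}

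Correctness rests on one matroid fact you would have to add. If greedy on $E_1$ rejects $e$, then $e$ is spanned by the rows chosen from $E_1$ that precede it. Hence, for every $x$, the rows of $E_1\cup E_2$ preceding $x$ span the same space as the rows of $\mathrm{greedy}(E_1)\cup E_2$ preceding $x$. This gives $\mathrm{greedy}(E_1\cup E_2)=\mathrm{greedy}(\mathrm{greedy}(E_1)\cup E_2)$. The total cost becomes $\mathcal O\bigl((n/m)(m\log m+m^\omega)\bigr)=\mathcal O(nm^{\omega-1})$, with no assumption on the weights.

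For the paper's only use of the proposition, in Lemma~\ref{lm:k-reduce}, the gap is harmless. There the sets are $\rmc$-reduced, so $\log|\mathcal A|\leq\log|\Pi(U)|=\mathcal O(|U|\log|U|)$, and the sorting overhead is absorbed by the $|U|^{\mathcal O(1)}$ factor. The gap matters only for the proposition as stated.
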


\begin{lemma}[\alg{$k$-reduce}]
\label{lm:k-reduce}
Let $\mathcal{A}\subseteq\Pi_k(U)\times\mathbb{N}$.
If $k$ is even, then
 $\mathcal A'\subseteq\mathcal A$ that
 $k$-represents $\mathcal A$ and has a size at most $k^{|U|-k}$
 can be found in
 $|\mathcal A|k^{(\omega - 1)(|U| - k)}|U|^{\mathcal{O}(1)}$ time.
If $k$ is odd, then
 $\mathcal A'\subseteq\mathcal A$ that
 $k$-represents $\mathcal A$ and has a size at most $(k+1)^{|U|-k}$
 can be found in
 $|\mathcal A|(k + 1)^{(\omega - 1)(|U| - k)}|U|^{\mathcal{O}(1)}$ time.
\end{lemma}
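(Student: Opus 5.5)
The plan is to follow the cut-matrix argument of \cite{BCKN15}, replacing two-colourings by colourings with $k$ (or $k+1$) colours in which $s_i$ is pinned to colour $i$. Let $c=k$ if $k$ is even and $c=k+1$ if $k$ is odd, so that $c$ is even in both cases. Write $S=\{s_1,\ldots,s_k\}$. Let $\mathcal{L}$ be the set of colourings $\lambda\colon U\to[c]$ with $\lambda(s_i)=i$ for all $i$. Then $|\mathcal{L}|=c^{|U|-k}$. We say $\lambda$ is \emph{consistent} with a partition $p$ if $\lambda$ is constant on every block of $p$.

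\textbf{Step 1 (counting identity).} Define the matrix $M$ over $\mathbb{F}_2$, indexed by $\Pi_k(U)\times\Pi_k(U)$, by $M[p,q]=1$ iff $p\sqcup q\in\Gamma_k(U)$. Define $C$, indexed by $\Pi_k(U)\times\mathcal{L}$, by $C[p,\lambda]=1$ iff $\lambda$ is consistent with $p$. I claim $M=CC^{\mathsf T}$ over $\mathbb{F}_2$.

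A colouring $\lambda$ is consistent with both $p$ and $q$ iff it is consistent with $p\sqcup q$. Consider the number of such colourings in $\mathcal{L}$.
\begin{itemize}
\item If some block of $p\sqcup q$ contains $s_i$ and $s_j$ with $i\neq j$, this number is $0$.
\item Otherwise each block containing some $s_i$ has its colour forced, and each of the remaining $f$ blocks may receive any of the $c$ colours. The number is then $c^{f}$.
\end{itemize}
Since $c$ is even, $c^f$ is odd iff $f=0$. Also, $p\sqcup q$ separates all the $s_i$ and has no free block iff $p\sqcup q\in\Gamma_k(U)$. So $(CC^{\mathsf T})[p,q]\equiv M[p,q]\pmod 2$.

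\textbf{Step 2 (algorithm).} Build the $|\mathcal{A}|\times c^{|U|-k}$ submatrix $C_{\mathcal{A}}$ of rows $p$ with $(p,w)\in\mathcal{A}$, weighting each row by $w$. We may first apply $\rmc$, so that each partition appears once. If $|\mathcal{A}|\le c^{|U|-k}$, return $\mathcal{A}$ itself. Otherwise apply Proposition~\ref{pr:Basis} to find a minimum-weight basis of the row space, and let $\mathcal{A}'$ be the corresponding rows. Then $|\mathcal{A}'|\le\operatorname{rank}C_{\mathcal{A}}\le c^{|U|-k}$. Building the matrix costs $|\mathcal{A}|c^{|U|-k}|U|^{\mathcal O(1)}$, and the basis computation costs $\mathcal{O}(|\mathcal{A}|c^{(\omega-1)(|U|-k)})$. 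This gives the stated bounds.

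\textbf{Step 3 (correctness; the main point).} Fix $q\in\Pi_k(U)$. Since $\mathcal{A}'\subseteq\mathcal{A}$, we have $\opt_k(q,\mathcal{A}')\ge\opt_k(q,\mathcal{A})$, so it suffices to prove the reverse inequality. Let $(p,w)\in\mathcal{A}$ attain $\opt_k(q,\mathcal{A})$, so $M[p,q]=1$.

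Write the row $C[p]$ as a sum $\sum_{p'\in\mathcal{B}}C[p']$ for some $\mathcal{B}\subseteq\mathcal{A}'$. Multiplying by the column $C^{\mathsf T}[\cdot,q]$ gives $1=M[p,q]=\sum_{p'\in\mathcal{B}}M[p',q]$. Hence the set $\mathcal{B}_q=\{p'\in\mathcal{B}: M[p',q]=1\}$ is nonempty.

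Suppose, for contradiction, that every $p'\in\mathcal{B}_q$ has weight greater than $w$. Pick any $p'\in\mathcal{B}_q$. Its coefficient in the expansion of $C[p]$ is nonzero, so replacing $p'$ by $p$ in the basis still gives a basis. That basis has strictly smaller weight, contradicting minimality. Therefore some $p'\in\mathcal{A}'$ with $p'\sqcup q\in\Gamma_k(U)$ has weight at most $w$, which gives $\opt_k(q,\mathcal{A}')\le\opt_k(q,\mathcal{A})$.

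The only genuinely new ingredient is the parity identity in Step 1. The choice of an even number of colours, with the $s_i$ pinned, is exactly what makes the count detect ``exactly $k$ blocks, each containing one $s_i$''. The rest is the standard exchange argument.
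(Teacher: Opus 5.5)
Your proof is correct and follows the paper's argument: the pinned colourings $\lambda\colon U\to[c]$ with $c\in\{k,k+1\}$ even are exactly the paper's column index sets $\Gamma_k(U)$ (even $k$) and $\Gamma'_{k+1}(U)$ with a possibly empty extra block (odd $k$), which give the same factorization $\mathcal M=\mathcal C\mathcal C^\top$ over $\mathbb F_2$ via the parity of $c^{f}$, followed by the same minimum-weight-basis exchange argument. Your uniform handling of both parities, the preliminary $\rmc$ to remove duplicate partitions, and returning $\mathcal A$ unchanged when $|\mathcal A|\le c^{|U|-k}$ (so that Proposition~\ref{pr:Basis} applies with at least as many rows as columns) are small clarifications the paper leaves implicit.
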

\begin{proof}
The lemma is proved similarly to the lemmas for $k=1$ in \cite{BCKN15}.

We assume $k$ is even (the case for an odd $k$ will be discussed
 at the end of the proof),
 and
 define a $\Pi_k(U)\times\Pi_k(U)$ matrix $\mathcal{M}$
 and
 a $\Pi_k(U)\times \Gamma_k(U)$ matrix $\mathcal{C}$ on $\mathbb{F}_2$
 as follows:
\[
\mathcal{M}[p,q]=[p\sqcup q \in \Gamma_k(U)]\text{ and }
\mathcal{C}[p,c]=[c\sqsubseteq p],
\]
 where
 the brackets in the right hand sides
 denote the Iverson bracket, i.e.,
 for a proposition $P$,
 $[P]=1$ if $P$ is true, and $[P]=0$ if $P$ is false.
The weighted partition set $\mathcal A$ is associated
 with a subspace of the row space of the matrix $\mathcal C$, together with
 weights assigned to rows in the subspace.
Specifically, we suppose that
 $\mathcal A=\{(p_1,w_1),\ldots,(p_\ell,w_\ell)\}$ and
 $A=\{p_1,\ldots,p_\ell\}$.
Then, $\mathcal A$ can be regarded as
 the row space $\mathcal{C}[A,\cdot\,]$ such that each row
 $\mathcal C[p_i,\cdot\,]$ is weighted by $w_i$.
Let $X\subseteq A$ be a basis of the row space $\mathcal{C}[A,\cdot\,]$
 minimizing the sum of weights $\sum_{p_i\in X}w_i$, and
 $\mathcal A'=\{(p_i,w_i)\mid p_i\in X\}$.
We prove that
 $\mathcal A'$ $k$-represents $\mathcal A$
 through Claims \ref{cl:factorization} and~\ref{cl:k-representation}.
The size of $\mathcal A'$ and
 the running time for computing $\mathcal A'$
 are estimated in Claim~\ref{cl:Basis}.
\begin{claim}
\label{cl:factorization}
$\mathcal M$ can be factorized into $\mathcal C\mathcal C^\top$.
\end{claim}
\begin{claimproof}
It follows that
\[
(\mathcal{C}\mathcal{C}^\top)[p,q]
 =\sum_{c \in \Gamma_k(U)} [c \sqsubseteq p] \cdot [c \sqsubseteq q] \\
 =\sum_{c \in \Gamma_k(U)} [c \sqsubseteq p \wedge c \sqsubseteq q]\\
 =\sum_{c \in \Gamma_k(U)} [c \sqsubseteq p \sqcup q].
\]
Here, we count the number of partitions $c\in\Gamma_k(U)$ such that
 $p\sqcup q$ is a refinement of $c$.
If $p\sqcup q\notin\Pi_k(U)$, i.e.,
 if there is a block in $p\sqcup q$ containing $s_i, s_j$ for some $i\neq j$,
 then the number of such partitions $c$ is $0$.
If $p\sqcup q\in\Pi_k(U)$, then
 the block of a partition $c\sqsubseteq p\sqcup q$ containing $s_i$
 is a superset of the block of $p\sqcup q$ containing $s_i$
 and possibly of any block of $p\sqcup q$ containing none of
 elements $s_1,\ldots,s_k$.
Therefore, the number of such partitions $c$ is $k^{\nblock{p\sqcup q}-k}$.
If $p\sqcup q\in\Gamma_k(U)$, then
 because $\nblock{p\sqcup q} = k$, this number is $k^{k - k} = 1$.
If $p\sqcup q\in\Pi_k(U)\setminus\Gamma_k(U)$, then
 because $\nblock{p\sqcup q} > k$, this number is even for an even $k$.
We thus have
\[
\begin{split}
(\mathcal{C}\mathcal{C}^\top)[p,q]
&=\sum_{c \in \Gamma_k(U)} [c \sqsubseteq p \sqcup q]\\
&=
\begin{cases}
k^{\nblock{p\sqcup q} - k} & p \sqcup q\in\Pi_k(U)\\
0 & \text{otherwise}
\end{cases}\\
& \equiv
[p\sqcup q \in \Gamma_k(U)]\\
& = \mathcal{M}[p, q].
\end{split}
\]
\end{claimproof}
\begin{claim}
\label{cl:k-representation}
$\mathcal A'$ $k$-represents $\mathcal A$.
\end{claim}
\begin{claimproof}
We assume for contradiction that
 for some $q\in\Pi_k(U)$,
 $\opt_k(q,\mathcal A')\neq\opt_k(q,\mathcal A)$.
Since $\mathcal A'\subseteq\mathcal A$,
 it follows that $\opt_k(q,\mathcal A')>\opt_k(q,\mathcal A)$, which implies
 the existence of a weighted partition $(p,w)\in\mathcal A\setminus\mathcal A'$
 such that
 $p\sqcup q\in\Gamma_k(U)$ and $w<\opt_k(\mathcal A')$.
Since $X\subseteq A$
 is a basis of the row space $\mathcal C[A,\cdot\,]$,
 the row $\mathcal C[p,\cdot\,]$ can be represented as a linear combination of
 the rows of $\mathcal C[Y,\cdot\,]$ for some $Y\subseteq X$, i.e.,
 $\mathcal C[p,\cdot\,]\equiv\sum_{y\in Y}\mathcal C[y,\cdot\,]$.
Since
 $p\sqcup q\in\Gamma_k(U)$ and
 $\mathcal M=\mathcal C\mathcal C^\top$ by Claim~\ref{cl:factorization},
 it follows that
\[
1=\mathcal M[p,q]
\equiv\mathcal C[p,\cdot\,]\mathcal C^\top[\,\cdot,q]
\equiv\left(\sum_{y\in Y}\mathcal C[y,\cdot\,]\right)\mathcal C^\top[\,\cdot,q]\\
\equiv\sum_{y\in Y}\left(\mathcal C[y,\cdot\,]\mathcal C^\top[\,\cdot,q]\right)
\equiv\sum_{y\in Y}\mathcal M[y,q].
\]
Hence, there exists $p_i\in A$ such that
 $p_i\in Y$ and
 $\mathcal M[p_i,q]=1$, i.e., $p_i\sqcup q\in\Gamma_k(U)$.
Since $w<\opt_k(\mathcal A')$, $w_i>w$ follows.

The row $\mathcal C[p_i,\cdot\,]$ can also be represented as
 a linear combination of the rows of
 $\mathcal C[(Y\setminus\{p_i\})\cup\{p\},\cdot\,]$, since
 $\mathcal C[p,\cdot\,]\equiv\sum_{y\in Y}\mathcal C[y,\cdot\,]$.
Therefore,
 $(X\setminus\{p_i\})\cup\{p\}$ is also a basis of the row space
 $\mathcal C[A,\cdot\,]$.
This basis has the weight less than the weight of $X$ by $w_i-w$,
 which contradicts that
 $X$ is a basis with the minimum sum of weights.
\end{claimproof}
\begin{claim}
\label{cl:Basis}
$\mathcal A'$ has a size at most $k^{|U|-k}$ and
 can be found in
 $|\mathcal A|k^{(\omega - 1)(|U| - k)}|U|^{\mathcal{O}(1)}$ time.
\end{claim}
\begin{claimproof}
Because the rank of $\mathcal C$ is at most the number of columns,
 $|\Gamma_k(U)|$,
 the size of $\mathcal A'$ is at most $k^{|U|-k}$.
Hence, by Proposition~\ref{pr:Basis},
 $\mathcal A'$ can be obtained in
 $|\mathcal A|k^{(\omega-1)(|U|-k)}$ time.
\end{claimproof}

It remains to prove the lemma for the case that $k$ is odd.
For an odd $k$,
 we replace the definition of $\mathcal C$ with
 a $\Pi_k(U)\times\Gamma'_{k+1}(U)$ matrix
 $\mathcal C[p,q]=[c\sqsubseteq p]$, where
 $\Gamma'_{k+1}(U)$ is the set of partitions
 consisting of $k+1$ blocks $b_1$, \ldots, $b_{k+1}$ such that
 $s_i\in b_i$ for $1\leq i \leq k$, and that
 $b_{k+1}$ is exceptionally allowed to be an empty set.
As another exception, we also write
 $c\sqsubseteq r$ for a partition $r\in\Gamma_k(U)$ if
 $c$ contains an empty block and $r$ is a refinement of
 the partition consisting of the nonempty blocks of $c$.
We note that the definition of $\mathcal M$,
 $\mathcal M[p,q]=[p\sqcup q\in\Gamma_k(U)]$
 is not changed.

This replacement affects Claims \ref{cl:factorization} and~\ref{cl:Basis}.
In Claim~\ref{cl:factorization},
 the number of partitions $c\in\Gamma'_{k+1}(U)$ such that
 $c\sqsubseteq p\sqcup q$ is replaced with $(k+1)^{\nblock{p\sqcup q}-k}$.
This number is $(k+1)^{k-k} = 1$ if $p\sqcup q\in\Gamma_k(U)$,
 and an even number if $p\sqcup q\in\Pi_k(U)\setminus\Gamma_k(U)$.
Thus Claim~\ref{cl:factorization} holds for an odd $k$.
In Claim~\ref{cl:Basis},
 the size of $\mathcal A'$, i.e.,
 the rank of $\mathcal C$ is at most the number of columns
 $|\Gamma'_{k+1}(U)|=(k+1)^{|U|-k}$.
Thus, by Proposition~\ref{pr:Basis},
 the running time for computing $\mathcal A'$ is increased to
 $|\mathcal A|(k+1)^{(\omega-1)(|U|-k)}$.

Therefore, the proof of Lemma~\ref{lm:k-reduce} is completed.
\end{proof}

\subsection{Algorithm for \kDDP}
\label{ssc:k-Algorithm_tw}
The proposed algorithm first transforms a given tree decomposition
 of a given graph $G$
 into a nice tree decomposition $\mathcal T=(T,\{B_t\}_{t\in V(T)})$
 such that the root node has an empty bag
 using Proposition~\ref{pr:NiceTreeDecomposition}, and then
 adds all demand vertices
 $s_1$, \ldots, $s_k$, $t_1$, \ldots, $t_k$ to every bag.
Consequently,
 the width of $\mathcal T$ increases from $\tw$ to $\tw+2k$, and
 the root has a bag consisting only of the demand vertices.
By the addition of the demands,
 we can make a linear forest that is a partial solution of \kDDP{}
 contained in the graph $G_x=(V_x,E_x)$, which is
 induced by all the vertices and edges introduced by
 a node $x$ and its descendants,
 have demand vertices in the bag $B_x$ of the node $x$.
In particular, each connected component of the linear forest (which is a path)
 has at least one end-vertex in $B_x$.
Such a linear forest, $F_x$, is represented as a partition whose
 blocks consist of end-vertices, in $B_x$,
 of the connected components of $F_x$.
In addition, because the algorithm is based on the notions of
 $k$-representation and its preservation,
 we make every partition in the algorithm have at least $k$ blocks containing
 $s_1$, \ldots, $s_k$ separately.
The algorithm computes, by dynamic programming,
 the set of such partitions representing
 all linear forests $F_x$ that may be partial solutions for
 all possible combinations of indegree and outdegree
 ($(0,1)$ or $(1,0)$)
 of end-vertices, in $B_x$, of the connected components of $F_x$.
For computing the partition set, we only use operators
 in Definition~\ref{df:Operators},
 and then
 reduce the resulting partition set using \alg{$k$-reduce} in
 Lemma~\ref{lm:k-reduce}
 in order to achieve the desired running time.

In the following specific description,
 pairs $(i,o)$ of indegree $i$ and outdegree $o$ of a vertex frequently appear.
To avoid complication due to many parentheses,
 we denote the pair by $\delta_{io}$ and define the sum of pairs as
 $\delta_{io}+\delta_{i'o'}=\delta_{(i+i')(o+o')}$.
For a node $x$ and every
 $\mathbf{d}\in\{\delta_{00},\delta_{01},\delta_{10},\delta_{11}\}^{B_x}$,
 the partition set $A_x(\mathbf{d})$ representing
 all possible partial solutions is defined as follows.
\begin{align*}
A_x(\mathbf{d})
& = \biggl\{\left( p,\min_{X\in\mathcal{E}_x(p,\mathbf{d})}\omega(X)\right)\mid
 p\in\Pi_k\left(\mathbf{d}^{-1}(\delta_{01})\cup\mathbf{d}^{-1}(\delta_{10})
\cup\{s_1,\ldots,s_k\}\right)\\
&\qquad\qquad\qquad\qquad\qquad\quad\wedge
\mathcal{E}_x(p, \mathbf{d})\neq\emptyset
 \biggr\}\\
\begin{split}
\mathcal{E}_x(p, \mathbf{d})& =\bigl\{X \subseteq E_x \mid
 \forall v\in B_x: (\indeg{G[X]}(v), \outdeg{G[X]}(v)) = \mathbf{d}(v)\\
&\qquad\qquad\quad\wedge
 \forall v\in V_x\setminus B_x:
 (\indeg{G[X]}(v),\outdeg{G[X]}(v))\in\{\delta_{00},\delta_{11}\}\\
&\qquad\qquad\quad\wedge
 \text{$G[X]$ is a linear forest}\\
&\qquad\qquad\quad\wedge
\forall \{u,v\}\in p:
 \text{$G[X]$ has a connected component that is a $uv$-path}
\bigr\}
\end{split}
\end{align*}
The mapping $\mathbf{d}$ represents
 indegrees and outdegrees of vertices in the bag $B_x$ for a partial solution.
Any partition $p$ in $A_x(\mathbf{d})$ only contains
 vertices in $B_x$ that have
 degree $\delta_{01}$ or $\delta_{10}$
 in the partial solution,
 together with origins $s_1$, \ldots, $s_k$ in demands.
It is guaranteed by the definition of $\mathcal E_x$ that
 an edge set $X\in\mathcal E_x(p,\mathbf d)$
 induces a linear forest $G[X]$
 and that
 $p$ has blocks consisting of two end-vertices, in $B_x$,
 of the connected components of $G[X]$.
The weight of $X$ is denoted by $\omega(X)$.
Every time $A_x(\mathbf d)$ is computed, we reduce it to
 $A'_x(\mathbf d)=\text{\alg{$k$-reduce}}(A_x(\mathbf d))$.
The answer of \kDDP{} is \yes{} if,
 for the root node $r$ and
 $\mathbf{d}=\{
(s_1,\delta_{01}),(t_1,\delta_{10}),\ldots,(s_k,\delta_{01}),(t_k,\delta_{10})
\}$,
 there is a partition
 $\{\{s_1,t_1\},\ldots,\{s_k,t_k\}\}$ in $A'_r(\mathbf{d})$.
The minimum weight is obtained along with the partition.
Otherwise, the answer is \no.

The base and recursive steps 
 for computing $A_x(\mathbf{d})$ by dynamic programming
 is defined for each type of nodes as follows.

\subparagraph*{Leaf Node $x$}
We have the bag $B_x=\{s_1,t_1,\ldots,s_k,t_k\}$.
Because no edge is introduced, all the vertices in $B_x$ are isolated.
By the definition of $A_x(\mathbf{d})$,
 any partition in $A_x(\mathbf{d})$ needs to have each $s_i$ as
 the singleton block $\{s_i\}$.
We thus set as follows.
\[
A_x(\mathbf{d})=
\begin{cases}
\{( \{\{s_1\},\ldots,\{s_k\}\}, 0)\} 
& \forall v \in B_x:\mathbf{d}(v)=\delta_{00} \\
\emptyset & \text{otherwise}
\end{cases}
\]

\subparagraph*{Introduce Vertex $v$ Node $x$ with Child $y$}
We have the bag $B_x=B_y\cup\{v\}$.
Because no edge incident to $v$ is introduced, $v$ is isolated.
Therefore, for every
 $\mathbf{d}\in\{\delta_{00},\delta_{01},\delta_{10},\delta_{11}\}^{B_x}$,
 we can compute $A_x(\mathbf{d})$ as follows.
\[
A_x(\mathbf{d})=
\begin{cases}
A'_y(\mathbf{d}|_{B_y}) & \mathbf{d}(v)=\delta_{00}\\
\emptyset & \text{otherwise}
\end{cases}
\]

\subparagraph*{Forget Vertex $v$ Node $x$ with Child $y$}
We have the bag $B_x=B_y\setminus \{v\}$.
The vertex $v$ appears neither in $B_x$ nor in ancestor nodes of $x$.
Hence,
 for a linear forest $F$ in $G_x$ representing a partial solution,
 $v$ is either an internal vertex of a connected component of $F$
 or not contained in $F$.
Therefore, for every
 $\mathbf{d}\in\{\delta_{00},\delta_{01},\delta_{10},\delta_{11}\}^{B_x}$,
 we can compute $A_x(\mathbf{d})$ as follows.
\[
A_x(\mathbf{d}) = 
A'_y(\mathbf{d}[v\rightarrow \delta_{00}])
 \cuparrow A'_y(\mathbf{d}[v\rightarrow \delta_{11}])
\]
We note that since every bag has all demand vertices,
 these vertices are never forgotten.

\subparagraph*{Introduce Edge $uv$ Node $x$ with Child $y$}
We have the bag $B_x=B_y$.
For every
 $\mathbf{d}\in\{\delta_{00},\delta_{01},\delta_{10},\delta_{11}\}^{B_x}$,
 we compute $A_x(\mathbf{d})$ as follows.
\[
\begin{split}
A_x(\mathbf{d}) = A'_y(\mathbf{d})\cuparrow
&
\begin{cases}
\emptyset
& \mathbf{d}(u)=\delta_{00}\vee\mathbf{d}(v)=\delta_{00}\\
\gluew(uv,A'_y(\mathbf{d}[u\rightarrow\delta_{00},v\rightarrow\delta_{00}]))
& \mathbf{d}(u)=\delta_{01}\wedge\mathbf{d}(v)=\delta_{10}\\
\proj(\{u\},\gluew(uv,A'_y(\mathbf{d}[u\rightarrow\delta_{10},v\rightarrow\delta_{00}])))
& \mathbf{d}(u)=\delta_{11}\wedge\mathbf{d}(v)=\delta_{10}\\
\proj(\{v\},\gluew(uv,A'_y(\mathbf{d}[u\rightarrow\delta_{00},v\rightarrow\delta_{01}])))
& \mathbf{d}(u)=\delta_{01}\wedge\mathbf{d}(v)=\delta_{11}\\
\proj(\{u,v\},\gluew(uv,A'_y(\mathbf{d}[u\rightarrow\delta_{10},v\rightarrow\delta_{01}])))
& \mathbf{d}(u)=\delta_{11}\wedge\mathbf{d}(v)=\delta_{11}
\end{cases}
\end{split}
\]
Because the optimal solution may not contain the edge $uv$,
 we make $A_x(\mathbf{d})$ contain $A'_y(\mathbf{d})$.
If $\mathbf{d}(u)$ or $\mathbf{d}(v)$ equals $\delta_{00}$,
 then because no partial solution may contain the edge $uv$,
 we are done.
If the edge $uv$ is contained in a partial solution,
 then
 the outdegree of $u$ and the indegree of $v$ are both incremented by $1$.
If $\mathbf{d}(u)=\delta_{01}$ and $\mathbf{d}(v)=\delta_{10}$, then
 because the vertices $u$ and $v$ are isolated in $G_y$,
 we just add the block $\{u,v\}$ to
 partitions in
 $A'_y(\mathbf{d}[u\rightarrow\delta_{00},v\rightarrow\delta_{00}])$
 using $\gluew$.

If $\mathbf{d}(u)=\delta_{11}$ and $\mathbf{d}(v)=\delta_{10}$, then
 the vertex $u$ is a destination of a connected component
 of every linear forest constructed in $G_y$ for this case.
Therefore, 
 every partition in
 $A'_y(\mathbf{d}[u\rightarrow\delta_{10},v\rightarrow\delta_{00}])$
 has a two-element block containing $u$.
Suppose that $\{a,u\}$ is such a block.
In order to attach the edge $uv$ to the vertex $u$ and
 replace the block $\{a,u\}$ with $\{a,v\}$,
 we just need to add the vertex $v$ to $\{a,u\}$ ($\gluew$),
 and then to remove the vertex $u$ ($\proj$).
The case of
 $\mathbf{d}(u)=\delta_{01}$ and $\mathbf{d}(v)=\delta_{11}$ is similar.

If $\mathbf{d}(u)$ and $\mathbf{d}(v)$ are both $\delta_{11}$, then
 the vertices $u$ and $v$ are a destination and a origin, respectively,
 of one or two connected components of every linear forest constructed in
 $G_y$ for this case.
Therefore, each partition $p$ in
 $A'_y(\mathbf{d}[u\rightarrow\delta_{10},v\rightarrow\delta_{01}])$
 has either a block $\{v,u\}$ or
 two two-element blocks containing $u$ and $v$ separately, say,
 $\{a,u\}$ and $\{v,b\}$.
For the latter case,
 in order to add the edge $uv$ and replace the blocks
 $\{a,u\}$ and $\{v,b\}$ with $\{a,b\}$,
 we just need to join the blocks $\{a,u\}$, $\{v,b\}$
 with the vertices $u$, $v$ ($\gluew$),
 and then to remove the vertices $u$ and $v$ ($\proj$).
If
 the partition $p$ has the block $\{v,u\}$, then
 a cycle is created in the partial solution by adding the edge $uv$.
In this case, $p$ still has the block $\{v,u\}$ after the addition of $uv$.
Thus $p$ is removed when the vertices $u$, $v$ are removed ($\proj$),
 because the number of blocks is decremented.

\subparagraph*{Join Node $x$ with Children $y$ and $z$}
We have the bags $B_x=B_y=B_z$.
For every
 $\mathbf d_x\in\{\delta_{00},\delta_{01},\delta_{10},\delta_{11}\}^{B_x}$,
 we compute $A_x(\mathbf d_x)$ as follows.
\[
A_x(\mathbf{d}_x) =\!\bigcuparrow_{\mathbf{d}_y + \mathbf{d}_z = \mathbf{d}_x}\!
\proj\left(\mathbf{d}_x^{-1}(\delta_{11}) \setminus 
(\mathbf{d}_y^{-1}(\delta_{11}) \cup \mathbf{d}_z^{-1}(\delta_{11})),
\join(A'_y(\mathbf{d}_y), A'_z(\mathbf{d}_z))\right)
\]
We obtain a partition set representing linear forests in $G_x$
 as the union
 of the join of linear forests in $G_y$ and $G_z$,
 over all combinations of mappings $\textbf{d}_y$ and $\textbf{d}_z$
 satisfying $\textbf{d}_y+\textbf{d}_z=\textbf{d}_x$.
However,
 after the join of
 partitions $p_y$ and $p_z$,
 representing linear forests $F_y$ in $G_y$ and $F_z$ in $G_z$,
 respectively,
 the joined partition $p_y\sqcup p_z$ may have a block
 containing internal vertices of a path in the join of $F_y$ and $F_z$ or
 a block representing a cycle $C$ created in the join of $F_y$ and $F_z$.
We need to remove internal vertices from any block,
 and to remove the partition $p_y\sqcup p_z$ from the resulting partition set
 if it contains a block representing a cycle.

If a block $b$ in $p_y\sqcup p_z$ contains a cycle $C$, then
 because $C$ is the union of some connected components (paths)
 in $F_y$ and $F_z$,
 the block $b$ consists of the vertices in $C$ that are contained in $B_x$.
All the vertices of $b$ have degree $\delta_{11}$ in $C$ and
 do not have degree $\delta_{11}$ in $F_y$ or $F_z$,
 because they are end-vertices of paths in $F_y$ or $F_z$.
Therefore,
 the partition $p_y\sqcup p_z$ is removed
 by removing
 $\mathbf{d}_x^{-1}(\delta_{11}) \setminus
 (\mathbf{d}_y^{-1}(\delta_{11}) \cup \mathbf{d}_z^{-1}(\delta_{11}))$
 (i.e., the vertices with degree newly incremented to $\delta_{11}$)
 using $\proj$,
 since $b$ is removed by this operation, which decrements the number of blocks.
Besides, this removal operation just removes internal vertices in
 any block that does not represent a cycle.

By Lemma~\ref{lm:Representation},
 operations for computing $A_x$ in each node
 preserve $k$-representation.
We complete the proof of Theorem~\ref{th:Algorithm_tw} by
 estimating the time complexity of the proposed algorithm.
\begin{lemma}
\label{lm:Algorithm_tw_Time}
The proposed algorithm solves \kDDP{} in
 $2^{\omega\cdot (\tw + 2k)\log k} \cdot 2^{\mathcal{O}(\tw + k)}\cdot n$ time,
 where
 $\omega$ is the matrix multiplication exponent.
\end{lemma}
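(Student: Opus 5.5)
We bound the work per node of the nice tree decomposition and multiply by the $\mathcal O(n)$ nodes given by Proposition~\ref{pr:NiceTreeDecomposition}. Correctness is not the issue: the recurrences use only operators from Definition~\ref{df:Operators}, which preserve $k$-representation by Lemma~\ref{lm:Representation}, and the output test at the root depends only on $\opt_k$ values.

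After adding the demand vertices, every bag has size at most $\tw+2k+1$. The universe of any partition in $A_x(\mathbf d)$ is a subset $U\subseteq B_x$ containing $s_1,\ldots,s_k$, so $|U|-k\le \tw+k+1$. By Lemma~\ref{lm:k-reduce}, each reduced set therefore satisfies $|A'_x(\mathbf d)|\le (k+1)^{\tw+k+1}$ (using $k+1$ for both parities). The number of mappings $\mathbf d$ per node is $4^{|B_x|}=2^{\mathcal O(\tw+k)}$. Note that $(k+1)^{\tw+k+1}= 2^{(\tw+k)\log k+\mathcal O(\tw+k)}$, since $(\tw+k)\log(1+1/k)\le(\tw+k)/k$.

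Next we bound the unreduced sets $A_x(\mathbf d)$ before calling \alg{$k$-reduce}:
\begin{itemize}
\item Leaf, introduce-vertex, forget and introduce-edge nodes combine $\mathcal O(1)$ reduced child sets with the operators $\union$, $\glue$, $\proj$, $\shift$. By Proposition~\ref{pr:OperatorComplexity} the result has size $\mathcal O((k+1)^{\tw+k+1})$ and is computed within that time times $(\tw+k)^{\mathcal O(1)}$.
\item At a join node, for fixed $\mathbf d_x$ we sum over the pairs $(\mathbf d_y,\mathbf d_z)$ with $\mathbf d_y+\mathbf d_z=\mathbf d_x$; there are at most $2^{|B_x|}$ of them, because $\delta_{11}$ splits only as $\delta_{01}+\delta_{10}$ or $\delta_{10}+\delta_{01}$ (up to $\delta_{00}$ summands). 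Each $\join$ costs $|A'_y|\cdot|A'_z|\cdot(\tw+k)^{\mathcal O(1)}\le (k+1)^{2(\tw+k+1)}(\tw+k)^{\mathcal O(1)}$. Hence $|A_x(\mathbf d_x)|\le 2^{\mathcal O(\tw+k)}(k+1)^{2(\tw+k+1)}$.
\end{itemize}

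Applying Lemma~\ref{lm:k-reduce} to a set of size $|\mathcal A|$ costs $|\mathcal A|(k+1)^{(\omega-1)(\tw+k+1)}(\tw+k)^{\mathcal O(1)}$. At the join node this gives
\[
(k+1)^{(\omega+1)(\tw+k+1)}\cdot 2^{\mathcal O(\tw+k)}
= 2^{(\omega+1)(\tw+k)\log k+\mathcal O(\tw+k)}.
\]
The other node types cost less, namely $(k+1)^{\omega(\tw+k+1)}$. Since $(\omega+1)(\tw+k)\le\omega(\tw+2k)$ whenever $\tw\le(\omega-1)k$, this case fits the target directly.

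In general, however, $(\omega+1)(\tw+k)$ can exceed $\omega(\tw+2k)$, so the join node is the main obstacle. To fix this, I would not materialise the full join for each pair. Instead, I would compute the join against $A'_z(\mathbf d_z)$ and reduce immediately, pair by pair, and merge the reduced pieces with $\union$. This still multiplies the two child sizes.

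The alternative I would try first is a refined count. Only $\mathcal O(1)$-many $\delta$ values per vertex matter, and a vertex with $\mathbf d_y(v)=\delta_{11}$ is absent from the universe of $A'_y$. So for each pair, $|A'_y(\mathbf d_y)|\cdot|A'_z(\mathbf d_z)|\le (k+1)^{a}(k+1)^{b}$, where $a$ and $b$ are the numbers of non-demand degree-one vertices on each side. Since $a+b\le |B_x|$ plus the demand vertices counted twice, each pair costs at most $(k+1)^{\tw+2k+\mathcal O(1)}$ times $2^{\mathcal O(\tw+k)}$. Reduction then multiplies by at most $(k+1)^{(\omega-1)(\tw+k+1)}$, for a total within $2^{\omega(\tw+2k)\log k+\mathcal O(\tw+k)}$.

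Summing over $2^{\mathcal O(\tw+k)}$ mappings and $\mathcal O(n)$ nodes gives the claimed bound. Verifying this degree-counting bound at join nodes is the step I expect to require the most care.
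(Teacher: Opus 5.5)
Your handling of the non-join nodes and the overall structure are fine, but the join-node bound, which is the whole content of the lemma, has a gap. Your claim that $a+b\le|B_x|$ plus the doubly counted demand vertices is false. A non-demand vertex $v$ with $\mathbf d_x(v)=\delta_{11}$, split as $\mathbf d_y(v)=\delta_{01}$, $\mathbf d_z(v)=\delta_{10}$ (or the reverse), lies in the universes of both $A'_y(\mathbf d_y)$ and $A'_z(\mathbf d_z)$, so it is counted twice. You yourself note that $\delta_{11}$ splits this way.

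If all non-demand vertices are split like this, $|A'_y(\mathbf d_y)|\cdot|A'_z(\mathbf d_z)|$ can be about $(k+1)^{2(\tw+1)}$. Multiplying by your uniform \alg{$k$-reduce} factor $(k+1)^{(\omega-1)(\tw+k+1)}$ then brings back the $(\omega+1)\tw$ exponent you were trying to avoid. As a harmless aside, $\delta_{11}$ has four splits, so there are up to $4^{|B_x|}$ pairs, not $2^{|B_x|}$.

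The missing idea is to couple the two factors for each fixed $\mathbf d_x$ instead of bounding them separately. The vertices counted twice in the child product are exactly those with $\mathbf d_x(v)=\delta_{11}$. These are \emph{not} in the universe $\mathbf d_x^{-1}(\delta_{01})\cup\mathbf d_x^{-1}(\delta_{10})\cup\{s_1,\ldots,s_k\}$ of $A_x(\mathbf d_x)$, so they contribute nothing to the reduction factor $(k+1)^{(\omega-1)(|U|-k)}$.

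Concretely, let $a_x=|\mathbf d_x^{-1}(\delta_{01})\cup\mathbf d_x^{-1}(\delta_{10})|$ and $b_x=|\mathbf d_x^{-1}(\delta_{11})|$. Summing over the splits vertex by vertex gives $|A_x(\mathbf d_x)|\le\{2(k+1)\}^{a_x}\{2(k+1)^2+2\}^{b_x}$. Reducing then costs at most $\{2(k+1)^\omega\}^{a_x}\{2(k+1)^2+2\}^{b_x}(\tw+k)^{\mathcal O(1)}$. Since $\omega\ge 2$, every vertex of $B_x$ contributes a factor $\mathcal O((k+1)^\omega)$. Summing over all $\mathbf d_x$ by the multinomial theorem yields $\{4(k+1)^\omega+2(k+1)^2+3\}^{\tw+2k+1}$, which is at most $2^{\omega(\tw+2k)\log k+\mathcal O(\tw+k)}$.

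This is exactly the paper's computation. Without this per-vertex coupling between the join product and the universe of the reduced set, your argument does not reach the stated exponent.
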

\begin{proof}
We estimate the running time of \alg{$k$-reduce} applied to a join node,
 because it is the bottleneck in the dynamic programming.
For a join node $x$ with children $y$ and $z$ and
 $\mathbf{d}_x\in\{\delta_{00},\delta_{01},\delta_{10},\delta_{11}\}^{B_x}$,
 by Lemma~\ref{lm:k-reduce},
 $A'_y(\mathbf d_y)$ and $A'_z(\mathbf d_z)$
 have sizes at most
 $(k+1)^{|\mathbf{d}_y^{-1}(\delta_{01})\cup\mathbf{d}_y^{-1}(\delta_{10})|+k-k}$
 and
 $(k+1)^{|\mathbf{d}_z^{-1}(\delta_{01})\cup\mathbf{d}_z^{-1}(\delta_{10})|+k-k}$,
 respectively.
Therefore, $A_x(\mathbf{d}_x)$ has a size at most
\[
\begin{split}
& \sum_{\mathbf{d}_y + \mathbf{d}_z = \mathbf{d}_x}
(k + 1)^{|\mathbf{d}_y^{-1}(\delta_{01}) \cup \mathbf{d}_y^{-1}(\delta_{10})|}
 \cdot
(k + 1)^{|\mathbf{d}_z^{-1}(\delta_{01}) \cup \mathbf{d}_z^{-1}(\delta_{10})|} \\
&= \prod_{v\in B_x} \sum_{\mathbf d_y(v) + \mathbf d_z(v) = \mathbf d_x(v)} 
(k + 1)^{[\mathbf d_y(v) = \delta_{01}] + [\mathbf d_y(v) = \delta_{10}] + [\mathbf d_z(v) = \delta_{01}] + [\mathbf d_z(v) = \delta_{10}]} \\
&=\{2(k + 1)\}^{|\mathbf{d}_x^{-1}(\delta_{01}) \cup \mathbf{d}_x^{-1}(\delta_{10})|} \cdot
\{2(k + 1)^2 + 2\}^{|\mathbf{d}_x^{-1}(\delta_{11})|},
\end{split}
\]
where
 the last equality is derived using
\[
\begin{split}
&\sum_{\mathbf d_y(v)+\mathbf d_z(v)=\mathbf d_x(v)} 
(k+1)^{[\mathbf d_y(v)=\delta_{01}]+[\mathbf d_y(v)=\delta_{10}]+[\mathbf d_z(v)=\delta_{01}]+[\mathbf d_z(v)=\delta_{10}]}\\
&\qquad = 
\begin{cases}
1 & \mathbf d_x(v)=\delta_{00}\\
2(k+1) &  \mathbf d_x(v)\in\{\delta_{01},\delta_{10}\}\\
2(k+1)^2+2 & \mathbf d_x(v)=\delta_{11}.
\end{cases}
\end{split}
\]
By Lemma~\ref{lm:k-reduce},
 the running time for applying \alg{$k$-reduce} to $A_x(\mathbf{d}_x)$ is
\[
\begin{split}
&\{2(k + 1)\}^{|\mathbf{d}_x^{-1}(\delta_{01}) \cup \mathbf{d}_x^{-1}(\delta_{10})|} \cdot
\{2(k + 1)^2 + 2\}^{|\mathbf{d}_x^{-1}(\delta_{11})|}
\cdot (k + 1)^{(\omega - 1)|\mathbf{d}_x^{-1}(\delta_{01}) \cup \mathbf{d}_x^{-1}(\delta_{10})|}
\cdot (\tw + k)^{\mathcal{O}(1)} \\
&\leq\{2(k + 1)^\omega\}^{|\mathbf{d}_x^{-1}(\delta_{01}) \cup \mathbf{d}_x^{-1}(\delta_{10})|} \cdot
\{2(k + 1)^2 + 2\}^{|\mathbf{d}_x^{-1}(\delta_{11})|}
\cdot (\tw + k)^{\mathcal{O}(1)}.
\end{split}
\]
Because we need to apply \alg{$k$-reduce} to $A_x(\mathbf{d}_x)$ for all
 $\mathbf{d}_x\in\{\delta_{00},\delta_{01},\delta_{10},\delta_{11}\}^{B_x}$,
 this total running time is at most
\[
\begin{split}
&\sum_{\mathbf{d}_x \in \{\delta_{00}, \delta_{01}, \delta_{10}, \delta_{11}\}^{B_x}}
 \{2(k + 1)^\omega\}^{|\mathbf{d}_x^{-1}(\delta_{01}) \cup \mathbf{d}_x^{-1}(\delta_{10})|}
 \cdot\{2(k+1)^2+2\}^{|\mathbf{d}_x^{-1}(\delta_{11})|}
 \cdot (\tw+k)^{\mathcal{O}(1)}\\
&=\sum_{i_{00}+i_{01}+i_{10}+i_{11}=|B_x|}
\binom{|B_x|}{i_{00},i_{01},i_{10},i_{11}}1^{i_{00}}
 \cdot \{2(k+1)^\omega\}^{i_{01}+i_{10}}
 \cdot\{2(k+1)^2+2\}^{i_{11}}
 \cdot (\tw+k)^{\mathcal{O}(1)}\\
&\leq \{4(k+1)^\omega+2(k+1)^2+3\}^{\tw+2k+1}\cdot (\tw+k)^{\mathcal{O}(1)}\\
&=k^{\omega\cdot (\tw+2k+1)}\cdot 2^{\mathcal{O}(\tw+k)}\\
&=2^{\omega\cdot (\tw+2k)\log k}\cdot 2^{\mathcal{O}(\tw+k)}.
\end{split}
\]
Because the number of recursive steps is $\mathcal O(n)$ by
 Proposition~\ref{pr:NiceTreeDecomposition},
 we obtain the lemma.
\end{proof}

By Lemmas \ref{lm:Representation}--\ref{lm:Algorithm_tw_Time},
 the proof of Theorem~\ref{th:Algorithm_tw} is completed.

%% file: section4.tex
\section{Tight Lower Bound under the SETH for General $k$}
\label{sc:LowerBound_SETH}
We prove Theorem~\ref{th:LowerBound_SETH}
 in Section~\ref{ssc:LowerBound_SETH}.
For a CNF formula with $n$ variables, an input of \CSAT,
 and for certain positive integers $\alpha$ and $K$ determined by $n$,
 we transform the CNF formula into an input of \aKKPHS,
 defined below, and then transform it into an input of \kDDP.
%
\begin{oframed}
\noindent\aKKPHS
\begin{description}
\item[Input]
Nonempty sets $S_1$, \ldots, $S_m\subseteq [\alpha K]\times [K]$.
\item[Question]
Is there a set $S$
 such that
 $S$ contains exactly one element from each row and each column
 of the grid on $\{(q-1)K+1,\ldots,qK\}\times [K]$ for each $1\leq q\leq\alpha$,
 and that
 $S\cap S_p\neq\emptyset$ for each $1\leq p\leq m$?
\end{description}
\end{oframed}
Transformation of the CNF formula into an input of the non-permutation version
 of \aKKPHS{}, where the condition of choosing one element from each column
 is removed, is simple.
To transform the CNF formula into the permutation version,
 we use the technique of \cite{LMS18}.
This technique was used to prove that
 \KKClique, where
 for a given graph with $K^2$ vertices placed on $[K]\times [K]$,
 we are asked the existence of a clique containing
 exactly one vertex from each row,
 can be reduced to
 the permutation version, i.e.,
 the problem of asking the existence of a clique containing
 exactly one vertex from each row and each column,
 in $2^{\mathcal O(K)}$ time.
Actually, this technique depends on no properties of the graph and its clique
 at all,
 and is realized just through
 shuffling elements of each row in the grid on $[K]\times [K]$.
Therefore, this technique can be generalized as follows.
\begin{theorem}[\cite{LMS18}]
\label{th:Permutation}
%
Suppose that $K^2$ elements are placed on $[K]\times [K]$ and
 that a certain condition
 for
 $K$ elements
 is
 satisfied by choosing exactly one element from each row.
Then, we can shuffle the elements in every row so that
 the same condition is satisfied by choosing exactly one element
 from each row and each column,
 in $2^{\mathcal O(K)}$ time.
%
\end{theorem}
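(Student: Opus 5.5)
The plan is to read Theorem~\ref{th:Permutation} as a Turing-style reduction. We will build, in $2^{\mathcal O(K)}$ time, a family $\mathcal F$ of \emph{shufflings}. Each shuffling is a $K$-tuple $(\pi_1,\ldots,\pi_K)$ of permutations of $[K]$, where $\pi_i$ moves the element in cell $(i,c)$ to cell $(i,\pi_i(c))$. The family must satisfy
\[
\forall (c_1,\ldots,c_K)\in[K]^K\ \exists (\pi_1,\ldots,\pi_K)\in\mathcal F:\ \pi_1(c_1),\ldots,\pi_K(c_K)\text{ are pairwise distinct}.
\]
Given such an $\mathcal F$, the statement follows. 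A shuffling never moves an element to another row, so a row-transversal selection in a shuffled grid is the same set of elements as a row-transversal selection in the original grid. Hence the condition holds for some permutation selection in some shuffled grid if and only if it holds for some row-transversal selection in the original grid. The forward direction uses the defining property of $\mathcal F$ applied to the column vector $(c_i)$ of a witness. As in \cite{LMS18}, the condition itself plays no role; only the positions of the chosen elements matter.

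The existence of a small $\mathcal F$ comes from a probabilistic estimate. Pick each $\pi_i$ independently and uniformly at random. Then the values $\pi_i(c_i)$ are independent and uniform on $[K]$, so they are pairwise distinct with probability $K!/K^K\ge e^{-K}$. Take $t=\lceil e^{K}\cdot 2K\ln K\rceil$ independent tuples. For a fixed $(c_i)$, all of them fail with probability at most $\exp(-t e^{-K})\le K^{-2K}$. A union bound over the $K^K$ vectors then leaves positive success probability. So a family of size $2^{\mathcal O(K)}$ exists, and a randomized $2^{\mathcal O(K)}$-time version of the theorem is immediate.

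The main obstacle is constructing $\mathcal F$ deterministically in $2^{\mathcal O(K)}$ time. The method of conditional expectations over all $K^K$ vectors is too slow. I would first try perfect hashing. Identify the chosen cells $\{(i,c_i)\}$ with a $K$-subset of a universe of size $K^2$. Take an explicit $(K^2,K,K)$-perfect hash family $\mathcal H$ of size $e^{K}K^{\mathcal O(\log K)}$, as in Naor, Schulman, and Srinivasan. For $h\in\mathcal H$ injective on this set, the values $h(i,c_i)$ already give distinct target columns.

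The difficulty is that $h(i,\cdot)$ need not be a bijection of row $i$. Since $c_i$ is unknown, we cannot simply take any permutation $\pi_i$ with $\pi_i(c_i)=h(i,c_i)$. To handle this I would follow the splitter-based idea of \cite{LMS18}: partition the rows into blocks of size $\mathcal O(\log K)$, and within each block handle the unknown columns by enumerating all $K^{\mathcal O(\log K)}$ possibilities or by using cyclic shifts $\pi_i(c)=c+a_i \bmod K$. Then combine the blocks using a hash family that assigns disjoint sets of target columns to different blocks. Verifying that these pieces compose to total size $2^{\mathcal O(K)}$ is the step I expect to require the most care.
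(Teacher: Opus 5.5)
\textbf{There is a genuine gap: the deterministic construction is missing, and the route you sketch for it cannot work.}

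The parts you do give are correct. Your reformulation is right: a family $\mathcal F$ of row shufflings hitting every column vector $(c_1,\ldots,c_K)$ suffices. It matches everything the paper itself offers, which is only the citation to \cite{LMS18} plus the remark that the shuffling never looks at the condition. Your equivalence argument is also correct, as is the probabilistic bound ($K!/K^K\ge e^{-K}$, then a union bound over $K^K$ vectors). However, this only shows that a small $\mathcal F$ exists, i.e.\ it gives a randomized reduction. The statement claims a $2^{\mathcal O(K)}$-time procedure, and the paper's SETH argument needs it deterministic. That construction is the whole content of the theorem, and your proposal leaves it open.

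Your block idea fails by counting. Suppose rows are grouped into blocks $B$ of size $b=\mathcal O(\log K)$ with disjoint target sets $T_B$. The correctness argument of such a scheme certifies a vector $c$ by a given member only when, for every block, $i\mapsto\pi_i(c_i)$ is a bijection $B\to T_B$. For a fixed member there are at most $\prod_B b!\le b^K$ such $c$. Covering all $K^K$ vectors therefore forces $|\mathcal F|\ge (K/b)^K=2^{\Omega(K\log K)}$. Concretely, enumerating $K^{\mathcal O(\log K)}$ column guesses per block and multiplying over $K/\log K$ blocks gives $K^{\mathcal O(K)}$.

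The missing idea is to decompose by columns, not rows. If different original columns are sent to disjoint sets of target columns, only rows that chose the same original column can collide, so only those rows need separating.
\begin{itemize}
\item Guess, for every column $j$, the number $m_j$ of rows $i$ with $c_i=j$. There are at most $\binom{2K-1}{K}<4^K$ guesses.
\item Fix pairwise disjoint $T_j\subseteq[K]$ with $|T_j|=m_j$.
\item Take $g_j\colon[K]\to T_j$ from a $(K,m_j)$-perfect hash family over the rows.
\item Put $\pi_i(j)=g_j(i)$ when $m_j\ge 1$, and map the columns with $m_j=0$ bijectively onto the unused targets.
\end{itemize}
Disjointness of the $T_j$ makes each $\pi_i$ a permutation. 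If every $g_j$ is injective on $\{i\mid c_i=j\}$, then all $\pi_i(c_i)$ are distinct.

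With the Naor--Schulman--Srinivasan families this yields $4^K\prod_{j:m_j\ge 2}e^{m_j}m_j^{\mathcal O(\log m_j)}\log K=2^{\mathcal O(K\log\log K)}$ shufflings. The per-column $\log K$ factors are what would still have to be removed to reach the literal $2^{\mathcal O(K)}$. Even so, this bound is $2^{o(K\log K)}$ and is enough for how the paper uses the theorem, where only a $2^{o(n)}$ overhead is allowed.

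Separately, the conditional-expectation greedy you dismissed builds a family of size $e^KK^{\mathcal O(1)}$ deterministically in $2^{\mathcal O(K\log K)}$ time. That is too slow for the statement as written. However, $K\log K\le n/\log n$ by (\ref{eq:2XXUpper}), so it would also fit the $2^{o(n)}$ budget of Lemma~\ref{lm:CSAT-aKKPHS}.
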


For the reduction from \aKKPHS{} to \kDDP{},
 we partially use the reduction from \KKPHS{} to \kDDP{} in \cite{LMS18}.
More specifically,
 we divide the input of \aKKPHS{}
 $S_1, \ldots, S_m\subseteq [\alpha K]\times [K]$, transformed from the CNF formula,
 into
 $\alpha$ inputs of \KKPHS{}
 $S^q_1, \ldots, S^q_m\subseteq [K]\times [K]$ ($1\leq q\leq \alpha$),
 transform these $\alpha$ inputs into $\alpha$ inputs (graphs and demand sets)
 of \kDDP{} as in \cite{LMS18},
 and then put these graphs and demand sets together with slight modifications.

In Section~\ref{ssc:Algorithm_pw},
 we demonstrate that
 \kDDP{} can be solved in a running time asymptotically matching
 the lower bound of Theorem~\ref{th:LowerBound_SETH} using
 a modified version of the algorithm of \cite{Scheffler94}
 designed for \kUDP.

\subsection{Lower Bound under the SETH}
\label{ssc:LowerBound_SETH}
\subsubsection{Reduction from \CSAT{} to \aKKPHS}
Let $\varphi$ be an input CNF formula of \CSAT{} with $n$ variables
 $x_1$, \ldots, $x_n$ and $m$ clauses $C_1$, \ldots, $C_m$.
We define
 $X=\lfloor\log\frac n{\log^2 n}\rfloor$,
 $K=2^X$, and
 $\alpha=\lceil\lceil n/X\rceil/2^X\rceil=\lceil\lceil n/X\rceil/K\rceil$.
It follows that
\begin{align}
\label{eq:2XXUpper}
2^XX
&\leq\frac n{\log^2 n}\cdot\log\frac n{\log^2 n}
\leq\frac n{\log n}=o(n),\\
\label{eq:2XXLower}
2^XX
&\geq\frac n{2\log^2 n}\cdot\log\frac n{2\log^2 n}
=\Omega\left(\frac n{\log n}\right),\text{ and}\\
\label{eq:a2XX}
\alpha 2^XX
& \leq\left\lceil\frac{\lceil n/X\rceil}{2^X}\right\rceil\cdot 2^XX
\leq n+X+2^XX
=n+o(n)\text{ by (\ref{eq:2XXUpper}).}
\end{align}
We note $\alpha KX\geq n$.
If $\alpha KX>n$, then
 we add dummy variables $x_{n+1}$, \ldots, $x_{\alpha KX}$ to $\varphi$.
These additional variables do not affect the satisfiability of $\varphi$,
 because they appear in no clauses.
We divide all the variables into $\alpha K$ sets
 each containing $X$ variables,
 i.e.,
 $V_i=\{x_{(i-1)X+1},\ldots,x_{iX}\}$ ($1\leq i\leq\alpha K$).
For each $1\leq i\leq\alpha K$,
 let
 $\sigma_i:[K]\rightarrow\{1,0\}^{V_i}$
 be a one-to-one mapping
 from the positive integers in $[2^{|V_i|}]=[K]$ to
 all the truth assignments of the variables in $V_i$.
A typical definition of $\sigma_i$ is that for each $1\leq j\leq K=2^X$,
 $\sigma_i(j)$ assigns the variable $x_{(i-1)X+h}$ ($1\leq h\leq X$)
 the $h$th bit of the binary representation of $j-1$.
We construct an input $S'_1$, \ldots, $S'_m$ of \aKKHS, i.e.,
 the non-permutation version of \aKKPHS{} such that
 the condition of choosing one element from each column is removed:
For each $1\leq i\leq \alpha K$ and $1\leq j\leq K$,
 if $\sigma_i(j)$ satisfies a clause $C_p$, then
 we add an element $(i,j)$ to $S'_p$, i.e.,
 we define $S'_p=\{(i,j)\mid\text{$\sigma_i(j)$ satisfies $C_p$}\}$
 ($1\leq p\leq m$).
\begin{lemma}
\label{lm:CSAT-aKKHS}
The CNF formula $\varphi$ is satisfiable if and only if
 a solution of \aKKHS{} for the input $S'_1$, \ldots, $S'_m$ exists.
\end{lemma}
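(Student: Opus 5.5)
The proof is a direct two-way translation between truth assignments of $\varphi$ (extended by the dummy variables) and sets $S$ that pick exactly one element from each row $i\in[\alpha K]$. The construction puts $(i,j)\in S'_p$ exactly when the partial assignment $\sigma_i(j)$ of the block $V_i$ satisfies $C_p$. The argument therefore rests on two facts. First, the blocks $V_1,\ldots,V_{\alpha K}$ partition the variable set. Second, each $\sigma_i$ is a bijection between $[K]$ and the assignments of $V_i$.

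For the forward direction, take a satisfying assignment $\tau$ of $\varphi$. Extend it arbitrarily to the dummy variables; they occur in no clause, so the extension still satisfies $\varphi$. For each $i$, the restriction $\tau|_{V_i}$ is an assignment of $V_i$. Since $\sigma_i$ is a bijection, there is a unique $j_i$ with $\sigma_i(j_i)=\tau|_{V_i}$. Set $S=\{(i,j_i)\mid 1\leq i\leq\alpha K\}$, which contains exactly one element from each row. Fix a clause $C_p$. Some literal of $C_p$ is made true by $\tau$, and its variable lies in some block $V_i$. Hence $\sigma_i(j_i)=\tau|_{V_i}$ already satisfies $C_p$, which gives $(i,j_i)\in S'_p$ and so $S\cap S'_p\neq\emptyset$.

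For the converse, let $S$ be a solution of \aKKHS, with $(i,j_i)$ its unique element in row $i$. Define $\tau$ on all variables by $\tau|_{V_i}=\sigma_i(j_i)$. This is well defined because the $V_i$ are pairwise disjoint and cover all variables. For each $p$, pick $(i,j_i)\in S\cap S'_p$. By the definition of $S'_p$, the assignment $\sigma_i(j_i)$ satisfies $C_p$. Satisfying a clause depends only on the variables assigned, so the extension $\tau$ of $\sigma_i(j_i)$ also satisfies $C_p$. Thus $\tau$ satisfies every clause, and its restriction to $x_1,\ldots,x_n$ satisfies the original $\varphi$.

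There is no substantial obstacle here. The only point needing care is the meaning of ``$\sigma_i(j)$ satisfies $C_p$'' for a partial assignment: some literal of $C_p$ is over a variable in $V_i$ and is set true. With that reading, both the monotonicity used in the converse and the locality used in the forward direction are immediate.
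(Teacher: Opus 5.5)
Your proof is correct and follows essentially the same route as the paper. In the forward direction you choose $j_i$ with $\sigma_i(j_i)$ equal to the restriction of a satisfying assignment to $V_i$, and in the converse you assemble the $\sigma_i(j_i)$ into one assignment that satisfies each $C_p$ through a hit element of $S'_p$; you are simply more explicit than the paper about the dummy variables, the bijectivity of $\sigma_i$, and why the assembled assignment is well defined.
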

\begin{proof}
We first assume that $\varphi$ is satisfiable, i.e.,
 there is a truth assignment $\sigma_i(j_i)$
 for each $1\leq i\leq\alpha K$ such that
 $\varphi$ is satisfied.
Let $S=\{(i,j_i)\mid 1\leq i\leq\alpha K\}$.
Obviously, $S$ contains exactly one element from each row.
Because, for each $1\leq p\leq m$, the clause $C_p$ is satisfied by
 $\sigma_i(j_i)$ for some $i$,
 it follows that $S\cap S'_p\neq\emptyset$.
Therefore, $S$ is a solution of \aKKHS{} for $S'_1$, \ldots, $S'_m$.

We next assume that
 a solution of \aKKHS{} for the input $S'_1$, \ldots, $S'_m$ exists, i.e.,
 there exists $S=\{(i,j_i)\mid 1\leq i\leq\alpha K\}$ such that
 $S\cap S'_p\neq\emptyset$ for each $1\leq p\leq m$.
For each $1\leq p\leq m$ and any element $(i,j_i)$ in $S\cap S'_p\neq\emptyset$,
 by the definition of $S'_p$, the clause $C_p$ is satisfied by
 the truth assignment $\sigma_i(j_i)$.
Therefore, $\varphi$ is satisfiable.
\end{proof}
\begin{lemma}
\label{lm:CSAT-aKKPHS}
For the defined integers $\alpha$ and $K$,
\CSAT{} can be reduced to \aKKPHS{} in $2^{o(n)}+o(n^2 m)$ time.
\end{lemma}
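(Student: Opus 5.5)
The plan is to compose two steps: the construction of the \aKKHS{} instance $S'_1,\ldots,S'_m$ proved correct in Lemma~\ref{lm:CSAT-aKKHS}, followed by the row-shuffling permutationization of Theorem~\ref{th:Permutation}, applied separately to each of the $\alpha$ blocks $\{(q-1)K+1,\ldots,qK\}\times[K]$.

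\textbf{Step 1: build $S'_1,\ldots,S'_m$.} Padding $\varphi$ with dummy variables up to $\alpha KX=n+o(n)$ variables is cheap by (\ref{eq:a2XX}). For each of the $\alpha K$ rows $i$, each of the $K$ columns $j$, and each clause $C_p$, we test whether $\sigma_i(j)$ satisfies $C_p$. Each test inspects only the literals of $C_p$ on the $X$ variables of $V_i$. Summed over all $i$ for a fixed $(j,p)$, this is linear in $|C_p|+\alpha KX$. So the total is $\mathcal O(K\cdot m\cdot(n+o(n)))$, up to the bookkeeping of $|C_p|\le 2n$. By (\ref{eq:2XXUpper}), $K=2^X\le n/\log^2 n=o(n)$, hence this is $o(n^2m)$. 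I expect this to be the intended source of the $o(n^2m)$ term.

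\textbf{Step 2: permutationize block by block.} For block $q$, view the $K$ rows $(q-1)K+1,\ldots,qK$ as a $[K]\times[K]$ grid. The condition to preserve is the following: the choice of one element in each row of the block, together with fixed choices in the other blocks, hits every $S'_p$. Theorem~\ref{th:Permutation} is stated for a single grid and an abstract condition on the $K$ chosen elements. I will apply it with the condition taken relative to the rest of the assignment. The key observation is that the shuffling it produces depends only on $K$ and not on the condition, since the paper stresses that it ``depends on no properties'' and is just a rearrangement of each row. So the same shuffle can be applied to every block independently, and $S_p$ is obtained by relabeling the columns of the elements of $S'_p$ in each row according to that row's shuffle.

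\textbf{Equivalence.} In one direction, a permutation solution for the shuffled instance un-shuffles to a row-wise choice hitting all sets. In the other direction, a row solution of \aKKHS{} can be converted block by block into one that is also a permutation within each block, because the condition is preserved blockwise. Combined with Lemma~\ref{lm:CSAT-aKKHS}, this gives the correctness of the reduction.

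\textbf{Running time of Step 2.} Computing the shuffle costs $2^{\mathcal O(K)}=2^{\mathcal O(n/\log^2 n)}=2^{o(n)}$ once, or $\alpha\cdot 2^{\mathcal O(K)}$ in total, which is still $2^{o(n)}$. Relabeling the sets costs $\mathcal O(m\alpha K^2)=o(n^2m)$, since $\alpha K^2\le (n/X+K)K=o(n^2)$.

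\textbf{Main obstacle.} I expect the main obstacle to be justifying that Theorem~\ref{th:Permutation} applies per block with a condition that couples the blocks. Its statement only covers one $K\times K$ grid. I would handle this by fixing the choices in all other blocks and invoking the theorem's condition-independence to make the block-wise shuffles simultaneously valid. If needed, I would instead apply the theorem block by block sequentially, each time treating the already-processed blocks as fixed.
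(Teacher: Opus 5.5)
Your two steps and their cost accounting match the paper's proof. The paper also builds $S'_1,\ldots,S'_m$ literal by literal in $\mathcal O(2^{X-1}nm)=o(n^2m)$ time, which is the same bound as your $\mathcal O(Kmn)$. It then applies Theorem~\ref{th:Permutation} to each of the $\alpha$ grids at cost $\alpha\cdot 2^{\mathcal O(K)}=2^{o(n)}$, and it never discusses the coupling between grids.

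The problem is the ``key observation'' you use to handle that coupling. No single shuffle that depends only on $K$ can satisfy Theorem~\ref{th:Permutation}. To see this, fix any row shuffle of a block. In each row $i$ of that block, use unit clauses on the $X$ variables of $V_i$ to force the unique element that this shuffle sends to column~$1$. The resulting \aKKHS{} instance is a yes-instance. After the shuffle, however, all forced elements lie in one column, so the \aKKPHS{} instance is a no-instance. Hence ``the same shuffle applied to every block'' fails. Fixing the other blocks' choices and appealing to condition-independence fails for the same reason.

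The only thing that can be independent of the condition is a \emph{family} $\mathcal F$ of $2^{\mathcal O(K)}$ candidate shuffles, such that every one-per-row choice becomes a permutation under some member; this fits the $2^{\mathcal O(K)}$ cost. The step is therefore an OR-type Turing reduction. Since the good member may differ from block to block, you must enumerate all $|\mathcal F|^{\alpha}=2^{\mathcal O(\alpha K)}$ combinations. This still fits the lemma: $\alpha K\leq n/X+K+1=\mathcal O(n/\log n)$, so you get $2^{o(n)}$ instances, each produced in polynomial time. Your line ``$2^{\mathcal O(K)}$ once, or $\alpha\cdot 2^{\mathcal O(K)}$ in total'' must be replaced by this count; the paper's $\alpha\cdot 2^{\mathcal O(K)}$ is imprecise in the same way.

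Your sequential fallback is valid only if Theorem~\ref{th:Permutation} is read literally as a black box returning a condition-dependent shuffle, which is how the paper's informal statement reads. Under that reading, it supplies exactly the justification the paper leaves implicit.
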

\begin{proof}
We reduce \CSAT{} to \aKKHS{} using Lemma~\ref{lm:CSAT-aKKHS}.
We can do this by,
 for a variable $x_{(i-1)X+h}$ appeared in $C_p$,
 adding $(i,j)$ to $S'_p$
 if the variable appears in $C_p$ without negation and
 is assigned $1$ by $\sigma_i(j)$,
 or
 if the variable appears in $C_p$ with negation and
 is assigned $0$ by $\sigma_i(j)$.
Because, for each $1\leq h\leq X$,
 there are $2^{X-1}$ $j$'s such that $\sigma_i(j)$ assigns
 the variable $x_{(i-1)X+h}$ either $1$ or $0$,
 it takes
 $\mathcal O(2^{X-1}\cdot nm)\leq\mathcal O(\frac{n}{\log^2 n}\cdot nm)
=o(n^2m)$ time.

We then reduce to \aKKPHS{} using Theorem~\ref{th:Permutation} for
 each of $\alpha$ grids on $[K]\times [K]$,
 obtained by dividing the underlying grid on $[\alpha K]\times [K]$, which takes
 $\alpha\cdot 2^{\mathcal O(K)}
\leq n\cdot 2^{\mathcal O(n/\log^2 n)}=2^{o(n)}$.

We thus obtain the total reduction time of $o(n^2m)+2^{o(n)}$.
\end{proof}

\subsubsection{Reduction to \kDDP}
We transform the input $\mathcal I$ of \aKKPHS{} consisting of
 $S_1, \ldots, S_m\subseteq [\alpha K]\times[K]$,
 transformed from the CNF formula $\varphi$ using Lemma~\ref{lm:CSAT-aKKPHS},
 into an input $\bar{\mathcal I}$ of \kDDP.
For each $1\leq p\leq m$ and $1\leq q\leq\alpha$,
 let $S^q_p=S_p\cap M_q$, where
 $M_1$, \ldots, $M_\alpha$ are the $\alpha$ grids on $[K]\times [K]$
 obtained by dividing $[\alpha K]\times [K]$ on which
 $S_1, \ldots, S_m$ are defined.
By this division of $\mathcal I$,
 we obtain $\alpha$ inputs $\mathcal I_1$, \ldots, $\mathcal I_\alpha$
 of \KKPHS{}, consisting of
 $S^q_1, \ldots, S^q_m\subseteq [K]\times [K]$
 for each $1\leq q\leq\alpha$.
Here, we allow $S^q_p$ to be empty and not to be hit.

We transform
 $\mathcal I_1$, \ldots, $\mathcal I_\alpha$
 into
 $\alpha$ inputs (graphs and demand sets)
 $\mathcal I'_1$, \ldots, $\mathcal I'_\alpha$ of
 \kDDP{} as in \cite{LMS18}, and
 obtain one input (graph and demand set) $\mathcal I'$ by
 taking the disjoint union of these graphs and demand sets.
By the result of \cite{LMS18},
 for each $1\leq q\leq\alpha$,
 a solution for $\mathcal I_q$ exists if and only if
 a solution for $\mathcal I'_q$ exists.
Besides, if a solution for $\mathcal I'$ exists, then
 a solution for $\mathcal I$ exists.
This is because
 a solution for $\mathcal I'$ guarantees that for each $p$,
 all the nonempty sets in $S^1_p$, \ldots, $S^\alpha_p$ are hit,
 and hence $S_p$ is hit.
However, the converse is not necessarily true.
This is because
 a solution for $\mathcal I$ implies that
 $S_p$ may be hit in at least one of the sets $S^1_p$, \ldots, $S^\alpha_p$,
 which does not guarantee the existence of disjoint paths for
 all graphs and demand sets
 $\mathcal I'_1$, \ldots, $\mathcal I'_\alpha$.
To resolve this inequivalence, we modify $\mathcal I'$ in such a way that
 hitting at least one of the sets $S^1_p$, \ldots, $S^\alpha_p$
 implies the existence of disjoint paths for the modified graph and demands.

At this point, we avoid to describe the definition of $\mathcal I'$ in detail
 and only mention properties shown in \cite{LMS18} that are
 necessary for the modification to $\mathcal I'$
 and the correctness of the resulting reduction.
The specific construction of the input graph of \kDDP{} will be provided
 later in order to estimate the pathwidth.

\begin{property}[Graphs and demand sets of
 $\mathcal I'_1$, \ldots, $\mathcal I'_\alpha$]
Let $1\leq q\leq\alpha$ and
 $G_q$ be the graph of $\mathcal I'_q$.
\begin{itemize}
\item
For every $(i,j)\in [K]\times [K]$, $G_q$ contains a certain vertex $d^q_{i,j}$.
\item
For every $1\leq p\leq m$, $G_q$ contains certain vertices $s^q_p$ and $t^q_p$.
\item
For every $1\leq p\leq m$ and $(i,j)\in [K]\times [K]$
 such that $(i,j)\in S^q_p$,
 $G_q$ contains edges $(s^q_p,d^q_{i,j})$ and $(d^q_{i,j},t^q_p)$.
The vertices $s^q_p$ and $t^q_p$ are only incident to these edges.
\item
The demand set of $\mathcal I'_q$ contains $(s^q_p,t^q_p)$
 unless $S^q_p=\emptyset$.
\end{itemize}
\end{property}
%
\begin{lemma}[\cite{LMS18}]
\label{lm:Gq}
For each $1\leq q\leq\alpha$, the following are equivalent.
\begin{itemize}
\item
There exists a set $S^q=\{(i,\rho_q(i))\mid i\in [K]\}$, where
 $\rho_q$ is a permutation $[K]\rightarrow [K]$,
 such that
 $S^q$ hits all the nonempty sets in $S^1_p$, \ldots, $S^\alpha_p$, i.e.,
 $S^q\cap S^q_p\neq\emptyset$
 for every $1\leq p\leq m$ with $S^q_p\neq\emptyset$.
\item
The graph $G_q$ has disjoint paths for the demand set of $\mathcal I'_q$,
 one of which is a path $(s^q_p,d^q_{i,\rho_q(i)},t^q_p)$ for some $i$.
\end{itemize}
\end{lemma}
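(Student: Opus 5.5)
Since this equivalence is established in \cite{LMS18} for a single instance of \KKPHS{}, the plan is to apply that result verbatim to $\mathcal I_q$ and $\mathcal I'_q$. We only check that the quantifier ``every $p$ with $S^q_p\neq\emptyset$'' matches the demand set of $\mathcal I'_q$. (The phrase ``$S^1_p$, \ldots, $S^\alpha_p$'' in the statement should read $S^q_p$ for the fixed $q$.) The property that $(s^q_p,t^q_p)$ is a demand exactly when $S^q_p\neq\emptyset$ is what makes the restriction to nonempty sets harmless: empty sets contribute no demand, so they are simply dropped from the instance given to \cite{LMS18}.

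Concretely, I would recall the two structural facts of the construction of \cite{LMS18} that the argument uses. The first concerns the demands of $\mathcal I'_q$ other than the pairs $(s^q_p,t^q_p)$, which I will call the grid demands. In every solution, the grid demands occupy all vertices $d^q_{i,j}$ except exactly one per row and one per column, so the unused ones are $d^q_{i,\rho_q(i)}$ for a permutation $\rho_q$. Conversely, for every permutation $\rho_q$ the grid demands can be routed so as to leave exactly $\{d^q_{i,\rho_q(i)}\}$ unused. The second fact is that the remaining parts of the grid gadget (copies or selector structures in \cite{LMS18}) allow each $(s^q_p,t^q_p)$ to be routed through any free $d^q_{i,\rho_q(i)}$ with $(i,\rho_q(i))\in S^q_p$, independently of the other $p$.

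With these facts, the forward direction goes as follows. Given $\rho_q$ hitting every nonempty $S^q_p$, route the grid demands for $\rho_q$. Then, for each $p$, pick $i$ with $(i,\rho_q(i))\in S^q_p$ and use the path $(s^q_p,d^q_{i,\rho_q(i)},t^q_p)$, which exists by the edge property. For the backward direction, take a solution and let $\rho_q$ be the permutation given by the first fact. Since $s^q_p$ and $t^q_p$ are incident only to the edges $(s^q_p,d^q_{i,j})$ and $(d^q_{i,j},t^q_p)$ with $(i,j)\in S^q_p$, the $(s^q_p,t^q_p)$-path starts by entering some $d^q_{i,j}$ with $(i,j)\in S^q_p$. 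That vertex is not used by a grid demand, so it is free, hence $j=\rho_q(i)$ and $S^q\cap S^q_p\neq\emptyset$. The path can be shortcut to the length-two form, which yields the stated shape.

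The main obstacle is the first structural fact, namely that the grid demands force the free $d$-vertices to form exactly a permutation. This is the core of the pathwidth gadget of \cite{LMS18}, and there I would simply cite their argument rather than reprove it. A secondary point is the case where several sets $S^q_p$ are hit by the same element: disjointness needs the gadget's mechanism that lets multiple demands share one selected cell. I would take this mechanism directly from \cite{LMS18}, and it is exactly what the phrase ``one of which is a path $(s^q_p,d^q_{i,\rho_q(i)},t^q_p)$'' abstracts.
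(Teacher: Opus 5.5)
Your proposal is correct and takes the same approach as the paper, which gives no argument of its own and imports this equivalence directly from \cite{LMS18}; your reading of ``$S^1_p,\ldots,S^\alpha_p$'' as $S^q_p$ and your shortcut of the $(s^q_p,t^q_p)$-path to length two are both sound. The ``sharing'' mechanism you defer to is visible in the gadget description given later in the paper: each $p$ has its own gadget $G[p,q]$ with its own copy of the vertices $d_{i,j}$ on the paths $P_i$, and identifying $v^*_{i,j}$ in $G[p,q]$ with $v_{i,j}$ in $G[p+1,q]$ forces the same permutation $\rho_q$ in every copy, so distinct demands never compete for the same $d$-vertex.
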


We modify $\mathcal I'$ to
 an input of \kDDP, denoted by $\bar{\mathcal I}$, as follows.
\begin{itemize}
\item
The vertices $s^1_p$, \ldots, $s^\alpha_p$ are identified with
 a single vertex $s_p$.
\item
The vertices $t^1_p$, \ldots, $t^\alpha_p$ are identified with
 a single vertex $t_p$.
\item
The demands $(s^1_p,t^1_p)$, \ldots, $(s^\alpha_p,t^\alpha_p)$
 are identified with a single demand $(s_p,t_p)$.
\end{itemize}
\begin{lemma}
\label{lm:aKKPHS-kDDP}
A solution for the input $\mathcal I$ of \aKKPHS{} exists
 if and only if
 a solution for the input $\bar{\mathcal I}$ of  \kDDP{} exists.
\end{lemma}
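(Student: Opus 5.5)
We prove the two directions separately, using Lemma~\ref{lm:Gq} for each block $q$ and the listed properties of $G_q$ (in particular that $s^q_p,t^q_p$ are incident only to the edges $(s^q_p,d^q_{i,j})$, $(d^q_{i,j},t^q_p)$ with $(i,j)\in S^q_p$). The main difficulty is the forward direction. There a single demand $(s_p,t_p)$ in $\bar{\mathcal I}$ must be routed through only one block $q$, while the other blocks $q'$ lose the demand $(s^{q'}_p,t^{q'}_p)$. Lemma~\ref{lm:Gq} does not directly say what happens when demands are removed from $\mathcal I'_{q'}$.

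For the direction ``$\bar{\mathcal I}$ solvable $\Rightarrow$ $\mathcal I$ solvable'', take disjoint paths in $\bar G$. The path from $s_p$ starts with an edge $(s_p,d^q_{i,j})$ for some $q$ and $(i,j)\in S^q_p$, then lies in $G_q$. We want it to have the form $(s_p,d^q_{i,j},t_p)$; at minimum it hits $d^q_{i,j}$. Restricting the paths to each $G_q$ and splitting the identified terminals back, we would like to argue that the paths inside $G_q$, together with the remaining $\mathcal I'_q$ demands, yield a permutation $\rho_q$ with $d^q_{i,\rho_q(i)}$ being the vertices used by the short paths. This needs the structural statement from \cite{LMS18} underlying Lemma~\ref{lm:Gq}: in any solution, the vertices $d^q_{i,j}$ not occupied by the ``permutation'' paths are exactly the $d^q_{i,\rho_q(i)}$. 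Concretely, the other demands of $\mathcal I'_q$ force, for each row $i$, exactly one free vertex $d^q_{i,\rho_q(i)}$ with $\rho_q$ a permutation. So we prove $S=\bigcup_q\{(i+(q-1)K,\rho_q(i))\}$ hits each $S_p$ in the block $q$ containing $d^q_{i,j}$. This gives a solution of $\mathcal I$.

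For the converse, given $S$ solving $\mathcal I$, let $\rho_q$ be the permutation of the $q$th grid. For each $p$, choose one $q(p)$ and $(i,\rho_{q(p)}(i))\in S\cap S^{q(p)}_p$, and route $(s_p,d^{q(p)}_{i,\rho_{q(p)}(i)},t_p)$. In each $G_q$ we route the non-terminal demands of $\mathcal I'_q$ as in the canonical solution of \cite{LMS18} for $\rho_q$, which leaves all $d^q_{i,\rho_q(i)}$ free. Disjointness then needs two checks. First, distinct $p$ choosing the same vertex must be avoided; if several clauses choose the same $d$ vertex, they would collide. We handle this the way \cite{LMS18} does: the gadget provides copies, or we check in the construction that the per-clause vertex is $d^{q}_{i,j}$ duplicated per $p$. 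This is the step I expect to be the main obstacle. I would first inspect the explicit construction of $G_q$ given later to confirm that hitting paths for different $p$ can share grid positions, e.g., through distinct per-$p$ copies. Second, because blocks are vertex-disjoint apart from $s_p,t_p$, and each $s_p,t_p$ is used by exactly one path, the union is a valid solution of $\bar{\mathcal I}$.
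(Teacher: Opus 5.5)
Your plan matches the paper's. You choose a block $q_p$ for each $p$, drop the demand $(s^q_p,t^q_p)$ in every other block $q$, apply \cite{LMS18} blockwise, and glue along the shared vertices $s_p,t_p$. As written, though, your direction ``$\mathcal I$ solvable $\Rightarrow$ $\bar{\mathcal I}$ solvable'' stops at the step you call the main obstacle. Both of your directions also lean on a ``canonical solution'' and a ``free $d$-vertex'' structure from \cite{LMS18} that the paper never states. You need neither, and the obstacle is not real.

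\emph{The collision worry.} In the explicit construction, $d_{i,j}$ is a vertex of the path $P_i$ inside the gadget $G[p,q]$. The only identifications inside $G_q$ are $v^*_{i,j}$ of $G[p,q]$ with $v_{i,j}$ of $G[p+1,q]$. So each $p$ has its own copy of every $d_{i,j}$; the notation $d^q_{i,j}$ in the Property just suppresses $p$.

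\emph{Deleted demands.} Your remark that Lemma~\ref{lm:Gq} says nothing about deleted demands is answered by its empty-set clause. Since $s^q_p$ has only out-edges and $t^q_p$ only in-edges, neither can be an interior vertex of any path. Hence deleting the demand $(s^q_p,t^q_p)$ is equivalent to replacing $S^q_p$ by $\emptyset$, which Lemma~\ref{lm:Gq} explicitly allows.

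\emph{How the paper uses this.} It forms $\mathcal I''_q$ by deleting the demands of all $p$ with $q_p\neq q$, and applies Lemma~\ref{lm:Gq} as a black box to each $\mathcal I''_q$ in both directions. In the forward direction, $S^q$ hits every nonempty set of $\mathcal I''_q$. Lemma~\ref{lm:Gq} then returns disjoint paths for all its demands at once, so disjointness among different $p$ in one block is part of the conclusion rather than something you must check. In the backward direction, the given solution cannot leave $G_q$ except through the sources $s_p$ and sinks $t_p$, so its restriction to $G_q$ solves $\mathcal I''_q$. The plain equivalence of Lemma~\ref{lm:Gq} then already yields a permutation $\rho_q$ hitting $S^q_p$ for every $p$ with $q_p=q$. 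You therefore do not need the finer claim about which $d$-vertices remain free.
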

\begin{proof}
We first assume that a solution for $\mathcal I$ exists.
This means that for each $1\leq q\leq\alpha$,
 there exists a set $S^q=\{(i,\rho_q(i))\mid i\in [K]\}$, together with
 a permutation $\rho_q:[K]\rightarrow [K]$, such that
 for every $1\leq p\leq m$,
 there exists $1\leq q_p\leq\alpha$ with $S^{q_p}\cap S^{q_p}_p\neq\emptyset$.
Let $\mathcal I''_1$, \ldots, $\mathcal I''_\alpha$ be $\alpha$ inputs of \kDDP{}
 obtained from $\mathcal I'_1$, \ldots, $\mathcal I'_\alpha$
 by,
 for every $1\leq p\leq m$,
 removing $(s^q_p$, $t^q_p)$ from every $\mathcal I'_q$ with $q\neq q_p$.
By lemma~\ref{lm:Gq},
 for each $1\leq q\leq\alpha$,
 the graph $G_q$ has disjoint paths connecting the demands of $\mathcal I''_q$.
Putting these paths together, we obtain disjoint paths connecting the demands of
 $\bar{\mathcal I}$.

We next assume that a solution for $\bar{\mathcal I}$ exists.
This means that for every $1\leq p\leq m$,
 there exists $1\leq q_p\leq\alpha$ such that
 one of the disjoint paths in the solution
 connecting the demand $(s_p,t_p)$
 passes through the vertex $d^{q_p}_{i,j}$ for some $i$ and $j$.
Defining
 $\mathcal I''_1$, \ldots, $\mathcal I''_\alpha$ as in the former case,
 the disjoint paths in the solution for $\bar{\mathcal I}$
 form solutions for
 $\mathcal I''_1$, \ldots, $\mathcal I''_\alpha$.
By Lemma~\ref{lm:Gq}, for each $1\leq q\leq\alpha$,
 there exist a set $S^{q}=\{(i,\rho_{q}(i))\mid i\in [K]\}$ and
 a permutation $\rho_{q}:[K]\rightarrow [K]$ such that
 for every $1\leq p\leq m$,
 if one of the disjoint paths connecting the demand $(s_p,t_p)$
 passes through the vertex $d^{q_p}_{i,j}$, then $\rho_{q_p}(i)=j$,
 and such that
 for every $1\leq p\leq m$ with $S^{q}_p\neq\emptyset$,
 $S^{q}\cap S^{q}_p\neq\emptyset$.
We thus obtain a solution $S=\bigcup_{1\leq q\leq\alpha}S^q$ for $\mathcal I$.
\end{proof}

\subsubsection{Estimation of Pathwidth}
We describe the definition of the graph $G$ of $\bar{\mathcal I}$
 in order to estimate the pathwidth $\pw$ of $G$.
The graph $G$ is composed of the graphs $G_1$, \ldots, $G_\alpha$
 of $\mathcal I'_1$, \ldots, $\mathcal I'_\alpha$, respectively,
 defined in \cite{LMS18}.
For each $1\leq q\leq \alpha$,
 $G_q$ is composed of a collection of gadgets $G[p,q]$ ($1\leq p\leq m$),
 consisting of the following components.
\begin{itemize}
\item
A vertex $a_i$ for every $1\leq i\leq K$.
\item
A vertex $b_j$ for every $1\leq j\leq K$.
\item
A vertex $v_{i, j}$ and edges $(a_i,v_{i, j})$ and $(v_{i, j}, b_j)$
 for every $1\leq i\leq K$ and $1\leq j\leq K$.
\item
A path
 $P_i=(c_{i,0},d_{i,1},v^*_{i,1},c_{i,1},\ldots,d_{i,K},v^*_{i,K},c_{i,K})$
 for every $1\leq i\leq K$.
\item
Vertices $f_{i,j}$, $f^1_{i,j}$, and $f^2_{i,j}$, and
edges $(f^1_{i,j},f_{i,j})$, $(f_{i,j},f^2_{i,j})$, $(b_j, f_{i, j})$,
 $(f_{i, j}, c_{i, j})$, $(f^1_{i, j}, c_{i, 0})$, and
 $(c_{i, j - 1}, f^2_{i, j})$
 for every $1\leq i\leq K$ and $1\leq j\leq K$.
\item
Vertices $s_p, t_p$, and
 for every $i$ and $j$ satisfying $(i,j)\in S^q_p$,
 edges $(s_p,d_{i,j})$ and $(d_{i,j},t_p)$.
\end{itemize}

We obtain the graph $G_q$ by,
 for every $1\leq p< m$, $1\leq i\leq K$, and $1\leq j\leq K$,
 identifying
 the vertex $v^*_{i, j}$ in the gadget $G[p, q]$ and
 the vertex $v_{i, j}$ in the gadget $G[p+1, q]$.
As already mentioned,
 we obtain the graph $G$ by identifying the vertices
 $s_p$ ($t_p$, resp.) contained in $G_1$, \ldots, $G_\alpha$ with
 a single vertex $s_p$ ($t_p$, resp.) (Fig.~\ref{fig:gadget}).
\begin{figure}[t]
\centering
\raisebox{-\height}{\includegraphics[scale=0.85]{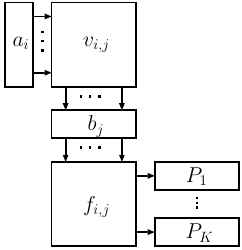}}
\quad
\raisebox{-\height}{\includegraphics[scale=0.75]{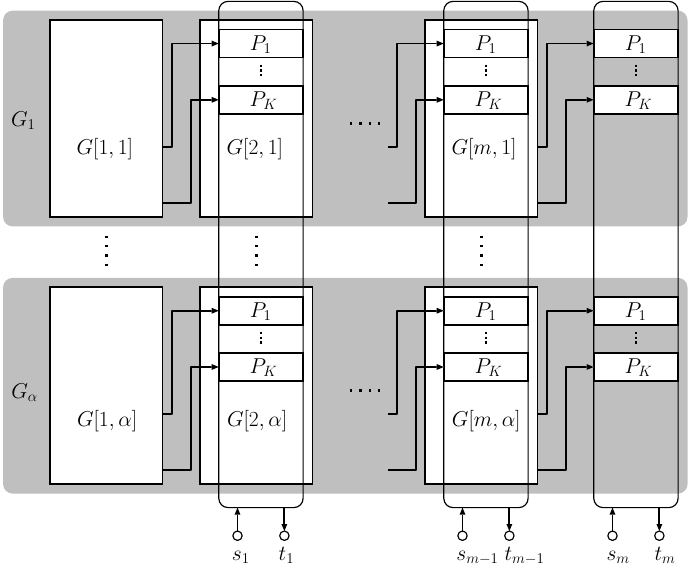}}
\caption{Gadget $G[p,q]$ (left) and the input graph $G$ of \kDDP{} (right).}
\label{fig:gadget}
\end{figure}
\begin{lemma}
\label{lm:pw}
$G$ has the pathwidth $\pw=\alpha K+\mathcal{O}(K)$.
\end{lemma}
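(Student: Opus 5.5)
We only need the upper bound $\pw\leq\alpha K+\mathcal O(K)$; that is all the reduction uses, and I will not try to prove a matching lower bound. The plan is to give an explicit path decomposition. It sweeps the gadgets in the order $p=1,2,\ldots,m$, and within a fixed $p$ it processes the gadgets $G[p,1],\ldots,G[p,\alpha]$ one after another. Two kinds of vertices connect distinct gadgets. The first are the identified vertices $v^*_{i,j}$ of $G[p,q]$, which coincide with $v_{i,j}$ of $G[p+1,q]$; these are $K^2$ vertices per pair $(p,q)$, far too many to keep in a bag. The second are the shared vertices $s_p,t_p$.

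The first key step is to locate a small interface between consecutive layers of each $G_q$. In gadget $G[p,q]$ the path $P_i$ runs through $c_{i,0},d_{i,1},v^*_{i,1},c_{i,1},\ldots$. I will sweep the rows $i=1,\ldots,K$ one at a time, and within row $i$ sweep $j=1,\ldots,K$. The bag for the step $(p,q,i,j)$ contains the following vertices:
\begin{itemize}
\item one ``frontier'' vertex per row for every $q'\neq q$, namely the vertex of the previous layer that still has to be linked forward; this must be made to cost only one vertex per row, giving the $\alpha K$ term;
\item for the current $q$, the $\mathcal O(K)$ vertices $a_1,\ldots,a_K$ and $b_1,\ldots,b_K$ of $G[p,q]$;
\item the constant-size set $c_{i,0},c_{i,j-1},c_{i,j},d_{i,j},v_{i,j},v^*_{i,j},f_{i,j},f^1_{i,j},f^2_{i,j}$;
\item $s_p$ and $t_p$.
\end{itemize}
Every edge of the gadget then appears in some bag. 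The edges $(a_i,v_{i,j})$ and $(v_{i,j},b_j)$ are covered because the $a$'s and $b$'s stay for the whole gadget. The edges $(b_j,f_{i,j})$, $(f_{i,j},c_{i,j})$, $(f^1_{i,j},c_{i,0})$ and $(c_{i,j-1},f^2_{i,j})$ are covered since $c_{i,0}$ is kept throughout row $i$. The edges from $s_p$ and $t_p$ to $d_{i,j}$ are covered because $s_p,t_p$ are in every bag of layer $p$.

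The main obstacle is the interface between layers. The $v^*_{i,j}$ of layer $p$ are the $v_{i,j}$ of layer $p+1$, and naively all $K^2$ of them would have to persist. To avoid this, I would not finish layer $p$ before starting layer $p+1$. Instead, for a fixed $q$, I interleave them: while processing cell $(i,j)$ of $G[p,q]$, I simultaneously process the $v_{i,j}$-neighbourhood of $G[p+1,q]$, so that $v^*_{i,j}$ is introduced and forgotten within $\mathcal O(1)$ consecutive bags. The price is that $2K$ vertices ($a$'s and $b$'s) are kept for each of $\mathcal O(1)$ consecutive gadget layers at once, together with the $c_{i,0}$-type vertex per row. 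Concretely, I will sweep over cells $(i,j)$ in row-major order across a ``diagonal'' of layers, so that at any time each $G_q$ contributes at most one pending vertex per row. The $\alpha$ graphs are interleaved only through $s_p,t_p$, so a cruder global arrangement is also available: process $G_1,\ldots,G_\alpha$ in parallel layer by layer, with each inactive $G_{q'}$ keeping one pending vertex per row. I expect the delicate part to be this bookkeeping, verifying that each inactive $G_{q'}$ needs only $K$ retained vertices rather than $2K$ or $K^2$.

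Once the decomposition is written, connectivity of each vertex's bag interval must be checked; this follows from the sweep order. Summing the pieces gives width $\alpha K+\mathcal O(K)$: $\alpha K$ from the per-row frontier vertices of all $\alpha$ graphs, $\mathcal O(K)$ from the active gadget's $a$'s, $b$'s and $c_{i,0}$'s, and $\mathcal O(1)$ for the remaining local vertices and $s_p,t_p$.
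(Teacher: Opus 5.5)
Your global plan matches the paper's. You sweep $p$ outermost because $s_p,t_p$ are shared by all $G[p,q']$, then $q$, then rows and columns. You also cover each identified vertex $v^*_{i,j}$ of $G[p,q]$ (which is $v_{i,j}$ of $G[p+1,q]$) while sweeping $G[p,q]$, bringing its neighbours $a_i,b_j$ of $G[p+1,q]$ in early. Keeping only $\mathcal O(1)$ vertices of $P_i$ plus $c_{i,0}$, instead of all of $P_i$, is a harmless variation.

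The gap is that the step fixing the coefficient of $\alpha K$ is never carried out, and your bags as written do not achieve it. You never say which $K$ vertices an inactive $G_{q'}$ retains: you call them ``one frontier vertex per row'' and postpone the check. Meanwhile your bag for the active gadget keeps all of $a_1,\ldots,a_K,b_1,\ldots,b_K$ of $G[p,q]$ (and $v_{i,j}$ of $G[p,q]$) during its sweep. Your interleaving already uses the $a$'s and $b$'s of $G[p+1,q']$ during the sweep of $G[p,q']$. So both the $a$'s and the $b$'s of every $G[p+1,q']$ must stay in all bags while the other gadgets are swept. That gives width $2\alpha K+\mathcal O(K)$. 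Re-including $v_{i,j}$ of $G[p,q]$ for $p\ge 2$ would even bring back the $K^2$ carry. Since $\alpha$ grows with $n$, the $\mathcal O(K)$ term is negligible, but a factor $2$ in front of $\alpha K$ is not: it would only give $\pw\log\pw\approx 2n$, which is too weak for the SETH bound.

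The missing observation is that the $a$'s need not be carried at all.
\begin{itemize}
\item The vertex $a_i$ of $G[p+1,q]$ is adjacent only to $v_{i,1},\ldots,v_{i,K}$ of $G[p+1,q]$, i.e.\ to $v^*_{i,1},\ldots,v^*_{i,K}$ on $P_i$ of $G[p,q]$. So it can be introduced and forgotten within row $i$ of the sweep of $G[p,q]$, and must be left out of the bags of $G[p+1,q]$, as must $v_{i,j}$.
\item Only $b_j$ of $G[p+1,q]$ genuinely spans two sweeps. It is adjacent to $v_{i,j}$ for every $i$, covered while sweeping $G[p,q]$, and to $f_{i,j}$ for every $i$, covered while sweeping $G[p+1,q]$.
\end{itemize}
So the frontier consists of column vertices, not row vertices. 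The bag at step $(p,q,i,j)$ should contain:
\begin{itemize}
\item $b_1,\ldots,b_K$ of $G[p,q'']$ for $q''\ge q$ and of $G[p+1,q'']$ for $q''\le q$, which is $(\alpha+1)K$ vertices;
\item $a_i$ of $G[p+1,q]$;
\item the $\mathcal O(1)$ local vertices of row $i$ of $G[p,q]$;
\item $s_p$ and $t_p$.
\end{itemize}
This is exactly the paper's decomposition. With it, each $b_j$ of $G[p,q]$ occupies the contiguous range of steps from $(p-1,q,1,1)$ to $(p,q,K,K)$, every edge is covered, and the count gives $\alpha K+\mathcal O(K)$.
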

\begin{proof}
We construct the path decomposition of $G$.
For every $0\leq p\leq m$, $1\leq q\leq \alpha$, $1\leq i\leq K$, and
 $1\leq j\leq K$,
 let $B[p,q,i,j]$ be a bag containing the following vertices.
\begin{itemize}
\item
The vertices $f_{i, j}$, $f^1_{i, j}$, $f^2_{i, j}$, $s_p$, $t_p$,
 and the vertices of $P_i$ in $G[p,q]$ (if $p>0$).
\item
The vertex $a_i$ in $G[p+1,q]$ (if $p<m$).
\item
The vertices $b_1$, \ldots, $b_K$ in
$G[p,q]$, $G[p,q+1]$, \dots, $G[p,\alpha]$ (if $p>0$).
\item
The vertices $b_1$, \dots, $b_K$ in $G[p+1,1]$, \dots, $G[p+1,q]$ (if $p<m$).
\end{itemize}
In addition,
 we add the vertices $v_{i, 1}$, \ldots, $v_{i, K}$ in $G[1,q]$ to $B[0,q,i,j]$.
We can obtain a path decomposition of $G$ by ordering these bags
 lexicographically according to $(p,q,i,j)$.
Actually, we can observe the following for every
 $1\leq p\leq m$ and $1\leq q\leq\alpha$.
\begin{itemize}
\item
The vertices $f_{i,j}$, $f^1_{i,j}$, and $f^2_{i,j}$ in $G[p,q]$
 appear only in the bag $B[p,q,i,j]$.
\item
The vertices $s_p$ and $t_p$ in $G[p,q]$
 appear only in the consecutive bags from $B[p,1,1,1]$ to $B[p,\alpha,K,K]$.
\item
The vertices of the path $P_i$ in $G[p,q]$ appear only in
 the consecutive bags from $B[p,q,i,1]$ to $B[p,q,i,K]$.
\item
The vertex $a_i$ in $G[p,q]$ appear only in
 the consecutive bags from $B[p-1,q,i,1]$ to $B[p-1,q,i,K]$.
\item
The vertices $b_1$, \ldots, $b_K$ in $G[p, q]$ appear only in
 the consecutive bags from $B[p-1,q,1,1]$ to $B[p,q,K,K]$.
\item
The vertices $v_{i,1}$, \ldots, $v_{i,K}$ in $G[1,q]$ appear only in
 the consecutive bags $B[0,q,i,1]$ to $B[0,q,i,K]$.
\end{itemize}
From these observations, we can also observe that
 for every edge in $G$, there is a bag containing
 both of the end-vertices of the edge.
Therefore, $G$ has the pathwidth
 $\pw\leq \{3+2+(3K+1)+1+(\alpha+1)K\}-1
 =\alpha K+\mathcal{O}(K)$.
\end{proof}

We complete the proof of Theorem~\ref{th:LowerBound_SETH} by
 deriving $\pw\log\pw=n+o(n)$ for the number $n$ of variables
 of the input CNF formula $\varphi$ of \CSAT.
%
It follows from (\ref{eq:2XXLower}) and the definition of $X$ that
\begin{equation}
\label{eq:loga}
\log\alpha=\log\left(\mathcal O\left(\frac n {2^XX}\right)\right)
=\mathcal O(\log\log n)=o(X).
\end{equation}
It follows from Lemma~\ref{lm:pw},
 (\ref{eq:2XXUpper}), (\ref{eq:a2XX}), and (\ref{eq:loga}) that
\begin{alignat*}{2}
\pw\log\pw
&=(\alpha K+\mathcal O(K))\log(\alpha K+\mathcal O(K))
& \qquad & \text{[by Lemma~\ref{lm:pw}]}\\
&=(\alpha K+\mathcal O(K))(\log\alpha+\log K+\mathcal O(1))\\
&=\alpha 2^X(o(X)+X)+\mathcal O(2^X)(o(X)+X))
& \qquad &\text{[by (\ref{eq:loga})]}\\
&=n+o(n).
& \qquad &\text{[by (\ref{eq:a2XX}) and (\ref{eq:2XXUpper})]}
\end{alignat*}
Combined with Lemmas \ref{lm:CSAT-aKKPHS} and~\ref{lm:aKKPHS-kDDP},
 Theorem~\ref{th:LowerBound_SETH} is obtained.

\subsection{Tight Algorithm for General $k$}
\label{ssc:Algorithm_pw}
In order to modify the algorithm designed for \kUDP{} in \cite{Scheffler94}
 so that it can solve \kDDP,
 we describe properties necessary for the modification.\footnote{%
The following description is a reinterpretation of \cite{Scheffler94}
 and differs from the original description in some details.}
Suppose that an $n$-vertex undirected graph $G$ and
 its nice tree decomposition of width $\tw$ are given, and
 that $S=\{s_1,t_1,\ldots,s_k,t_k\}$ is the set of demand vertices.
It was shown in \cite{Scheffler94} that
 for each node $x$ in a nice tree decomposition $\mathcal{T}$,
 a solution $L$ of \kUDP,
 i.e., a set $L$ of disjoint paths connecting the demands
 is contained in $G_x$
 (the subgraph of $G$ induced by vertices and edges introduced in
 $x$ and its descendant nodes)
 in the following form.
\begin{lemma}[\cite{Scheffler94}, Lemmas 7 and~9]
\label{lm:UndirectedGxL}
Let $L$ be a solution of \kUDP.
For each node $x$,
 any connected component of the linear forest
 $G_x\cap L$ that is not a isolated vertex
 is an undirected $vw$-path of one of the following types.
\begin{enumerate}
\item
$v,w \in B_x\setminus S$.
\item
$v \in B_x\setminus S$ and $w\in S$.
\item
For some $i\in [k]$, $v=s_i$ and $w=t_i$.
\end{enumerate}
\end{lemma}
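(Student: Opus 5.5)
The plan is to fix a solution $L$ of \kUDP, viewed as a set of $k$ vertex-disjoint paths $P_1,\ldots,P_k$ with $P_i$ connecting $s_i$ and $t_i$, and a node $x$, and to look at one nontrivial connected component $Q$ of $G_x\cap L$. Here $G_x\cap L$ means the subgraph with vertex set $V_x\cap V(L)$ and edge set $E_x\cap E(L)$. Since $L$ is a linear forest, so is $G_x\cap L$, and $Q$ is a subpath of some $P_i$, say with end-vertices $v$ and $w$. The whole argument reduces to a single structural claim: every end-vertex of $Q$ either lies in $B_x$ or is a demand vertex, i.e.\ lies in $S$. Once this holds, the three types follow by cases on whether $v$ and $w$ belong to $S$.

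The key step is to prove that claim. Take an end-vertex $v$ of $Q$ with $v\notin S$. Since $v$ is not an end of $P_i$, it is an internal vertex of $P_i$ and so has two incident edges in $P_i$. As $v$ is an end of $Q$, only one of these edges, call the other $e=vu$, lies in $E_x$. Suppose $v\notin B_x$; then $v\in V_x\setminus B_x$, so $v$ was forgotten at some descendant of $x$ (or at $x$ itself, depending on the convention). By the connectivity condition of tree decompositions, every bag containing $v$ lies in the subtree rooted at $x$. The edge $e$ is introduced at some node whose bag contains both $u$ and $v$, so that node lies in the subtree of $x$, giving $e\in E_x$, a contradiction. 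Hence $v\in B_x$. I expect this to be the only real obstacle: one must use the nice-decomposition convention carefully (each edge introduced exactly once, at a node whose bag contains both endpoints) to conclude that $E_x$ contains every edge incident to a vertex of $V_x\setminus B_x$.

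With the claim established, I split into cases. If neither end is in $S$, we get type~1. If exactly one end is in $S$, we get type~2, after renaming so that $w\in S$. If both ends are in $S$, then, since $Q$ is a subpath of $P_i$ and the only vertices of $P_i$ lying in $S$ are its own endpoints $s_i$ and $t_i$ (the paths of $L$ are disjoint and the demand vertices are distinct), we must have $\{v,w\}=\{s_i,t_i\}$, i.e.\ $Q=P_i$. This is type~3, orienting it as $v=s_i$, $w=t_i$. One small point remains in the first two cases: an end $v\notin S$ of a nontrivial component is placed in $B_x\setminus S$, as required, while ends in $S$ need not lie in $B_x$, which is consistent with the statement.
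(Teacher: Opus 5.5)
Your proof is correct. Its key step is that a vertex of $V_x\setminus B_x$ lies only in bags of the subtree rooted at $x$, so every edge of $L$ at such a vertex is introduced below $x$ and belongs to $E_x$; this forces every end of a nontrivial component that lies outside $S$ into $B_x$, and your case split, together with the observation that $P_i$ meets $S$ only in $\{s_i,t_i\}$, then yields exactly the three types. The paper gives no proof of its own and simply imports the statement from Scheffler's Lemmas 7 and 9, so there is no in-paper argument to compare against.
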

On the basis of Lemma~\ref{lm:UndirectedGxL},
 for each node $x$ and any possible partial solution $L_x$ in $G_x$,
 the algorithm of \cite{Scheffler94} computes
 information indicating that which type of connected component of $L_x$
 each vertex of the bag $B_x$ is contained in, by dynamic programming.
This information is represented by a mapping
 $\varphi:B_x\rightarrow
\{\mathsf 0,\mathsf 1\}\cup [k]\cup\{B_x\times\{\mathsf c,\mathsf d\}\}$.
For a vertex $v\in B_x$, $\varphi (v)$ indicates the following information.
\begin{description}
\item[$\varphi(v)=\mathsf 0$] indicates that
 $v$ is not contained in $L_x$ or an isolated vertex in $L_x$.
\item[$\varphi(v)=\mathsf 1$] indicates that
 $v$ is an internal vertex of a connected component of $L_x$
 or a non-isolated demand vertex in $S$, i.e.,
 $v$ is an internal vertex of a path of any type or an end-vertex in $S$
 of a path of type 2 or 3.
\item[{$\varphi(v)=i\in [k]$}] indicates that
 there is a connected component of $L_x$ connecting
 $v\in B_x\setminus S$ and $w\in\{s_i,t_i\}$
 but no connected component connecting
 $u\in B_x\setminus S$ and $w'\in\{s_i,t_i\}\setminus\{w\}$, i.e.,
 $v$ is an end-vertex of a path of type 2.
\item[$\varphi(u)=(v,\mathsf c)$ and $\varphi(v)=(u,\mathsf c)$] indicate that
 there is a connected component of $L_x$ connecting
 $u\in B_x\setminus S$ and $v\in B_x\setminus S$, i.e.,
 $u$ and $v$ are end-vertices of a path of type 1.
\item[$\varphi(u)=(v,\mathsf d)$ and $\varphi(v)=(u,\mathsf d)$] indicate that
 for some $i$,
 there are two connected components of $L_x$:
 one connecting $u\in B_x\setminus S$ and $w\in\{s_i,t_i\}\setminus B_x$
 and the other connecting
 $v\in B_x\setminus S$ and $w'\in\{s_i,t_i\}\setminus (B_x\cup\{w\})$,
 i.e.,
 $u$ and $v$ are internal vertices of a path of type 3.
\end{description}
Besides, we use the following fact obtained from the definition of
 the tree decomposition.
\begin{theorem}[\cite{Scheffler94}, Theorem~6]
\label{th:CutCondition}
For a node $x$,
 let $I_x$ be the set of demands $(s_i,t_i)$ such that
 $G_x$ has exactly one of $s_i$ and $t_i$.
If there is a node $x$ such that $|I_x|>|B_x|$, then
 \kUDP{} has no solution.
\end{theorem}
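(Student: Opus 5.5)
The statement is Theorem~\ref{th:CutCondition}: if some node $x$ has $|I_x|>|B_x|$, then \kUDP{} has no solution. The plan is a separator argument combined with vertex-disjointness.

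First, I use the standard fact that a bag separates the two sides of a tree decomposition. Every edge of $G$ with one end in $V_x\setminus B_x$ has its other end in $V_x$. Indeed, any vertex of $V_x\setminus B_x$ occurs only in bags of the subtree rooted at $x$, by the connectivity condition of tree decompositions. So a bag containing both endpoints of such an edge lies in that subtree, and the other endpoint is then in $V_x$. Consequently every path in $G$ from a vertex of $V_x$ to a vertex outside $V_x$ contains a vertex of $B_x$. This holds because the last vertex of the path lying in $V_x$ has an out-of-$V_x$ neighbour, and so it cannot be in $V_x\setminus B_x$.

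Next, suppose for contradiction that a solution $L$ exists, consisting of pairwise vertex-disjoint paths $P_1,\ldots,P_k$ with $P_i$ connecting $s_i$ and $t_i$. For each demand $(s_i,t_i)\in I_x$, exactly one endpoint of $P_i$ lies in $V_x$ and the other lies outside $V_x$. By the previous step, $P_i$ contains some vertex $b_i\in B_x$. This includes the case where the endpoint in $V_x$ is itself in $B_x$; then $b_i$ can be taken to be that endpoint.

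Finally, since the paths $P_i$ are pairwise vertex-disjoint, the vertices $b_i$ for $(s_i,t_i)\in I_x$ are pairwise distinct. This gives an injection from $I_x$ into $B_x$, so $|I_x|\leq|B_x|$, contradicting the assumption.

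The only step needing care is the separation claim. The subtree-connectivity condition must be applied correctly, so that a vertex of $V_x\setminus B_x$ never appears in any bag outside the subtree of $x$; otherwise $x$ itself would lie on the tree path between those two bags, forcing that vertex into $B_x$. The rest is immediate, and nothing depends on edge directions, so the same argument also applies to \kDDP.
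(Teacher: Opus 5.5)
Your argument is correct and matches the paper's justification. The paper gives no separate proof: it cites Scheffler and says the fact follows from the definition of a tree decomposition, which is exactly your observation that $B_x$ separates $V_x\setminus B_x$ from $V(G)\setminus V_x$, so each of the pairwise vertex-disjoint paths for demands in $I_x$ must use a distinct vertex of $B_x$; your remark that edge directions play no role also matches the paper's later use of the same bound for \kDDP.
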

By Theorem~\ref{th:CutCondition},
 the number of $i$'s such that $\varphi(v)=i\in [k]$ is at most
 the maximum bag size $\tw+1$.
Therefore,
 the size of information retained at a node is at most
$(2+(\tw+1)+2\tw)^{\tw+1}=(3\tw+3)^{\tw+1}$.

This algorithm, as well as many other algorithms,
 first transforms a given tree decomposition to a nice tree decomposition
 $\mathcal T$ with $\mathcal O(n)$ nodes.
While the definition of a nice tree decomposition in \cite{Scheffler94}
 slightly differs from Definition~\ref{df:NiceTreeDecomposition},
 $\mathcal T$ can be computed in a time similar to that stated in 
 Proposition~\ref{pr:NiceTreeDecomposition}.

The bottleneck of the time complexity in the dynamic programming
 is the running time for a join node, which is at most
 $(3\tw+3)^{2(\tw+1)}\cdot\mathcal O(\tw)$.
Therefore, we can estimate the total running time as at most
$(3\tw + 3)^{2(\tw + 1)}\cdot\mathcal O(\tw\cdot n)
=\tw^{2\tw}\cdot 2^{\mathcal O(\tw)}\cdot\mathcal O(n)
=2^{2\tw\log\tw+\mathcal O(\tw)}\cdot n$.
If $\mathcal T$ is a path decomposition of width $\pw$, then
 because there is no join node, the running time can be reduced to
$(3\pw + 3)^{\pw + 1}\cdot\mathcal O(\pw\cdot n)
=2^{\pw\log\pw+\mathcal O(\pw)}\cdot n$.

The above discussion can be extended to \kDDP{} on a directed graph $G$.
\begin{lemma}
\label{lm:directedGxL}
Let $L$ be a solution of \kDDP{}.
For each node $x$,
 any connected component of the linear forest
 $G_x\cap L$ that is not a isolated vertex
 is a directed $vw$-path of one of the following types.
\begin{enumerate}
\item
$v,w \in B_x \setminus S$.
\item
$v \in B_x \setminus S$ and $w\in \{t_1,\ldots,t_k\}$.
\item
$w \in B_x \setminus S$ and $v\in \{s_1,\ldots,s_k\}$.
\item
For some $i\in [k]$, $v=s_i$ and $w=t_i$.
\end{enumerate}
\end{lemma}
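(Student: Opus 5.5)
Fix a solution $L$ of \kDDP{} and a node $x$. Each connected component $P$ of $G_x\cap L$ that is not an isolated vertex is a subpath of a unique path $L_i$ of $L$, namely the $s_it_i$-path. This holds because $L$ is a linear forest, so every component of the subgraph $G_x\cap L$ is a directed subpath $vw$ of some $L_i$. The plan is to locate the two end-vertices $v$ (the origin of $P$) and $w$ (the destination of $P$), and show that each lies in $B_x$ or is a demand vertex of the appropriate kind. The four types then come from combining the two possibilities at each end.

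The key step is the following separation claim: if an end-vertex, say $w$, is not in $B_x$, then $w=t_i$. Suppose $w\notin B_x$ and $w\neq t_i$. Then $L_i$ has an edge $ww'$ leaving $w$ that does not belong to $P$, and this edge is not in $E_x$, since otherwise it would belong to the component $P$. We have $w\in V_x\setminus B_x$, because $P\subseteq G_x$ and $w\notin B_x$. By the standard property of tree decompositions, every edge incident to a vertex of $V_x\setminus B_x$ must be introduced at a node in the subtree of $x$: all bags containing $w$ lie in that subtree, and each edge is introduced at a node whose bag contains both of its end-vertices. Hence $ww'\in E_x$, a contradiction. In the same way, if $v\notin B_x$, then $v$ has no in-edge on $L_i$ outside $P$, so $v=s_i$. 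I will state this argument once and apply it symmetrically to both ends, using the direction of the path to decide which terminal occurs at which end.

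It remains to account for demand vertices that happen to lie in $B_x$. Since $S$ is not necessarily contained in the bags in this section, an end-vertex in $B_x\cap S$ has to be classified. If $w\in S$, then $w=t_i$, because a destination of a subpath of $L_i$ that lies in $S$ can only be $t_i$: the vertex $s_i$ has no in-edge on $L_i$, and all other demand vertices are off $L_i$. Similarly, $v\in S$ forces $v=s_i$. So each end is either in $B_x\setminus S$, or equal to the corresponding terminal ($s_i$ for the origin, $t_i$ for the destination). Listing the four combinations gives exactly types 1 to 4, and in type 4 both terminals belong to the same $i$.

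The main obstacle is the separation claim. It needs to be checked carefully against the definition of the nice tree decomposition in Definition~\ref{df:NiceTreeDecomposition}, since edges are introduced at specific nodes and not merely covered by bags. In particular, I must confirm that an edge between $w\in V_x\setminus B_x$ and any vertex is introduced below $x$. This follows because every node whose bag contains $w$ lies in the subtree rooted at $x$, by the connectivity condition together with $w\in V_x\setminus B_x$.
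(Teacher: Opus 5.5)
Your proof is correct: the separator property makes every edge at a vertex of $V_x\setminus B_x$ an edge of $G_x$, so an end of the component lying outside $B_x$ must be a terminal of $L_i$. The orientation of $L_i$ then forces $s_i$ at the origin and $t_i$ at the destination, and the same holds for ends in $B_x\cap S$. The paper gives no proof of its own and only presents the lemma as the directed extension of Scheffler's Lemmas 7 and 9, so your argument supplies what the paper leaves implicit; it also only uses the covering and connectivity conditions of the decomposition, so it applies equally to Scheffler's slightly different nice decompositions.
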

As in the undirected case,
 for each node $x$,
 we can define a mapping
$\varphi:B_x\rightarrow\{\mathsf 0,\mathsf 1\}\cup[k]_{s}\cup[k]_{t}
 \cup
 \{B_x\times\{\mathsf c_{\textup{in}},\mathsf c_{\textup{out}},
 \mathsf d_{\textup{in}},\mathsf d_{\textup{out}}\}\}$
 representing information to be retained at $x$.
Here, $[k]_s$ and $[k]_t$ distinguish types 2 and 3 in
 Lemma~\ref{lm:directedGxL}, and
 $\mathsf c_{\textup{in}}$, $\mathsf c_{\textup{out}}$,
 $\mathsf d_{\textup{in}}$, and $\mathsf d_{\textup{out}}$
 distinguish the origin and destination of a path of types 1 and 4.

Because Theorem~\ref{th:CutCondition} also holds for \kDDP,
 the size of information retained at a node is at most
$(2+2(\tw+1)+4\tw)^{\tw+1}=(6\tw+4)^{\tw+1}$.
According to the modification of $\varphi$,
 the algorithm of \cite{Scheffler94} can naturally be modified to
 an algorithm for solving \kDDP.
As with the original algorithm,
 the time complexity can be estimated as
$(6\tw + 4)^{2(\tw + 1)}\cdot\mathcal O(\tw\cdot n)
=2^{2\tw\log\tw+\mathcal O(\tw)}\cdot n$, and
 if $\mathcal T$ is a path decomposition of width $\pw$, then
 it is reduced to $(6\pw + 4)^{\pw + 1}\cdot\mathcal O(\pw\cdot n)
=2^{\pw\log\pw+\mathcal O(\pw)}\cdot n$.
Thus, we obtain the following theorem.
\begin{theorem}
\kDDP{} can be solved in $2^{\pw\log\pw + \mathcal{O}(\pw)} \cdot n$ time.
\end{theorem}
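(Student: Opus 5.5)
The plan is to turn Scheffler's algorithm, as adapted to directed graphs in the preceding discussion, into a dynamic program over a nice path decomposition, and to bound the number of states per bag by $(6\pw+4)^{\pw+1}$.

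We first compute a nice path decomposition of width $\pw$ with $\mathcal O(n)$ nodes, using Proposition~\ref{pr:NiceTreeDecomposition}. Since $T$ is a path, it has no join nodes, only leaf, introduce vertex, introduce edge and forget nodes. Before running the dynamic program, we compute $I_x$ for every node and check the cut condition of Theorem~\ref{th:CutCondition}; if it fails, we answer \no. Its proof is a Menger-type argument: every path for a demand in $I_x$ must pass through $B_x$, and that argument does not use undirectedness, so it transfers verbatim to \kDDP.

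For each node $x$ we then store the set of feasible mappings
\[
\varphi:B_x\rightarrow\{\mathsf 0,\mathsf 1\}\cup[k]_s\cup[k]_t\cup\{B_x\times\{\mathsf c_{\textup{in}},\mathsf c_{\textup{out}},\mathsf d_{\textup{in}},\mathsf d_{\textup{out}}\}\}.
\]
Their correctness as a complete description of partial solutions is Lemma~\ref{lm:directedGxL}. Its proof I would give by the same reasoning as Lemma~\ref{lm:UndirectedGxL}: a component of $G_x\cap L$ ends either at a bag vertex, because the path continues through an edge not yet introduced and such an edge must have both ends in a bag at or above $x$, or at a demand endpoint. Direction only tells us which end is $s_i$ and which is $t_i$.

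The key counting step restricts which labels in $[k]_s\cup[k]_t$ can occur. Only indices $i$ with $(s_i,t_i)\in I_x$ can appear; the exception is the two-halves situation $\mathsf d$, where both $s_i$ and $t_i$ are forgotten and which is encoded by a pair label rather than by $i$. By Theorem~\ref{th:CutCondition} there are at most $\pw+1$ such indices, so each vertex has at most $2+2(\pw+1)+4\pw$ choices. Hence at most $(6\pw+4)^{\pw+1}=2^{\pw\log\pw+\mathcal O(\pw)}$ states per node, once we fix a canonical indexing of $I_x$ (for example by rank in $I_x$) so that labels range over $[\pw+1]$ rather than $[k]$.

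Next we give the transitions for each node type, in each case checking degree constraints ($\delta_{01}/\delta_{10}/\delta_{11}$ in the directed sense) and forbidding cycles.
\begin{itemize}
\item Introducing a vertex sets its label to $\mathsf 0$.
\item Introducing an edge $uv$ either skips it or appends it. Appending requires that the out-end of $u$'s component and the in-end of $v$'s component are free, and it merges the labels. For instance, $(u',\mathsf c_{\textup{in}})$ combined with $v$ labelled $i\in[k]_t$ relabels the partner $u'$ with $i\in[k]_t$. If we close $s_i$ to $t_i$, both endpoints become $\mathsf 1$. We reject the move if $u,v$ are the two ends of the same $\mathsf c$ component, since that would close a cycle.
\item Forgetting $v$ is allowed only if $\varphi(v)\in\{\mathsf 0,\mathsf 1\}$. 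The exception is a label $i$ whose partner end is a demand vertex still ahead; this turns into a $\mathsf d$-pair with the other half, and such updates are local.
\end{itemize}
Each transition touches $\mathcal O(1)$ labels plus a relabelling of cost $\mathcal O(\pw)$. The total time is therefore $(6\pw+4)^{\pw+1}\cdot\mathcal O(\pw\cdot n)=2^{\pw\log\pw+\mathcal O(\pw)}\cdot n$. At the root we accept iff some feasible state exists that has every demand closed.

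I expect the main obstacle to be the correctness of the state space under forgetting, with $k$ possibly much larger than $\pw$. Demand vertices are not kept in bags here, unlike in Section~\ref{sc:Algorithm_tw}, so a state must certify, without remembering $s_i$ explicitly, that the dangling components are consistent. The danger is that a forgotten $t_i$-end and a forgotten $s_i$-end might be matched to the wrong demand. My approach would be to make the labels $[k]_s$, $[k]_t$ and the $\mathsf d$-pairs carry exactly the demand identity, and then prove by induction over the node types that the set of stored $\varphi$ equals the set of projections of partial solutions of Lemma~\ref{lm:directedGxL}. Theorem~\ref{th:CutCondition} is used only to bound the count.
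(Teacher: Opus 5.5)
Your proposal is essentially the paper's argument. It uses the directed refinement of Scheffler's state $\varphi$, the cut condition of Theorem~\ref{th:CutCondition} (whose separator argument indeed ignores edge directions) to bound the demand labels so that each bag has at most $(6\pw+4)^{\pw+1}$ states, and the absence of join nodes in a path decomposition, giving $(6\pw+4)^{\pw+1}\cdot\mathcal O(\pw\cdot n)=2^{\pw\log\pw+\mathcal O(\pw)}\cdot n$ time.

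One small correction to your counting step. A label $i$ can also occur when both $s_i,t_i\in V_x$ but one of them still sits unused in $B_x$, in which case $i\notin I_x$. So the admissible labels are $I_x$ together with the demands having an endpoint in $B_x$. This set has size at most $2|B_x|$ (and at most $|B_x|$ in a yes-instance, by the same disjointness argument), so the bound is unaffected.
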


%% file: section5.tex
\section{Conclusion}
\label{sc:Conclusion}
In this paper, we proposed a
 $2^{\mathcal O((\tw+k)\log k)}\cdot n$ time algorithm
 parameterized by the treewidth $\tw$ for solving the weighted version of \kDDP.
This running time is faster
 for a small $k=\tw^{o(1)}$
 than the previously known lower bound with $k=\Omega(\tw^4)$ under the ETH,
 and is
 a single exponential time of $\tw$ for a fixed $k$.
Since,
 as mentioned in Introduction,
 there is no
 subexponential time algorithm with respect to $\tw$ for any $k\geq 2$ under
 the ETH,
 the proposed algorithm achieves a tight running time for $k=\mathcal O(1)$.
Besides,
 the proposed algorithm
 can also solve \kUDP{} in the same running time with slight modifications.

As for the lower bound, we proved that, under the SETH,
 there is no
 $(2-\epsilon)^{\pw\log\pw}\cdot n^{\mathcal O(1)}$ time algorithm
 for solving \kDDP{} with a general $k$,
 and that
 a $2^{\pw\log\pw+\mathcal O(\pw)}\cdot n$ time algorithm can be obtained
 from a previously known algorithm with slight modifications, implying
 the proved lower bound is asymptotically tight.

An open problem is to close the remaining gap
 between the upper bound with $k=\omega(1)$ and the lower bound with
 $k=o(\tw^4)$ for \kDDP.